\documentclass[11pt,finalcls,onecolumn]{IEEEtran}
\usepackage{graphicx}
\usepackage{epstopdf}
\usepackage{epsfig}
\usepackage{amsmath}
\usepackage{amssymb}
\usepackage{mathrsfs}
\usepackage{mathtools}
\usepackage{float}
\usepackage{url}
\usepackage{enumerate}

\interdisplaylinepenalty=2500

\usepackage[left=20mm,right=20mm,top=24mm,bottom=24mm]{geometry}

\newtheorem{corollary}{Corollary}
\newtheorem{definition}{Definition}

\newtheorem{theorem}{Theorem}
\newtheorem{proposition}{Proposition}
\newtheorem{lemma}{Lemma}
\newtheorem{remark}{Remark}
\newtheorem{example}{Example}

\DeclareMathOperator*{\esssup}{\mathrm{ess } \sup}

\newcommand{\expectation}{\mathbb{E}}
\newcommand{\prob}{\mathbb{P}}
\newcommand{\naturals}{\{1, 2, 3, \ldots \}}
\newcommand{\Reals}{\mathbb{R}}
\newcommand{\set}{\mathcal}
\newcommand{\supp}{\mathop{\mathrm{supp}}}
\newcommand{\overbar}[1]{\mkern 1.5mu\overline{\mkern-1.5mu#1\mkern-1.5mu}\mkern 1.5mu}

\def \PZ  { P_{{\mathtt{0}}} }
\def \PU  { P_{{\mathtt{1}}} }
\def \HZ  {\mathsf{H}_{{\mathtt{0}}} }
\def \HU  {\mathsf{H}_{{\mathtt{1}}} }
\def \RZ  { R_{{\mathtt{0}}} }

\begin{document}

\title{Arimoto-R\'{e}nyi Conditional Entropy\\ and Bayesian $M$-ary
Hypothesis Testing}
\author{Igal Sason and Sergio Verd\'{u}
\thanks{

\vspace*{5.2cm}
I. Sason is with the Andrew and Erna Viterbi Faculty of Electrical Engineering,
Technion--Israel Institute of Technology, Haifa 32000, Israel (e-mail:
sason@ee.technion.ac.il).}
\thanks{
S. Verd\'{u} is with the Department of Electrical Engineering, Princeton
University, Princeton, New Jersey 08544, USA (e-mail: verdu@princeton.edu).}
\thanks{
This manuscript has been submitted to the {\em IEEE Transactions on Information Theory}
in September~27, 2016, revised in May~24, 2017, and accepted for publication in September~4, 2017.
It has been presented in part at the {\em 2017 IEEE International Symposium
on Information Theory}, Aachen, Germany, June 25--30, 2017.}
\thanks{
This work has been supported by the Israeli Science Foundation (ISF) under
Grant 12/12, by ARO-MURI contract number  W911NF-15-1-0479
and in part by the Center for Science of Information, an NSF Science and
Technology Center under Grant CCF-0939370.}
}

\maketitle
\begin{abstract}
\normalsize
This paper gives upper and lower bounds on the minimum error probability of
Bayesian $M$-ary hypothesis testing in terms of the Arimoto-R\'enyi conditional
entropy of an arbitrary order $\alpha$. The improved tightness of these bounds
over their specialized versions with the Shannon conditional entropy ($\alpha=1$)
is demonstrated. In particular, in the case where $M$ is finite, we show how to
generalize Fano's inequality under both the conventional and list-decision
settings. As a counterpart to the generalized Fano's inequality, allowing $M$
to be infinite, a lower bound on the Arimoto-R\'enyi conditional entropy is derived
as a function of the minimum error probability. Explicit upper and lower
bounds on the minimum error probability are obtained as a function of the
Arimoto-R\'enyi conditional entropy for both positive and negative $\alpha$.
Furthermore, we give upper bounds on the minimum error probability
as functions of the R\'enyi divergence.
In the setup of discrete memoryless channels, we analyze the exponentially
vanishing decay of the Arimoto-R\'enyi conditional entropy of the transmitted
codeword given the channel output when averaged over a random-coding ensemble.
\end{abstract}

\begin{keywords}
Arimoto-R\'enyi conditional entropy,
Bayesian minimum probability of error,
Chernoff information,
Fano's inequality,
list decoding,
$M$-ary hypothesis testing,
random coding,
R\'enyi entropy,
R\'enyi divergence.
\end{keywords}
\eject

\section{Introduction}\label{section:1}
In Bayesian $M$-ary hypothesis testing, we have:
\begin{itemize}
\item
$M$ possible explanations, hypotheses or models for the
$\set{Y}$-valued data $\{ P_{Y|X=m}, m \in \set{X} \}$
where the set of model indices satisfies $|\set{X}| = M$; and
\item
a prior distribution $P_X$ on $\set{X}$.
\end{itemize}
The minimum probability of error of $X$ given $Y$, denoted by
$\varepsilon_{X|Y}$, is achieved by the \textit{maximum-a-posteriori}
(MAP) decision rule. Summarized in Sections~\ref{subsec: 1.1}
and~\ref{subsec: 1.3}, a number of bounds on $\varepsilon_{X|Y}$ involving
Shannon information measures or their generalized R\'{e}nyi measures
have been obtained in the literature. As those works attest, there is
considerable motivation for the study of the relationships between
error probability and information measures. The minimum error probability
of Bayesian $M$-ary hypothesis testing is rarely directly computable,
and the best lower and upper bounds are information theoretic. Furthermore,
their interplay is crucial in the proof of coding theorems.

\subsection{Existing bounds involving Shannon information measures}
\label{subsec: 1.1}
\begin{enumerate}
\item\label{i:1}
Fano's inequality \cite{Fano52} gives an upper bound on the conditional
entropy $H(X|Y)$ as a function of $\varepsilon_{X|Y}$ when $M$ is finite.
\item\label{i:2}
Shannon's inequality \cite{Shannon58} (see also \cite{verduITA2011})
gives an explicit lower bound on $\varepsilon_{X|Y}$ as a function of
$H(X|Y)$, also when $M$ is finite.
\item\label{i:3}
Tightening  another bound by Shannon \cite{CES57}, Poor and Verd\'u
\cite{poor1995lower} gave a lower bound on $\varepsilon_{X|Y}$
as a function of the distribution of the conditional information
(whose expected value is $H(X|Y)$). This bound was generalized
by Chen and Alajaji \cite{ChenA12}.
\item\label{i:4}
Baladov\'a \cite{Baladova66}, Chu and Chueh \cite[(12)]{ChuC66},
and Hellman and Raviv \cite[(41)]{HellmanR70} showed that
\begin{align}\label{onehalf}
\varepsilon_{X|Y} \leq \tfrac12\, H(X|Y) ~ \mbox{bits}
\end{align}
when $M$ is finite.
It is also easy to show that (see, e.g., \cite[(21)]{FederM94})
\begin{align}\label{mystery}
\varepsilon_{X|Y} \leq 1 - \exp\bigl(-H(X | Y) \bigr).
\end{align}
Tighter and generalized upper bounds on
$\varepsilon_{X|Y}$ were obtained by Kovalevsky \cite{Kovalevsky},
Tebbe and Dwyer \cite{TebbeD68}, and Ho and Verd\'u \cite[(109)]{HoS-IT10}.
\item\label{i:5}
Based on the fundamental tradeoff of a certain auxiliary binary hypothesis test, Polyanskiy
\textit{et al.} \cite{PPV} gave the \textit{meta-converse} implicit lower bound
on $\varepsilon_{X|Y}$, which for some choice of auxiliary quantities is shown
to be tight in \cite{gonzalo2016}.
\item\label{i:8}
Building up on \cite{HellmanR70}, Kanaya and Han \cite{kanaya1995asymptotics}
showed that in the case of independent identically distributed (i.i.d.) observations,
$\varepsilon_{X|Y^n}$ and $H(X|Y^n)$ vanish exponentially at the same speed, which is
governed by the Chernoff information between the closest hypothesis pair.
In turn,  Leang and Johnson \cite{LeangJ97} showed that the same exponential decay
holds for those cost functions that have zero cost for correct decisions.
\item\label{i:9}
Birg\'{e} \cite{birge2005} gave an implicit lower bound on the minimax error
probability ($\varepsilon_{X|Y}$ maximized over all possible priors) as a function
of the pairwise relative entropies among the various models.
\item\label{i:10-a}
Generalizing Fano's inequality, Han and Verd\'u \cite{TSV94} gave lower bounds
on the mutual information $I(X;Y)$ as a function of $\varepsilon_{X|Y}$.
\item\label{i:11}
Grigoryan {\em et al.} \cite{GHVK11} and Sason \cite{Sason12} obtained error bounds,
in terms of relative entropies, for hypothesis testing with a rejection option. Such
bounds recently proved useful in the context
of an almost-fixed size hypothesis decision algorithm, which bridges the gap in performance
between fixed-sample and sequential hypothesis testing \cite{LalitaJ16}.
\end{enumerate}

\subsection{R\'{e}nyi's information measures}
\label{subsec: 1.2}
In this paper, we give upper and lower bounds on $\varepsilon_{X|Y}$ not in terms of
$H(X|Y)$ but in terms of the \textit{Arimoto-R\'enyi} conditional entropy $H_\alpha (X|Y)$
of an arbitrary order $\alpha$. Loosening the axioms given by Shannon \cite{CES48} as
a further buttress for $H(X)$, led R\'enyi \cite{Renyientropy} to the introduction of the
\textit{R\'enyi entropy} of order $\alpha \in [0, \infty]$, $H_\alpha (X)$, as well as
the \textit{R\'enyi divergence} $D_{\alpha} (P \| Q )$. R\'enyi's entropy and divergence
coincide with Shannon's and Kullback-Leibler's measures, respectively, for $\alpha=1$.
Among other applications, $H_\alpha (X)$ serves to analyze the fundamental limits of
lossless data compression \cite{campbell65,courtadeverdu2014-1,Csiszar95}. Unlike
Shannon's entropy, $H_\alpha (X)$ with $\alpha \neq 1$ suffers from the disadvantage
that the inequality $H_\alpha (X_1, X_2) \leq H_\alpha (X_1) + H_\alpha (X_2)$
does not hold in general. Moreover, if we mimic the definition of $H(X|Y)$ and define
conditional R\'enyi entropy as $\sum_{y \in \mathcal{Y}} P_Y(y) \, H_{\alpha}(X | Y=y)$,
we find the unpleasant property that the conditional version may be larger than
$H_\alpha (X)$. To remedy this situation, Arimoto \cite{Arimoto75} introduced a
notion of conditional R\'enyi entropy, which in this paper we denote as $H_\alpha (X|Y)$,
and which is indeed upper bounded by $H_\alpha (X)$. The corresponding
$\alpha$-\textit{capacity} $\underset{X}{\max} \{ H_\alpha (X) - H_\alpha (X|Y) \}$ yields
a particularly convenient expression for the converse bound found by Arimoto
\cite{Arimoto73} (see also \cite{PolyanskiyV10}). The Arimoto-R\'enyi conditional entropy
has also found applications in guessing and secrecy problems with side information
(\cite{Arikan96}, \cite{BracherHL15}, \cite{HayashiT17}, \cite{HayashiT16} and
\cite{Sundaresan07}), sequential decoding \cite{Arikan96}, task encoding with side
information available to both the task-describer (encoder) and the task-performer
\cite{BunteL14a}, and the list-size capacity of discrete memoryless channels in
the presence of noiseless feedback~\cite{BunteL14b}. In \cite{SakaiIwata16},
the joint range of $H(X|Y)$ and $H_{\alpha}(X|Y)$ is obtained when the random variable
$X$ is restricted to take its values on a finite set with a given cardinality.
Although outside the scope of this paper, Csisz\'ar \cite{Csiszar95} and Sibson
\cite{Sibson69} proposed implicitly other definitions of conditional R\'enyi entropy,
which lead to the same value of the $\alpha$-capacity and have found various applications
(\cite{Csiszar95} and \cite{Verdu-ITA15}). The quantum generalizations of R\'{e}nyi
entropies and their conditional versions
(\cite{HayashiT17, HayashiT16, HayashiTo15, Iwamoto14, KonigWY, Muller-Lenert13, Renner05, TomamichelBH14})
have recently served to prove strong converse theorems in
quantum information theory (\cite{HayashiT16, Leditzky16} and \cite{WildeWY-14}).

\subsection{Existing bounds involving R\'{e}nyi's information measures and related measures}
\label{subsec: 1.3}
In continuation to the list of existing bounds involving the
conditional entropy and relative entropy in Section~\ref{subsec: 1.1},
bounds involving R\'{e}nyi's information measures and related measures include:
\begin{enumerate}
\setcounter{enumi}{9}
\item\label{i:6}
For Bayesian binary hypothesis testing, Hellman and Raviv \cite{HellmanR70} gave an upper
bound on $\varepsilon_{X|Y}$ as a function of the prior probabilities and the
R\'enyi divergence of order~$\alpha \in [0,1]$ between the two models. The special
case of $\alpha = \frac12$ yields the \textit{Bhattacharyya bound} \cite{Kailath67}.
\item\label{i:7}
In \cite{Devijver74} and \cite{Vajda68}, Devijver and Vajda derived  upper and
lower bounds on $\varepsilon_{X|Y}$ as a function of the quadratic Arimoto-R\'enyi
conditional entropy $H_2(X|Y)$.
\item\label{i:Cover-Hart}
In \cite{Cover_Hart67}, Cover and Hart showed that if $M$ is finite, then
\begin{align} \label{eq: CoverH67}
\varepsilon_{X|Y} \leq H_{\mathrm{o}}(X|Y) \leq \varepsilon_{X|Y} \left( 2 - \tfrac{M}{M-1} \; \varepsilon_{X|Y} \right)
\end{align}
where
\begin{align} \label{eq: quad. ent}
H_{\mathrm{o}}(X|Y) = \mathbb{E} \left[ \, \sum_{x \in \set{X}} P_{X|Y}(x|Y) \left(1-P_{X|Y}(x|Y) \right) \right]
\end{align}
is referred to as conditional quadratic entropy. The bounds in \eqref{eq: CoverH67}
coincide if $X$ is equiprobable on $\set{X}$, and $X$ and $Y$ are independent.
\item\label{i:10}
One of the bounds by Han and Verd\'u \cite{TSV94} in Item~\ref{i:10-a}) was
generalized by Polyanskiy and Verd\'u \cite{PolyanskiyV10} to give a
lower bound on the $\alpha$-mutual information (\cite{Sibson69,Verdu-ITA15}).
\item\label{i:12}
In \cite{Shayevitz_ISIT11}, Shayevitz gave a lower bound, in terms of
the R\'enyi divergence, on the maximal worst-case miss-detection exponent
for a binary composite hypothesis testing problem when the false-alarm probability
decays to zero with the number of i.i.d. observations.
\item\label{i:13}
Tomamichel and Hayashi studied optimal exponents of binary composite hypothesis
testing, expressed in terms of R\'enyi's information measures (\cite{HayashiTo15, TomamichelH16}).
A measure of dependence was studied in \cite{TomamichelH16} (see also
Lapidoth and Pfister \cite{LapidothP16}) along with its role in composite
hypothesis testing.
\item\label{i:14}
Fano's inequality was generalized by Toussiant in \cite[Theorem~2]{Toussiant77}
(see also \cite{ErdogmusP04} and \cite[Theorem~6]{Iwamoto14}) not with the
Arimoto-R\'enyi conditional entropy but with the average of the R\'{e}nyi
entropies of the conditional distributions (which, as we mentioned in
Section~\ref{subsec: 1.2}, may exceed the unconditional R\'{e}nyi entropy).
\item\label{i:9-b}
A generalization of the minimax lower bound by Birg\'{e} \cite{birge2005} in
Item~\ref{i:9}) to lower bounds involving $f$-divergences has been studied
by Guntuboyina \cite{Guntuboyina11}.
\end{enumerate}

\subsection{Main results and paper organization}
\label{subsec: 1.4}

It is natural to consider generalizing the relationships between
$\varepsilon_{X|Y}$ and $H(X|Y)$, itemized in Section~\ref{subsec: 1.1},
to the Arimoto-R\'enyi conditional entropy
$H_\alpha (X|Y)$. In this paper we find pleasing counterparts to
the bounds in Items \ref{i:1}), \ref{i:4}), \ref{i:8}), \ref{i:6}),
\ref{i:7}), and \ref{i:10}), resulting in generally
tighter bounds. In addition, we enlarge the scope of the problem to consider
not only $\varepsilon_{X|Y}$ but the probability that a list decision rule
(which is allowed to output a set of $L$ hypotheses) does not include the
true one. Previous work on extending Fano's inequality to the setup of list
decision rules includes \cite[Section~5]{Ahlswede_Gacs_Korner},
\cite[Lemma~1]{Kim_Sutivong_Cover}, \cite[Appendix~3.E]{FnT14} and
\cite[Chapter~5]{verdubook}.

Section~\ref{section: notation and definitions} introduces the basic
notation and definitions of R\'enyi information measures.
Section~\ref{section: tight bound and joint range of RE} finds bounds
for the guessing problem in which there are no observations; those bounds,
which prove to be instrumental in the sequel, are tight in the sense that
they are attained with equality for certain random variables.
Section~\ref{section: cond. RE vs. Pe} contains
the main results in the paper on the interplay between $\varepsilon_{X|Y}$
and $H_\alpha (X|Y)$, giving counterparts to a number of those existing
results mentioned in Sections~\ref{subsec: 1.1} and~\ref{subsec: 1.3}.
In particular:
\begin{enumerate}
\setcounter{enumi}{17}
\item \label{s:1}
An upper bound on $H_\alpha (X|Y)$ as a function of $\varepsilon_{X|Y}$ is
derived for positive $\alpha$ (Theorem~\ref{theorem: generalized Fano inequality});
it provides an implicit lower bound on $\varepsilon_{X|Y}$ as a function
of $H_\alpha (X|Y)$;
\item \label{s:2}
Explicit lower bounds on $\varepsilon_{X|Y}$ are given as a function of
$H_\alpha (X|Y)$ for both positive and negative $\alpha$
(Theorems~\ref{proposition: closed-form lower bound on epsilon},
\ref{theorem: LB via HolderI}, and~\ref{theorem: LB via RevHolderI});
\item \label{s:3}
The lower bounds in Items~\ref{s:1}) and~\ref{s:2}) are generalized to
the list-decoding setting
(Theorems~\ref{theorem: generalized Fano-Renyi - list decoding},
\ref{theorem: LB via HolderI - list decoding},
and~\ref{theorem: LB via RevHolderI - list decoding});
\item \label{s:4}
As a counterpart to the generalized Fano's inequality, we derive a lower
bound on $H_\alpha (X|Y)$ as a function of $\varepsilon_{X|Y}$ capitalizing
on the Schur concavity of R\'enyi entropy
(Theorem~\ref{theorem: LB on the conditional Renyi entropy});
\item \label{s:5}
Explicit upper bounds on $\varepsilon_{X|Y}$ as a function of
$H_\alpha (X|Y)$ are obtained (Theorem~\ref{theorem: UB min err});
\end{enumerate}
\par
Section~\ref{section: BHT} gives explicit upper bounds on $\varepsilon_{X|Y}$
as a function of the error probability of associated binary tests
(Theorem~\ref{thm:UBMbasedon2HT}) and of the R\'{e}nyi divergence
(Theorems~\ref{theorem: generalized Hellman-Raviv bound}
and~\ref{thm:UBSV20160616}) and Chernoff information
(Theorem~\ref{thm:UBSV20160616}).
\par
Section~\ref{section: random coding} analyzes the exponentially vanishing decay
of the Arimoto-R\'enyi conditional entropy of the transmitted codeword given the
channel output when averaged over a code ensemble (Theorems~\ref{theorem: random coding}
and~\ref{theorem: Ralpha - MBIOS}). Concluding remarks are given in
Section~\ref{sec: conclusions}.

\section{R{\'e}nyi Information Measures: Definitions and Basic Properties}
\label{section: notation and definitions}

\begin{definition} \cite{Renyientropy} \label{definition: Renyi entropy}
Let $P_X$  be a probability distribution on a discrete set $\set{X}$.
The \textit{R\'{e}nyi entropy of order} $\alpha \in (0,1) \cup (1, \infty)$
of $X$, denoted by $H_{\alpha}(X)$ or $H_{\alpha}(P_X)$, is defined as
\begin{align} \label{eq: Renyi entropy}
H_{\alpha}(X) & = \frac1{1-\alpha} \, \log
\sum_{x \in \set{X}} P_X^{\alpha}(x) \\
\label{eq2: Renyi entropy}
& = \frac{\alpha}{1-\alpha} \, \log \| P_X \|_{\alpha}
\end{align}
where $\|P_X\|_{\alpha} = \left( \sum_{x \in \set{X}}
P_X^{\alpha}(x) \right)^{\frac1\alpha}$.
By its continuous extension,
\begin{align}
\label{eq: RE of zero order}
& H_0(X) = \log \, \bigl| \supp P_X \bigr|, \\[0.1cm]
\label{eq: Shannon entropy}
& H_1(X) = H(X), \\[0.1cm]
\label{eq: RE of infinite order}
& H_{\infty}(X) = \log\frac1{p_{\max}}
\end{align}
where $\supp P_X = \{x \in \set{X} \colon P_X(x) > 0 \}$
is the support of $P_X$,
and $p_{\max}$ is the largest of the masses of $X$.
\end{definition}

When guessing the value of $X$ in the absence of an observation
of any related quantity, the maximum probability of success is
equal to $p_{\max}$; it is achieved by guessing that $X = x_0$
where $x_0$ is a {\em mode} of $P_X$, i.e., $P_X(x_0)=p_{\max}$.
Let
\begin{align} \label{eq: blind decision}
& \varepsilon_X = 1-p_{\max}
\end{align}
denote the minimum probability of error in guessing the value
of $X$. From \eqref{eq: RE of infinite order} and \eqref{eq: blind decision},
the minimal error probability in guessing $X$ without side information
is given by
\begin{align} \label{eq2: min error prob.}
\varepsilon_X = 1 - \exp\bigl(-H_{\infty}(X)\bigr).
\end{align}

\begin{definition} \label{definition: binary RE}
For $\alpha \in (0,1) \cup (1, \infty)$, the {\em binary
R\'{e}nyi entropy of order $\alpha$} is the function
$h_{\alpha} \colon [0,1] \to [0, \log 2]$ that is
defined, for $p \in [0,1]$, as
\begin{equation}
\label{eq1: binary RE}
h_{\alpha}(p) = H_\alpha(X_p) = \frac1{1-\alpha} \,
\log\bigl(p^\alpha + (1-p)^{\alpha} \bigr),
\end{equation}
where $X_p$ takes two possible values with probabilities $p$ and $1-p$.
The continuous extension of the binary
R\'{e}nyi entropy at $\alpha=1$ yields the binary entropy function:
\begin{align}
h(p) &=  p \log \frac1p + (1-p) \log \frac1{1-p}
\label{eq3: binary RE}
\end{align}
for $p \in (0,1)$, and $h(0)=h(1)=0$.
\end{definition}

In order to put forth generalizations of Fano's
inequality and bounds on the error probability,
we consider Arimoto's proposal for the conditional
R\'{e}nyi entropy (named, for short, the
Arimoto-R\'{e}nyi conditional entropy).
\begin{definition} \cite{Arimoto75}
\label{definition: AR conditional entropy}
Let $P_{XY}$ be defined on $\set{X} \times \set{Y}$,
where $X$ is a discrete random variable.
The \textit{Arimoto-R\'{e}nyi conditional entropy of order
$\alpha \in [ 0, \infty]$} of $X$ given $Y$ is defined as follows:
\begin{itemize}
\item
If $\alpha \in (0,1) \cup (1, \infty) $, then
\begin{align}
\label{eq1: Arimoto - cond. RE}
H_{\alpha}(X | Y) &= \frac{\alpha}{1-\alpha} \,
\log \, \mathbb{E} \left[
\left( \, \sum_{x \in \set{X}} P_{X|Y}^{\alpha}(x|Y)
\right)^{\frac1{\alpha}} \right] \\
&= \frac{\alpha}{1-\alpha} \, \log \mathbb{E}
\left[ \| P_{X|Y} ( \cdot | Y ) \|_\alpha \right] \\
\label{eq2: Arimoto - cond. RE}
&= \frac{\alpha}{1-\alpha} \, \log
\, \sum_{y \in \set{Y}} P_Y(y) \, \exp \left(
\frac{1-\alpha}{\alpha} \;
H_{\alpha}(X | Y=y) \right),
\end{align}
where \eqref{eq2: Arimoto - cond. RE}
applies if $Y$ is a discrete random variable.
\item By its continuous extension, the Arimoto-R\'enyi conditional
entropy of orders $0$, $1$, and $\infty$ are defined as
\begin{align}
\label{eq: cond. RE at 0:gral}
H_0(X|Y) &= \esssup H_0 \left(P_{X|Y} (\cdot | Y) \right) \\[0.1cm]
\label{eq: cond. RE at 0}
&= \log \, \max_{y \in \set{Y}} \,
\bigl| \supp P_{X|Y}(\cdot|y) \bigr| \\[0.1cm]
\label{eq2: cond. RE at 0}
&= \max_{y \in \set{Y}} \, H_0(X \, | \, Y=y), \\[0.1cm]
\label{eq: cond. RE at 1}
H_1(X|Y) &= H(X|Y), \\[0.1cm]
\label{eq: cond. RE at infinity}
H_{\infty}(X|Y) &= \log \, \frac1{\expectation
\Bigl[ \underset{x \in \set{X}}{\max} \, P_{X|Y}(x|Y) \Bigr]}
\end{align}
where $\esssup$ in \eqref{eq: cond. RE at 0:gral} denotes the
essential supremum, and \eqref{eq: cond. RE at 0} and
\eqref{eq2: cond. RE at 0} apply if $Y$ is a discrete random variable.
\end{itemize}
\end{definition}

Although not nearly as important, sometimes in the context
of finitely valued random variables, it is useful to consider
the unconditional and conditional R\'enyi entropies of negative
orders $\alpha \in (-\infty, 0)$ in \eqref{eq: Renyi entropy}
and \eqref{eq1: Arimoto - cond. RE} respectively.
By continuous extension
\begin{align}
\label{eq: RE of order -infinity}
& H_{-\infty}(X) = \log \frac1{p_{\min}}, \\
\label{eq: cond. RE of order -infinity}
& H_{-\infty}(X|Y) = \log \hspace*{0.05cm} \mathbb{E}^{-1} \hspace*{-0.1cm}
\left[ \underset{x \in \set{X}: \, P_{X|Y}(x|Y) > 0}{\min} \, P_{X|Y}(x|Y) \right]
\end{align}
where $p_{\min}$ in \eqref{eq: RE of order -infinity} is the smallest
of the nonzero masses of $X$, and $\mathbb{E}^{-1}[Z]$ in \eqref{eq: cond. RE of order -infinity}
denotes the reciprocal of the expected value of $Z$.
In particular, note that if $|\set{X}| = 2$, then
\begin{align}\label{inf-inf}
H_{-\infty}(X|Y) = \log\frac{1}{1 - \exp \left(-  H_{\infty}(X|Y) \right)}.
\end{align}
Basic properties of $H_\alpha (X|Y)$ appear in \cite{FehrB14}, and
in \cite{Iwamoto14, TomamichelBH14} in the context of quantum information theory. Next,
we provide two additional properties, which are used in the sequel.
\begin{proposition} \label{prop1: mon}
$H_\alpha (X|Y)$ is monotonically decreasing in $\alpha$ throughout the real line.
\end{proposition}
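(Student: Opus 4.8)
The plan is to exploit the representation of $H_\alpha(X|Y)$ as an exponentially weighted (generalized) mean of the pointwise unconditional R\'enyi entropies $H_\alpha(X|Y=y)$, and to reduce the claimed monotonicity to two elementary monotonicities that \emph{reinforce} one another rather than compete. First I would record the key rewriting. Setting $r=r(\alpha):=\frac{1-\alpha}{\alpha}$ and viewing $Z_\alpha:=H_\alpha(X|Y=\cdot)$ as a random variable (a function of $Y$), \eqref{eq2: Renyi entropy} gives $\|P_{X|Y}(\cdot|y)\|_\alpha=\exp\bigl(r\,Z_\alpha(y)\bigr)$, so that \eqref{eq1: Arimoto - cond. RE} becomes
\[ H_\alpha(X|Y) = \tfrac1r \log \mathbb{E}\bigl[ e^{r Z_\alpha} \bigr] =: F(r, Z_\alpha), \qquad \alpha\in(0,1)\cup(1,\infty), \]
valid for arbitrary (not necessarily discrete) $Y$; for discrete $Y$ this is exactly \eqref{eq2: Arimoto - cond. RE}.

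Second, I would isolate two monotonicity facts about the ingredients: (a)~$r(\alpha)=\frac1\alpha-1$ is strictly decreasing in $\alpha$ on each of $(-\infty,0)$ and $(0,\infty)$; and (b)~for each fixed $y$, the unconditional R\'enyi entropy $\alpha\mapsto H_\alpha(X|Y=y)=Z_\alpha(y)$ is nonincreasing, which is the classical order-monotonicity of $H_\alpha$ (one sees $\tfrac{d}{d\alpha}H_\alpha=-\tfrac{1}{(1-\alpha)^2}D(q_\alpha\|P)\le 0$, where $q_\alpha$ is the escort distribution, valid throughout the real line). I would then establish two properties of the functional $F(r,z):=\tfrac1r\log\mathbb{E}[e^{r z(Y)}]$: (i)~for fixed $z$, $F(\cdot,z)$ is nondecreasing in $r$; and (ii)~for fixed $r\neq 0$, $F(r,\cdot)$ is nondecreasing with respect to the almost-sure pointwise order on $z$. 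Property~(i) follows because $\phi(r):=\log\mathbb{E}[e^{r z(Y)}]$ is convex with $\phi(0)=0$, so the chord slope $F(r,z)=\tfrac{\phi(r)-\phi(0)}{r-0}$ is nondecreasing in $r$, including across $r=0$, where $F$ extends continuously to $\mathbb{E}[z(Y)]=\phi'(0)$. Property~(ii) is immediate once the sign of $r$ is tracked: if $z\le z'$ a.s. and $r>0$ then $\mathbb{E}[e^{rz}]\le\mathbb{E}[e^{rz'}]$ while $\tfrac1r\log(\cdot)$ is increasing; if $r<0$ the inner inequality reverses but $\tfrac1r\log(\cdot)$ is decreasing, so in both cases $F(r,z)\le F(r,z')$.

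Third, I would combine these. Fix $\alpha_1<\alpha_2$ in $(0,\infty)$ (allowing $\alpha_i=1$, which corresponds to $r=0$ and $F(0,\cdot)=\mathbb{E}[\,\cdot\,]$). By (a) and (i), $F\bigl(r(\alpha_1),Z_{\alpha_1}\bigr)\ge F\bigl(r(\alpha_2),Z_{\alpha_1}\bigr)$; by (b) and (ii), $F\bigl(r(\alpha_2),Z_{\alpha_1}\bigr)\ge F\bigl(r(\alpha_2),Z_{\alpha_2}\bigr)$. Chaining gives $H_{\alpha_1}(X|Y)\ge H_{\alpha_2}(X|Y)$, and the identical chain applies verbatim on $(-\infty,0)$, where $r(\alpha)$ is still decreasing and $Z_\alpha$ still nonincreasing. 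Since (i) is monotone \emph{through} $r=0$, the point $\alpha=1$ is subsumed automatically, and only the orders $\alpha\in\{0,\pm\infty\}$ require the continuous extensions \eqref{eq: cond. RE at 0:gral}--\eqref{eq: cond. RE at infinity} (and their negative-order analogues) to pass from the open intervals to the closed extended real line by continuity. I expect the main obstacle to be purely bookkeeping: keeping the direction of the inequality in (ii) correct through the sign change of $r$ at $\alpha=1$, and verifying that the two monotonicities point the same way; everything else reduces to convexity of the cumulant generating function and the classical order-monotonicity of the R\'enyi entropy. A direct differentiation of $\alpha\mapsto F(r(\alpha),Z_\alpha)$ is an alternative, but it is considerably messier because it mixes the $\alpha$-dependence of $r$ with that of $Z_\alpha$, which is precisely what the decomposition above disentangles.
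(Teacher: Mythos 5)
Your argument is correct on each of the intervals $(0,\infty)$ and $(-\infty,0)$ --- including the passage through $\alpha=1$ --- and it is organized genuinely differently from the paper's proof in Appendix~\ref{app: mon}. The paper compares $H_\alpha(X|Y)$ with $H_\beta(X|Y)$ directly, in three sign cases ($1<\beta<\alpha$, $0<\beta<\alpha<1$, $\beta<\alpha<0$), through a single chain of inequalities in which Jensen's inequality is applied twice with tailored exponents $\theta=\frac{\beta-1}{\alpha-1}$ and $\rho=\frac{\alpha\theta}{\beta}$, and then appeals to transitivity and continuity at $\alpha\in\{0,1\}$. Your decomposition $H_\alpha(X|Y)=F\bigl(r(\alpha),Z_\alpha\bigr)$ separates those same two mechanisms into independent lemmas: the paper's outer Jensen step (concavity/convexity of $t^\rho$) is your chord-slope property (i) of the cumulant generating function, and its inner Jensen step (concavity/convexity of $t^\theta$) is in essence the classical pointwise monotonicity (b) of the unconditional R\'enyi entropy, which you import (with a correct escort-distribution derivative sketch). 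Chaining through the mixed quantity $F\bigl(r(\alpha_2),Z_{\alpha_1}\bigr)$ replaces the paper's case analysis and confines all sign bookkeeping to your property (ii). Your route buys two things: it handles $\alpha=1$ inside the argument (chord slopes of a convex function are monotone through $r=0$), where the paper needs a separate continuity step, and it makes transparent why the two monotonicities reinforce rather than compete. The paper's version buys self-containedness: it never invokes the unconditional monotonicity as an external fact.

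The one place where your proof is no stronger than the paper's --- and where your framework actually makes the difficulty visible --- is the comparison of a negative order with a positive one. Your chaining cannot cross $\alpha=0$: $r(\alpha)\to-\infty$ as $\alpha\uparrow 0$ but $r(\alpha)\to+\infty$ as $\alpha\downarrow 0$, so $r(\cdot)$ is not decreasing across $0$ and property (i) points the wrong way there. Everything then rests on the behavior at $\alpha=0$, and ``continuity'' is delicate: the limit of $F\bigl(r(\alpha),Z_\alpha\bigr)$ as $\alpha\uparrow 0$ behaves like an essential infimum of $Z_0$ (since $r\to-\infty$), while the limit as $\alpha\downarrow 0$ is the essential supremum in \eqref{eq: cond. RE at 0:gral}; these differ whenever $\bigl|\supp P_{X|Y}(\cdot\,|\,y)\bigr|$ is not almost surely constant, if negative-order entropies are computed on the support. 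Concretely, with $Y$ equiprobable binary, $P_{X|Y}(\cdot|1)=(1,0)$ and $P_{X|Y}(\cdot|2)=\bigl(\tfrac12,\tfrac12\bigr)$, the support convention gives $H_{-1}(X|Y)=\tfrac12\log\tfrac85<\log\tfrac32=H_{1/2}(X|Y)$. Monotonicity across $\alpha=0$ is restored under the positivity assumption \eqref{tuesday} (then $Z_0$ is a.s. constant and the one-sided limits agree), or under the convention $0^\alpha=+\infty$ for $\alpha<0$. The paper's own proof makes exactly the same appeal to ``continuous extension at $\alpha=0$,'' and the paper only ever uses negative orders under \eqref{tuesday}, so this is a shared caveat rather than a defect of your proposal relative to the paper; just be aware that the bridging at $\alpha=0$ is a matter of convention and assumption, not a consequence of your two lemmas.
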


The proof of Proposition~\ref{prop1: mon} is given in Appendix~\ref{app: mon}.
In the special case of $\alpha \in[1, \infty]$ and discrete $Y$, a proof can be
found in \cite[Proposition~4.6]{Berens2013}.

We will also have opportunity to use the following monotonicity result, which
holds as a direct consequence of the monotonicity of the norm of positive orders.
\begin{proposition} \label{prop2: mon}
$\frac{\alpha-1}{\alpha} \, H_\alpha (X|Y)$ is monotonically increasing in $\alpha$
on $(0, \infty)$ and $(-\infty, 0)$.
\end{proposition}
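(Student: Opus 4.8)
The plan is to reduce the claim to a pointwise monotonicity property of the $\ell_\alpha$ norm. First I would record the algebraic simplification that, for every $\alpha \in (0,1)\cup(1,\infty)$, the definition \eqref{eq1: Arimoto - cond. RE} gives
\[
\frac{\alpha-1}{\alpha}\,H_\alpha(X|Y) = -\log \mathbb{E}\bigl[\,\|P_{X|Y}(\cdot|Y)\|_\alpha\,\bigr],
\]
since the prefactor $\tfrac{\alpha-1}{\alpha}$ exactly cancels the factor $\tfrac{\alpha}{1-\alpha}$ in \eqref{eq1: Arimoto - cond. RE} up to a sign. Because $-\log$ is strictly decreasing and the expected norm is strictly positive, the assertion that the left-hand side is increasing in $\alpha$ is equivalent to the statement that $\mathbb{E}\bigl[\|P_{X|Y}(\cdot|Y)\|_\alpha\bigr]$ is decreasing in $\alpha$ on each of the intervals $(0,\infty)$ and $(-\infty,0)$. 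Since expectation is a positive linear operation, it then suffices to establish the corresponding monotonicity \emph{pointwise}, i.e. that for each fixed $y$ the map $\alpha \mapsto \|P_{X|Y}(\cdot|y)\|_\alpha$ is decreasing on each interval.

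Next I would invoke the classical fact that for any fixed vector $v$ with positive entries the norm $\|v\|_\alpha = \bigl(\sum_i v_i^\alpha\bigr)^{1/\alpha}$ is nonincreasing in $\alpha$ on $(0,\infty)$; this is the standard nesting of $\ell_p$ norms and follows by applying Jensen's inequality with exponent $\alpha'/\alpha > 1$ for $0<\alpha<\alpha'$. This settles the interval $(0,\infty)$ directly. For the negative interval I would reduce to the positive one by a reciprocal substitution: writing $w_i = 1/v_i$ one verifies the identity $\|v\|_\alpha = \|w\|_{-\alpha}^{-1}$, so that as $\alpha$ increases through $(-\infty,0)$ the exponent $s=-\alpha$ decreases through $(0,\infty)$, the positive-order norm $\|w\|_s$ increases, and hence its reciprocal $\|v\|_\alpha$ decreases. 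This is exactly the ``monotonicity of the norm of positive orders'' alluded to in the statement.

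Finally I would assemble the pieces: pointwise monotonicity in $y$ yields monotonicity of the expectation, and composing with the decreasing map $-\log$ flips the direction to give the claimed monotone increase of $\tfrac{\alpha-1}{\alpha}H_\alpha(X|Y)$. The limiting orders $\alpha\to 0^\pm$ cause no trouble because the two intervals are treated separately, while the values $\alpha=1$ and $\alpha=\pm\infty$ are covered by the continuous extensions in Definition~\ref{definition: AR conditional entropy} together with continuity of the norm in $\alpha$. The only genuinely delicate point is the negative-order case, where for $\alpha<0$ the sum defining the norm runs over the (possibly $y$-dependent) support of $P_{X|Y}(\cdot|y)$; since for each fixed $y$ that support is fixed, the reciprocal argument applies verbatim on it, so no real obstacle arises.
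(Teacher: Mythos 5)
Your proof is correct and follows essentially the same route as the paper, which asserts Proposition~\ref{prop2: mon} without a detailed proof as ``a direct consequence of the monotonicity of the norm of positive orders.'' Your writeup supplies exactly the details behind that remark: the identity $\frac{\alpha-1}{\alpha}\,H_\alpha(X|Y) = -\log\mathbb{E}\bigl[\|P_{X|Y}(\cdot|Y)\|_\alpha\bigr]$, the pointwise nesting of $\ell_\alpha$ norms on $(0,\infty)$, and the reciprocal substitution reducing negative orders to positive ones (with the two intervals kept separate, as they must be, since the norm is not monotonic across $\alpha=0$).
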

\begin{remark}
Unless $X$ is a deterministic function of $Y$, it follows from \eqref{eq: cond. RE at 0:gral}
that $H_0(X|Y) > 0$; consequently,
\begin{align}
\lim_{\alpha \uparrow 0} \frac{\alpha-1}{\alpha} \, H_\alpha (X|Y) &= +\infty,\\
\lim_{\alpha \downarrow 0} \frac{\alpha-1}{\alpha} \, H_\alpha (X|Y) &= -\infty.
\end{align}
\end{remark}

It follows from Proposition~\ref{prop1: mon} that if $\beta >0$, then
\begin{align}\label{P0N}
H_\beta (X|Y) \leq H_0 (X|Y) \leq H_{-\beta} (X|Y),
\end{align}
and, from Proposition~\ref{prop2: mon},
\begin{align}\label{mon}
\frac{\alpha-1}{\alpha} \, H_\alpha (X|Y) \leq \frac{\beta-1}{\beta} \, H_\beta (X|Y)
\end{align}
if $ 0 < \alpha <\beta$ or $\alpha < \beta <0$.

The third R\'enyi information measure used in this paper is the R\'enyi
divergence. Properties of the R\'enyi divergence are studied in
\cite{ErvenH14}, \cite[Section~8]{SV16} and \cite{Shayevitz_ISIT11}.
\begin{definition} \cite{Renyientropy}
Let $P$ and $Q$ be probability measures on $\set{X}$ dominated by $R$,
and let their densities be respectively denoted by $p = \frac{\mathrm{d}P}{\mathrm{d}R}$
and $q = \frac{\mathrm{d}Q}{\mathrm{d}R}$.
The {\em R\'{e}nyi divergence of order $\alpha \in [ 0, \infty]$} is defined as follows:
\begin{itemize}
\item
If $\alpha \in (0,1) \cup (1, \infty) $, then
\begin{align} \label{eq:RD0}
D_{\alpha}(P\|Q) &=
\frac1{\alpha-1} \; \log \mathbb{E} \left[
p^\alpha(Z) \, q^{1-\alpha}(Z) \right] \\[0.1cm]
&= \frac1{\alpha-1} \; \log \,
\sum_{x \in \set{X}} P^\alpha (x) \,
Q^{1-\alpha} (x) \label{eq:RD1}
\end{align}
where $Z \sim R$ in \eqref{eq:RD0}, and \eqref{eq:RD1} holds if
$\set{X}$ is a discrete set.
\item By the continuous extension of $D_{\alpha}(P \| Q)$,
\begin{align}
\label{eq: d0}
D_0(P \| Q) &= \underset{\set{A}: P(\set{A})=1}{\max} \log \frac1{Q(\set{A})}, \\[0.1cm]
\label{def:d1}
D_1(P\|Q) &= D(P\|Q), \\[0.1cm]
\label{def:dinf}
D_{\infty}(P\|Q) &= \log \, \esssup \frac{p(Z)}{q(Z)}
\end{align}
with $Z\sim R$.
\end{itemize}
\end{definition}

\begin{definition} \label{definition: binary RD}
For all $\alpha \in (0,1) \cup (1, \infty)$, the {\em binary R\'{e}nyi divergence of
order $\alpha$}, denoted by $d_{\alpha}(p\|q)$ for $(p,q) \in [0,1]^2$,
is defined as $D_{\alpha}([p ~1-p] \| [q ~1-q])$. It is the continuous
extension to $[0,1]^2$ of
\begin{equation}
\label{eq1: binary RD}
d_\alpha (p \| q ) =\frac1{\alpha-1} \; \log \Bigl(p^{\alpha} q^{1-\alpha}
+ (1-p)^{\alpha} (1-q)^{1-\alpha} \Bigr).
\end{equation}
\end{definition}

Used several times in this paper, it is easy to verify the following
identity satisfied by the binary R\'{e}nyi divergence. If $t\in [0,1]$
and $s \in [0, \theta]$ for $\theta >0$, then
\begin{align} \label{useful equality}
\log \theta - d_{\alpha}\bigl(t \| \tfrac{s}{\theta}\bigr)
= \frac1{1-\alpha} \; \log \left( t^\alpha \, s^{1-\alpha}
+ (1-t)^\alpha \, (\theta-s)^{1-\alpha} \right).
\end{align}

By analytic continuation in $\alpha$, for $(p, q) \in (0,1)^2$,
\begin{align}
\label{d_0}
d_0(p\|q) &= 0, \\[0.1cm]
d_1(p\|q) &= d(p\|q) =
p \log \frac{p}{q} + (1-p) \log \frac{1-p}{1-q}, \label{d_1} \\[0.1cm]
\label{d_infty}
d_{\infty}(p \| q) &= \log \max \left\{\frac{p}{q}, \,
\frac{1-p}{1-q} \right\}
\end{align}
where $d(\cdot \| \cdot)$ denotes the binary relative entropy.
Note that if
$t\in [0,1]$ and $M>1$, then
\begin{align}\label{good4fano}
\log M - d ( t \,\| 1 - \tfrac1M) = t\, \log (M-1) + h(t).
\end{align}

A simple nexus between the R\'enyi entropy and the R\'enyi divergence is
\begin{align}
\label{eq: RD - U is uniform}
D_{\alpha}(X \| U) = \log M - H_\alpha(X)
\end{align}
when $X$ takes values on a set of $M$ elements on which $U$ is equiprobable.

\begin{definition}\label{def:chernoff} The \textit{Chernoff
information} of a pair of probability measures defined on
the same measurable space is equal to
\begin{align} \label{eq:def:chernoffinformation}
C ( P \| Q ) = \sup_{\alpha \in (0,1)} (1 - \alpha ) D_\alpha ( P \| Q ).
\end{align}
\end{definition}

\begin{definition} (\cite{Sibson69}, \cite{Verdu-ITA15})
Given the probability distributions $P_X$ and $P_{Y|X}$, the $\alpha$-\textit{mutual information}
is defined for $\alpha > 0$ as
\begin{align}  \label{def: PV - alpha MI}
& I_{\alpha}(X; Y) = \min_{Q_Y}  D_{\alpha}(P_{XY} \| P_X \times Q_Y).
\end{align}
As for the conventional mutual information, which refers to $\alpha=1$, sometimes it is useful to employ the
alternative notation
\begin{align}
I_\alpha (P_X , P_{Y|X} ) = I_{\alpha}(X; Y).
\end{align}
\end{definition}

In the discrete case, for $\alpha \in (0,1) \cup (1, \infty)$,
\begin{align}  \label{eq: E0_a}
I_{\alpha}(X; Y) &= \frac{\alpha}{1 -\alpha}  \; E_0 \left( \frac1{\alpha}-1, P_X\right),
\end{align}
where
\begin{align}
& E_0( \rho, P_X) =
-\log \sum_{y \in \set{Y}} \left( \sum_{x \in \set{X}} P_X(x)
\, P_{Y|X}^{\frac1{1+\rho}}(y|x) \right)^{1+\rho}  \label{eq0: E0_b}
\end{align}
denotes Gallager's error exponent function \cite{Gallager_IT1965}.
Note that, in general, $I_\alpha (X;Y) $ does not correspond to the
difference $H_\alpha (X) - H_\alpha (X|Y)$, which is equal to
\begin{align}
H_\alpha (X) - H_\alpha (X|Y) = \frac{\alpha}{1-\alpha}  \;
E_0 \left( \frac1{\alpha}-1, P_{X_{\alpha}}\right) \label{eq: E0_b}
\end{align}
where
$P_{X_{\alpha}}$ is the scaled version of $P_X$,  namely the normalized
version of $P_X^{\alpha}$. Since the equiprobable distribution is equal
to its scaled version for any $\alpha \geq 0$, if $X$ is equiprobable
on a set of $M$ elements, then
\begin{align}\label{equiprobablerenyialpha}
I_{\alpha}(X; Y) &= \log M - H_\alpha (X|Y).
\end{align}

\section{Upper and Lower Bounds on $H_{\alpha}(X)$}
\label{section: tight bound and joint range of RE}

In this section, we obtain upper and lower bounds on the unconditional
R\'enyi entropy of order $\alpha$ which are tight in the sense that
they are attained with equality for certain random variables.

\subsection{Upper bounds}
In this subsection we limit ourselves to finite alphabets.
\begin{theorem}  \label{theorem: generalized UB of RE}
Let $\set{X}$ and $\set{Y}$ be finite sets, and let $X$ be
a random variable taking values on $\set{X}$.
Let $f\colon \set{X} \to \set{Y}$ be a deterministic function,
and denote the cardinality of the inverse image by
$L_y = | f^{-1}(y)|$ for every $y \in \set{Y}$.
Then, for every $\alpha \in [0,1) \cup (1, \infty)$,
the R\'{e}nyi entropy of $X$ satisfies
\begin{align}  \label{eq: generalized UB on RE}
H_\alpha (X) \leq \frac{1}{1-\alpha} \, \log
\sum_{y \in \set{Y}} L_y^{1-\alpha} \;
\prob^\alpha[f(X)= y]
\end{align}
which holds with equality if and only if either $\alpha =0$ or
$\alpha \in (0,1) \cup (1, \infty)$ and $P_X$ is equiprobable
on $f^{-1}(y)$ for every $y \in \set{Y}$ such that $P_X(f^{-1}(y)) > 0$.
\end{theorem}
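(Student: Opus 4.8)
The plan is to work directly with the power-sum form \eqref{eq: Renyi entropy} of the R\'enyi entropy and to partition $\set{X}$ into the fibers of $f$. Writing $H_\alpha(X)=\frac{1}{1-\alpha}\log\sum_{x\in\set{X}}P_X^\alpha(x)$ and grouping the summands according to $y=f(x)$, I would use
\[ \sum_{x\in\set{X}}P_X^\alpha(x)=\sum_{y\in\set{Y}}\ \sum_{x\in f^{-1}(y)}P_X^\alpha(x), \]
reducing the theorem, fiber by fiber, to a comparison of the inner sum $\sum_{x\in f^{-1}(y)}P_X^\alpha(x)$ with the single term $L_y^{1-\alpha}\,\prob^\alpha[f(X)=y]$, where $\prob[f(X)=y]=\sum_{x\in f^{-1}(y)}P_X(x)$.

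The core estimate is Jensen's inequality for $t\mapsto t^\alpha$, applied with the uniform weights $1/L_y$ to the $L_y$ masses in a fixed nonempty fiber. Since $t\mapsto t^\alpha$ is strictly concave on $(0,\infty)$ for $\alpha\in(0,1)$ and strictly convex for $\alpha\in(1,\infty)$, this gives
\[ \tfrac{1}{L_y}\sum_{x\in f^{-1}(y)}P_X^\alpha(x)\ \lessgtr\ \Bigl(\tfrac{1}{L_y}\,\prob[f(X)=y]\Bigr)^\alpha, \]
hence $\sum_{x\in f^{-1}(y)}P_X^\alpha(x)\lessgtr L_y^{1-\alpha}\,\prob^\alpha[f(X)=y]$, with ``$\le$'' for $\alpha\in(0,1)$ and ``$\ge$'' for $\alpha>1$. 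Summing over $y$ and applying $\frac{1}{1-\alpha}\log(\cdot)$ produces \eqref{eq: generalized UB on RE} in both regimes; the one point to watch is that $\frac{1}{1-\alpha}$ is positive for $\alpha\in(0,1)$ and negative for $\alpha>1$, so the reversal of the fiberwise inequality is exactly compensated by the sign of the prefactor and the final inequality points the same way.

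For the equality characterization I would propagate Jensen's equality condition through this chain. On a fiber with $\prob[f(X)=y]>0$, strict concavity/convexity forces equality precisely when all the masses $P_X(x)$, $x\in f^{-1}(y)$, are equal, i.e.\ when $P_X$ is equiprobable on $f^{-1}(y)$; on a fiber with $\prob[f(X)=y]=0$ both sides of the fiberwise comparison vanish for $\alpha>0$, so such fibers contribute a trivial equality and impose no condition. Since $\log$ and multiplication by the nonzero constant $\frac{1}{1-\alpha}$ are strictly monotone, equality in \eqref{eq: generalized UB on RE} holds iff it holds in every fiber, which is exactly the stated requirement that $P_X$ be equiprobable on $f^{-1}(y)$ for every $y$ with $P_X\bigl(f^{-1}(y)\bigr)>0$.

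The remaining, and to my mind most delicate, case is $\alpha=0$, which lies outside the scope of Jensen. I would handle it either by taking the limit $\alpha\downarrow0$ in the already-established inequality, turning the right-hand side into $\log\sum_{y:\,\prob[f(X)=y]>0}L_y$, or by direct substitution using $H_0(X)=\log\bigl|\supp P_X\bigr|$ from \eqref{eq: RE of zero order}. The content then is the combinatorial fact that $\supp P_X$ is contained in the disjoint union of the fibers it meets, giving $\bigl|\supp P_X\bigr|\le\sum_{y:\,\prob[f(X)=y]>0}L_y$, with equality exactly when each such fiber lies entirely in the support; this is the reading under which the unconditional $\alpha=0$ equality claim is valid. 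I expect the main obstacles to be purely in the bookkeeping: tracking strictness across the change of sign of $1-\alpha$, interpreting the empty-fiber summands ($L_y=0$, where $y\notin f(\set{X})$) as $0$ --- which matters only when $\alpha>1$ and $1-\alpha<0$ --- and pinning down the $\alpha=0$ boundary behaviour.
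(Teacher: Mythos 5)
Your proof is correct, and it takes a genuinely more elementary route than the paper's. The paper derives \eqref{eq: generalized UB on RE} by writing $H_\alpha(X)=\log M - D_\alpha(X\|U)$ with $U$ equiprobable on $\set{X}$ (identity \eqref{eq: RD - U is uniform}) and invoking the data-processing inequality for R\'enyi divergence under the map $f$ (citing \cite{ErvenH14}); it then characterizes equality by expanding $D_\alpha(X\|U)$ fiber-by-fiber and applying Jensen's inequality with uniform weights $1/L_y$ on each fiber, which is exactly your core estimate. So your argument is, in effect, the second half of the paper's proof promoted to a standalone proof: you observe that the fiberwise Jensen comparison of $\sum_{x\in f^{-1}(y)}P_X^\alpha(x)$ against $L_y^{1-\alpha}\,\prob^\alpha[f(X)=y]$ already yields the inequality once the sign of $\tfrac{1}{1-\alpha}$ compensates the reversal at $\alpha>1$, so the divergence identity and the appeal to data processing can be dispensed with, and the equality condition falls out of the same computation via strict concavity/convexity. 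What the paper's route buys is the conceptual placement of the bound as a consequence of data processing for R\'enyi divergence; what yours buys is a self-contained, one-pass argument in which inequality and equality condition come out together. Your treatment of the boundary cases is also sound and matches what the paper needs: the empty-fiber terms must be read as $0$ when $\alpha>1$ (the paper handles this implicitly through the conventions built into \eqref{eq:RD1}), and the unconditional equality claim at $\alpha=0$ rests on reading the hypothesis as $\supp P_X = \set{X}$, which is precisely the assumption the paper invokes at that step of its proof.
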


\begin{proof}
It can be verified from \eqref{eq: RE of zero order} that
\eqref{eq: generalized UB on RE} holds with equality for
$\alpha=0$ (by assumption $ \supp P_X = \set{X}$, and
$\underset{y \in \set{Y}}{\sum} L_y = |\set{X}|$).

Suppose $\alpha \in (0,1) \cup (1, \infty)$. Let $|\set{X}|=M<\infty$,
and let $U$ be equiprobable on $\set{X}$. Let $V = f(X)$ and
$W = f(U)$, so $(V, W) \in \set{Y}^2$ and
\begin{align}
\label{eq1: V,W}
& P_V(y) = \prob \bigl[ f(X) = y \bigr], \\
\label{eq2: V,W}
& P_W(y) = \frac{L_y}{M}
\end{align}
for all $y \in \set{Y}$.
To show \eqref{eq: generalized UB on RE}, note that
\begin{align}
H_{\alpha}(X) &= \log M - D_{\alpha}(X \| U) \label{eq:th1-1} \\
& \leq \log M - D_{\alpha}(V \| W) \label{eq:th1-2} \\
& = \log M - \frac1{\alpha-1} \, \log  \sum_{y \in \set{Y}}
\prob^{\alpha} \bigl[ f(X) = y \bigr]
\left(\frac{L_y}{M}\right)^{1-\alpha}  \label{eq:th1-4} \\
& = \frac{1}{1-\alpha} \, \log
\sum_{y \in \set{Y}} L_y^{1-\alpha} \;
\prob^\alpha[f(X)= y]  \label{eq:th1-5}
\end{align}
where \eqref{eq:th1-1} is \eqref{eq: RD - U is uniform};
\eqref{eq:th1-2} holds due to the data processing inequality
for the R\'{e}nyi divergence
(see \cite[Theorem~9 and Example~2]{ErvenH14});
and \eqref{eq:th1-4} follows from \eqref{eq:RD1},
\eqref{eq1: V,W} and \eqref{eq2: V,W}.

From \eqref{eq:th1-1}--\eqref{eq:th1-5}, the upper bound in
\eqref{eq: generalized UB on RE} is attained with equality if and only
if the data processing inequality \eqref{eq:th1-2} holds with equality.
For all $\alpha \in (0,1) \cup (1, \infty)$,
\begin{align}
D_{\alpha}(X \| U)
\label{eq:th1-6}
& = \frac1{\alpha-1} \, \log \sum_{x \in \set{X}}
P_X^{\alpha}(x) M^{\alpha-1}  \\[0.1cm]
\label{eq:th1-7}
& = \frac1{\alpha-1} \, \log \left( \sum_{y \in \set{Y}} L_y
M^{\alpha-1}
\sum_{x \in f^{-1}(y)} \frac1{L_y} \, P_X^{\alpha}(x) \right)  \\[0.1cm]
\label{eq:th1-8}
& \geq \frac1{\alpha-1} \, \log \left( \sum_{y \in \set{Y}} L_y
M^{\alpha-1} \left( \frac1{L_y} \sum_{x \in f^{-1}(y)}
\, P_X(x) \right)^{\alpha} \, \right) \\[0.1cm]
\label{eq:th1-9}
& =  \frac1{\alpha-1} \, \log \sum_{y \in \set{Y}} \,
P_V^{\alpha}(y) \left(\frac{L_y}{M}\right)^{1-\alpha} \\
\label{eq:th1-11}
& = D_{\alpha}(V \| W)
\end{align}
where \eqref{eq:th1-6} is satisfied by the definition in \eqref{eq:RD1}
with $U$ being equiprobable on $\set{X}$ and $|\set{X}|=M$;
\eqref{eq:th1-7} holds by expressing the sum over $\set{X}$ as a double sum
over $\set{Y}$ and the elements of $\set{X}$ that are mapped by $f$ to
the same element in $\set{Y}$; inequality \eqref{eq:th1-8}
holds since $|f^{-1}(y)| = L_y$ for all $y \in \set{Y}$, and for every
random variable $Z$ we have $\mathbb{E}[Z^\alpha] \geq \mathbb{E}^{\alpha}[Z]$
if $\alpha \in (1, \infty)$ or the opposite inequality if $\alpha \in [0,1)$;
finally, \eqref{eq:th1-9}--\eqref{eq:th1-11} are due to \eqref{eq:RD1},
\eqref{eq1: V,W} and \eqref{eq2: V,W}. Note that \eqref{eq:th1-8} holds
with equality if and only if $P_X$ is equiprobable
on $f^{-1}(y)$ for every $y \in \set{Y}$ such that $P_X(f^{-1}(y)) > 0$.
Hence, for $\alpha \in (0,1) \cup (1, \infty)$, \eqref{eq:th1-2} holds
with equality if and only if the condition given in the theorem statement
is satisfied.
\end{proof}

\begin{corollary} \label{corollary: generalized UB of RE}
Let $X$ be a random variable taking values on a finite set
$\set{X}$ with $|\set{X}|=M$, and let $\set{L} \subseteq \set{X}$
with $|\set{L}|=L$.
Then, for all $\alpha \in [0,1) \cup (1, \infty)$,
\begin{align}
\label{eq: gen. UB on RE}
H_{\alpha}(X) & \leq
\frac1{1-\alpha} \; \log \Bigl( L^{1-\alpha} \,
\prob^{\alpha}[X \in \set{L}] + (M-L)^{1-\alpha} \,
\prob^{\alpha}[X \notin \set{L}] \Bigr) \\[0.1cm]
\label{eq2: gen. UB on RE}
& = \log M - d_{\alpha} \left( \prob[X \in \set{L}] \| \tfrac{L}{M} \right)
\end{align}
with equality in \eqref{eq: gen. UB on RE} if and only if
$X$ is equiprobable on both $\set{L}$ and $\set{L}^{\mathrm{c}}$.
\end{corollary}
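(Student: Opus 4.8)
The plan is to obtain Corollary~\ref{corollary: generalized UB of RE} as the two-valued specialization of Theorem~\ref{theorem: generalized UB of RE}. Concretely, I would take $\set{Y} = \{0,1\}$ and let $f \colon \set{X} \to \set{Y}$ be the indicator of $\set{L}$, so that $f(x) = 1$ for $x \in \set{L}$ and $f(x) = 0$ for $x \in \set{L}^{\mathrm{c}}$. The two inverse images are then $f^{-1}(1) = \set{L}$ and $f^{-1}(0) = \set{L}^{\mathrm{c}}$, with cardinalities $L_1 = L$ and $L_0 = M - L$, while $\prob[f(X) = 1] = \prob[X \in \set{L}]$ and $\prob[f(X) = 0] = \prob[X \notin \set{L}]$. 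Substituting these two summands into the right-hand side of \eqref{eq: generalized UB on RE} reproduces exactly \eqref{eq: gen. UB on RE}, valid for every $\alpha \in [0,1) \cup (1,\infty)$.

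To pass to the closed form \eqref{eq2: gen. UB on RE}, I would invoke the binary R\'enyi divergence identity \eqref{useful equality} with the assignment $\theta = M$, $t = \prob[X \in \set{L}]$, and $s = L$. These choices give $\theta - s = M - L$, $1 - t = \prob[X \notin \set{L}]$, and $\tfrac{s}{\theta} = \tfrac{L}{M}$, so the left-hand side of \eqref{useful equality} is $\log M - d_\alpha\bigl(\prob[X \in \set{L}] \,\|\, \tfrac{L}{M}\bigr)$ and its right-hand side coincides term by term with the bound already derived from \eqref{eq: gen. UB on RE}. This identifies the two expressions and thereby completes \eqref{eq2: gen. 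UB on RE}.

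For the equality characterization I would simply transcribe the condition of Theorem~\ref{theorem: generalized UB of RE} to the present partition $\{\set{L}, \set{L}^{\mathrm{c}}\}$. For $\alpha \in (0,1) \cup (1,\infty)$, equality in \eqref{eq: generalized UB on RE} holds if and only if $P_X$ is equiprobable on each $f^{-1}(y)$ carrying positive probability, which here means precisely that $X$ is equiprobable on $\set{L}$ and on $\set{L}^{\mathrm{c}}$; the boundary value $\alpha = 0$ is already an equality unconditionally.

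Since the corollary is a direct specialization, I do not anticipate a genuine obstacle: the verification is mechanical, and the only points needing care are the correct matching of the parameters $(\theta, t, s)$ in \eqref{useful equality} and the faithful transcription of the equality condition from the two-set partition. A minor edge case worth a passing remark is the degenerate choice $\set{L} = \set{X}$ (so $M - L = 0$), where the vanishing second term is to be read through the continuous-extension conventions inherited from Theorem~\ref{theorem: generalized UB of RE}.
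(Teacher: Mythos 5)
Your proposal is correct and matches the paper's own proof essentially verbatim: the paper likewise specializes Theorem~\ref{theorem: generalized UB of RE} to a binary-valued indicator of $\set{L}$ (with the labels of $f^{-1}(0)$ and $f^{-1}(1)$ swapped, which is immaterial) and then invokes \eqref{useful equality} with the same assignment $(\theta, s, t) = (M, L, \prob[X \in \set{L}])$, transcribing the equality condition from the theorem. No gaps to report.
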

\begin{proof}
Inequality~\eqref{eq: gen. UB on RE} is a specialization of
Theorem~\ref{theorem: generalized UB of RE} to a binary-valued
function $f \colon \set{X} \to \{0,1\}$
such that $f^{-1}(0) = \set{L}$ and $f^{-1}(1) = \set{L}^{\mathrm{c}}$.
Equality \eqref{eq2: gen. UB on RE} holds by setting $\theta = M$,
$t = \prob[X \in \set{L}]$, and $s=L$ in \eqref{useful equality}.
\end{proof}

Moreover, specializing Corollary~\ref{corollary: generalized UB of RE} to
$L=1$, we obtain:
\begin{corollary} \label{corollary: UB of RE}
Let $X$ be a random variable taking values on a finite set $\set{X}$
with $|\set{X}|=M$. Then, for all $\alpha \in [0,1) \cup (1, \infty)$,
\begin{align}  \label{eq: UB on RE}
H_{\alpha}(X) & \leq \min_{x \in \set{X}}
\, \frac1{1-\alpha} \, \log \Bigl( P_X^{\alpha}(x) + (M-1)^{1-\alpha} \,
\bigl(1-P_X(x)\bigr)^{\alpha} \Bigr) \\[0.1cm]
& = \log M - \max_{x \in \set{X}} \, d_{\alpha}\bigl( P_X(x) \| \tfrac1M \bigr)
\label{oneof}
\end{align}
with equality in \eqref{eq: UB on RE} if and only if $P_X$ is equiprobable
on $\set{X} \setminus \{x^\ast\}$ where $x^\ast \in \set{X}$ attains the
maximum in \eqref{oneof}.
\end{corollary}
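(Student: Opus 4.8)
The plan is to obtain Corollary~\ref{corollary: UB of RE} as the $L=1$ instance of Corollary~\ref{corollary: generalized UB of RE}, applied separately to each singleton subset of $\set{X}$ and then minimized over the choice of singleton.

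First, for a fixed $x \in \set{X}$ I would take $\set{L} = \{x\}$ in Corollary~\ref{corollary: generalized UB of RE}, so that $L=1$, $\prob[X \in \set{L}] = P_X(x)$, and $\prob[X \notin \set{L}] = 1 - P_X(x)$. Substituting into \eqref{eq: gen. UB on RE} and \eqref{eq2: gen. UB on RE}, and noting that the factor $L^{1-\alpha} = 1$ while $(M-L)^{1-\alpha} = (M-1)^{1-\alpha}$, gives for every $x$
\[
H_{\alpha}(X) \leq \frac{1}{1-\alpha} \log\Bigl( P_X^{\alpha}(x) + (M-1)^{1-\alpha}\bigl(1-P_X(x)\bigr)^{\alpha} \Bigr) = \log M - d_{\alpha}\bigl(P_X(x) \,\|\, \tfrac1M \bigr).
\]
Denote the common right-hand side by $B(x)$.

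Since this inequality holds simultaneously for every $x \in \set{X}$, I would minimize $B(x)$ over $x$. In the explicit form this produces the minimum in \eqref{eq: UB on RE}; in the divergence form, minimizing $\log M - d_{\alpha}(P_X(x)\|\tfrac1M)$ over $x$ is the same as subtracting the maximum of $d_{\alpha}(P_X(x)\|\tfrac1M)$, which yields \eqref{oneof}. This establishes both displayed bounds together with their equivalence, with no further computation.

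For the equality condition I would invoke the equality clause of Corollary~\ref{corollary: generalized UB of RE}: with $\set{L} = \{x\}$, equality in the $x$-th bound holds iff $X$ is equiprobable on both $\{x\}$ (vacuous for a singleton) and $\set{X}\setminus\{x\}$. Let $x^*$ attain the maximum in \eqref{oneof}, so that $\min_x B(x) = B(x^*)$. Overall equality means $H_{\alpha}(X) = B(x^*)$, i.e.\ the $x^*$-th individual bound is tight, which by the clause just recalled is equivalent to $P_X$ being equiprobable on $\set{X}\setminus\{x^*\}$. Conversely, if $P_X$ is equiprobable on $\set{X}\setminus\{x^*\}$, then $H_{\alpha}(X) = B(x^*) \geq \min_x B(x) \geq H_{\alpha}(X)$, forcing equality throughout and confirming that $x^*$ is indeed a maximizer in \eqref{oneof}. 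The computation here is routine; the one point requiring care is precisely this equality analysis, where I must verify that overall tightness of the minimized bound reduces to tightness at the particular minimizing index $x^*$, and that equiprobability on $\set{X}\setminus\{x^*\}$ is automatically consistent with $x^*$ maximizing the divergence, as the last chain of inequalities shows.
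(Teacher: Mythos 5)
Your proposal is correct and follows essentially the same route as the paper, which obtains Corollary~\ref{corollary: UB of RE} precisely by specializing Corollary~\ref{corollary: generalized UB of RE} to $L=1$ (singleton sets $\set{L}=\{x\}$) and minimizing the resulting bound over $x \in \set{X}$. Your careful unpacking of the equality clause---showing that tightness of the minimized bound is equivalent to tightness of the individual bound at a maximizer $x^\ast$ of $d_{\alpha}\bigl(P_X(x) \| \tfrac1M\bigr)$, and that equiprobability on $\set{X}\setminus\{x^\ast\}$ automatically forces $x^\ast$ to be such a maximizer---is exactly the detail the paper leaves implicit in its one-line derivation.
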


\begin{remark}
Taking the limit $\alpha \to 1$ in the right side of
\eqref{eq: UB on RE} yields
\begin{align}  \label{eq: UB on H}
H(X) \leq \min_{x \in \set{X}} \Bigl\{ \bigl(1-P_X(x)\bigr)
\, \log(M-1) + h\bigl( P_X(x) \bigr) \Bigr\}
\end{align}
with the same necessary and sufficient condition for
equality in Corollary~\ref{corollary: UB of RE}. Furthermore,
choosing $x \in \set{X}$ to be a mode of $P_X$ gives
\begin{align}
\label{eq2: UB on H}
H(X) & \leq \bigl( 1-p_{\max} \bigr)
\, \log(M-1) + h\bigl( p_{\max} \bigr) \\[0.05cm]
\label{eq3: UB on H}
& = \varepsilon_X \, \log(M-1) + h(\varepsilon_X)
\end{align}
where \eqref{eq3: UB on H} follows from \eqref{eq: blind decision},
and the symmetry of the binary entropy function around $\tfrac12$.
\end{remark}

If we loosen Corollary~\ref{corollary: UB of RE} by, instead of
minimizing the right side of \eqref{eq: UB on RE}, choosing $x \in \set{X}$
to be a mode of $P_X$, we recover \cite[Theorem~6]{Ben-Bassat-Raviv}:
\begin{corollary}  \label{corollary: UB2 of RE}
Let $X$ be a random variable taking $M$ possible values,
and assume that its largest mass is $p_{\max}$. Then, for all
$\alpha \in [0,1) \cup (1, \infty)$,
\begin{align}  \label{eq: specialized UB on RE}
H_{\alpha}(X) & \leq \frac1{1-\alpha} \; \log \Bigl( p_{\max}^{\alpha}
+ (M-1)^{1-\alpha} \, \bigl(1-p_{\max}\bigr)^{\alpha} \Bigr) \\[0.1cm]
\label{eq2: specialized UB}
& = \log M - d_{\alpha}\bigl( p_{\max} \| \tfrac1M \bigr)
\end{align}
with equality in \eqref{eq: specialized UB on RE} under the
condition in the statement of Corollary~\ref{corollary: UB of RE}
with $x^\ast$ being equal to a mode of $P_X$.
\end{corollary}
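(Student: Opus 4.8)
The plan is to obtain Corollary~\ref{corollary: UB2 of RE} as an immediate relaxation of Corollary~\ref{corollary: UB of RE}, whose right side \eqref{eq: UB on RE} already minimizes over the choice of $x \in \set{X}$. Since a minimum over $x$ is never larger than the value of the minimand at any fixed point, I would simply drop the minimization and evaluate that bound at one particular $x$, namely at a mode of $P_X$, for which $P_X(x) = p_{\max}$. This substitution turns the argument $P_X^{\alpha}(x) + (M-1)^{1-\alpha}\bigl(1-P_X(x)\bigr)^{\alpha}$ of the logarithm in \eqref{eq: UB on RE} into $p_{\max}^{\alpha} + (M-1)^{1-\alpha}\bigl(1-p_{\max}\bigr)^{\alpha}$, yielding \eqref{eq: specialized UB on RE} at once.

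For the equivalent expression \eqref{eq2: specialized UB}, I would invoke the divergence form \eqref{oneof} of Corollary~\ref{corollary: UB of RE}: after removing the maximization (paired with the removed minimization above), evaluating $\log M - d_{\alpha}\bigl(P_X(x)\|\tfrac1M\bigr)$ at $P_X(x) = p_{\max}$ gives $\log M - d_{\alpha}\bigl(p_{\max}\|\tfrac1M\bigr)$. Equivalently, the passage from \eqref{eq: specialized UB on RE} to \eqref{eq2: specialized UB} can be carried out by a single application of the identity \eqref{useful equality} with $\theta = M$, $t = p_{\max}$, and $s = 1$, exactly as $s=L$ was used (with $L=1$) in the proof of Corollary~\ref{corollary: generalized UB of RE}.

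For the equality condition, I would note that the relaxation introduced above costs nothing precisely when the fixed choice $x^\ast$ (a mode) already attains the optimum in Corollary~\ref{corollary: UB of RE} and that corollary's own equality condition holds, i.e., when $P_X$ is equiprobable on $\set{X}\setminus\{x^\ast\}$. A direct check confirms sufficiency: if $P_X(x^\ast) = p_{\max}$ while $P_X$ assigns the common mass $\tfrac{1-p_{\max}}{M-1}$ to each of the remaining $M-1$ elements, then from \eqref{eq: Renyi entropy} one has $\sum_{x \in \set{X}} P_X^{\alpha}(x) = p_{\max}^{\alpha} + (M-1)^{1-\alpha}\bigl(1-p_{\max}\bigr)^{\alpha}$, so \eqref{eq: specialized UB on RE} holds with equality.

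There is essentially no hard step here: the content is entirely inherited from Corollary~\ref{corollary: UB of RE}, and the only point requiring a moment of care is the equality statement, where one must check that fixing $x^\ast$ to be a mode is compatible with the corollary's equality condition rather than reprove the equality analysis from scratch.
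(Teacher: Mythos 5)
Your proposal is correct and follows essentially the same route as the paper: the paper obtains Corollary~\ref{corollary: UB2 of RE} precisely by loosening Corollary~\ref{corollary: UB of RE}, replacing the minimization over $x \in \set{X}$ with the particular choice of a mode of $P_X$, with the divergence form \eqref{eq2: specialized UB} following from \eqref{oneof} (equivalently, from \eqref{useful equality} with $(\theta, s, t) = (M, 1, p_{\max})$). Your direct sufficiency check of the equality condition matches the paper's statement, which inherits the equality condition of Corollary~\ref{corollary: UB of RE} specialized to $x^\ast$ being a mode.
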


\begin{remark} \label{remark: minimization over the set}
In view of the necessary and sufficient condition for
equality in \eqref{eq: UB on RE}, the minimization of
the upper bound in the right side of \eqref{eq: UB on RE}
for a given $P_X$ does not necessarily imply that the best
choice of $x \in \set{X}$ is a mode of $P_X$. For example, let
$\set{X} = \{0, 1, 2\}$, $P_X(0) = 0.2$, and $P_X(1) = P_X(2) = 0.4$.
Then, $H_2(X) = 1.474$ bits, which coincides with its upper bound
in \eqref{eq: UB on RE} by choosing $x=0$; on the other hand,
\eqref{eq: specialized UB on RE} yields $H_2(X) \leq 1.556$~bits.
\end{remark}

\subsection{Lower bound}
The following lower bound provides a counterpart to the upper
bound in Corollary~\ref{corollary: UB2 of RE} without restricting
to finitely valued random variables.
\begin{theorem} \label{theorem: LB of RE}
Let $X$ be a discrete random variable attaining a maximal mass $p_{\max}$.
Then, for $\alpha \in (0,1) \cup (1, \infty)$,
\begin{align} \label{eq: LB on RE of order alpha}
H_{\alpha}(X) & \geq \frac1{1-\alpha} \; \log \left( \bigg\lfloor
\frac1{p_{\max}} \bigg\rfloor \, p_{\max}^{\alpha} + \left(1 - p_{\max}
\bigg\lfloor \frac1{p_{\max}} \bigg\rfloor \right)^{\alpha} \right) \\[0.1cm]
\label{eq2: LB on RE of order alpha}
& = \log \left(1 + \bigg\lfloor \frac1{p_{\max}} \bigg\rfloor \right)
- d_{\alpha} \left( p_{\max} \bigg \lfloor \frac1{p_{\max}} \bigg \rfloor
\, \bigg \| \, \frac{\Big \lfloor \frac1{p_{\max}} \Big\rfloor}{1+
\Big \lfloor \frac1{p_{\max}} \Big\rfloor}\right)
\end{align}
where, for $x \in \Reals$, $\lfloor x \rfloor$ denotes the largest integer
that is smaller than or equal to $x$.
Equality in \eqref{eq: LB on RE of order alpha} holds if and only
if $P_X$ has $\Big\lfloor \tfrac1{p_{\max}} \Big\rfloor$ masses equal
to $p_{\max}$, and an additional mass equal to
$1 - p_{\max} \, \Big\lfloor \frac1{p_{\max}} \Big\rfloor$
whenever $\frac1{p_{\max}}$ is not an integer.
\end{theorem}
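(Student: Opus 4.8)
The plan is to identify, among all distributions whose largest mass equals $p_{\max}$, the one that \emph{minimizes} $H_\alpha(X)$, and to show that it is the ``greedy'' distribution $g$ which packs as many masses as possible at the ceiling value $p_{\max}$. Writing $k = \lfloor 1/p_{\max}\rfloor$ and $r = 1 - k\,p_{\max}\in[0,p_{\max})$, this $g$ has $k$ masses equal to $p_{\max}$ together with one residual mass $r$ (present precisely when $1/p_{\max}\notin\mathbb{Z}$). Since $H_\alpha$ is invariant under relabeling of outcomes, I would first sort the masses of $X$ in nonincreasing order $p_1 \ge p_2 \ge \cdots$ with $p_1 = p_{\max}$, and regard both $P_X$ and $g$ as points of the simplex of summable nonnegative sequences (padding with zeros as needed).

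The crux is a majorization statement: $g \succ P_X$. This follows at once from the partial-sum characterization. For $n \le k$ we have $\sum_{i=1}^n g_i = n\,p_{\max} \ge \sum_{i=1}^n p_i$ since every $p_i \le p_{\max}$; for $n \ge k$ we have $\sum_{i=1}^n g_i = 1 \ge \sum_{i=1}^n p_i$; and both sequences sum to $1$. With majorization in hand I would invoke the Schur properties of $t \mapsto \sum_i t_i^\alpha$, generated by $t \mapsto t^\alpha$. For $\alpha>1$ this generator is strictly convex, so $\sum_i g_i^\alpha \ge \sum_i p_i^\alpha$; since $\tfrac1{1-\alpha}<0$, passing through \eqref{eq: Renyi entropy} yields $H_\alpha(g) \le H_\alpha(X)$. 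For $\alpha\in(0,1)$ the generator is strictly concave, so the power-sum inequality reverses, $\sum_i g_i^\alpha \le \sum_i p_i^\alpha$, but now $\tfrac1{1-\alpha}>0$, so again $H_\alpha(g)\le H_\alpha(X)$. (When $\alpha\in(0,1)$ and $\sum_i p_i^\alpha$ diverges we have $H_\alpha(X)=+\infty$ and the bound is trivial, so I may restrict to the convergent case.) Thus in both regimes $g$ is a minimizer, and evaluating \eqref{eq: Renyi entropy} at $g$ gives $\sum_i g_i^\alpha = k\,p_{\max}^\alpha + r^\alpha$, which is exactly the right side of \eqref{eq: LB on RE of order alpha}.

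It remains to settle equality and the alternative form \eqref{eq2: LB on RE of order alpha}. Because $t\mapsto t^\alpha$ is strictly convex (resp. strictly concave) for $\alpha\neq 1$, the standard equality condition for a Schur inequality under majorization forces $P_X$ to be a permutation of $g$: exactly $k$ masses equal to $p_{\max}$, together with the single residual mass $r$ when $1/p_{\max}\notin\mathbb{Z}$, which is precisely the stated condition. For \eqref{eq2: LB on RE of order alpha} I would apply the identity \eqref{useful equality} with $\theta = 1+k$, $s = k$, and $t = p_{\max}\,k$; then $\tfrac{s}{\theta}=\tfrac{k}{1+k}$, $1-t=r$, $\theta-s=1$, and the bracketed expression collapses to $k\,p_{\max}^\alpha + r^\alpha$, reproducing $H_\alpha(g)$ in the divergence form.

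The main obstacle I anticipate is the careful handling of the infinite-support and boundary subtleties: ensuring the majorization relation and the Schur-convexity/concavity argument are stated correctly for countably many masses, and treating the integer case $r=0$, where the residual atom disappears and $g$ is uniform on $k = 1/p_{\max}$ points (so that $r^\alpha = 0$ for $\alpha>0$). Everything else is a direct consequence of the partial-sum bound and the monotonicity of $x\mapsto\tfrac1{1-\alpha}\log x$ in the appropriate direction.
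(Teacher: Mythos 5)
Your proposal is correct and follows essentially the same route as the paper: identify the ``greedy'' distribution with $\lfloor 1/p_{\max}\rfloor$ masses equal to $p_{\max}$ plus a residual, show it majorizes every distribution with the same maximal mass, conclude via Schur-concavity of the R\'enyi entropy (with strictness giving the equality condition), and obtain \eqref{eq2: LB on RE of order alpha} from \eqref{useful equality} with the same parameter choice $(\theta,s,t)=\bigl(1+\lfloor 1/p_{\max}\rfloor,\ \lfloor 1/p_{\max}\rfloor,\ p_{\max}\lfloor 1/p_{\max}\rfloor\bigr)$. The only difference is that the paper outsources the Schur-concavity and its strictness on countable alphabets to \cite[Lemma~1 and Theorem~2]{HoS-ISIT15}, whereas you sketch the partial-sum majorization and the Schur argument generated by $t\mapsto t^\alpha$ directly---precisely the countable-alphabet subtlety you flag is what that reference settles.
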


\begin{proof}
For $\alpha \in (0,1) \cup (1, \infty)$, the R\'{e}nyi entropy of the
distribution identified in the statement of Theorem~\ref{theorem: LB of RE}
as attaining the condition for equality is equal to the right side of
\eqref{eq: LB on RE of order alpha}. Furthermore, that distribution
majorizes any distribution whose maximum mass is $p_{\max}$. The result
in \eqref{eq: LB on RE of order alpha} follows from the Schur-concavity
of the R\'{e}nyi entropy in the general case of a countable alphabet
(see \cite[Theorem~2]{HoS-ISIT15}). In view of Lemma~1 and Theorem~2
of \cite{HoS-ISIT15}, the Schur concavity of the R\'enyi entropy is
strict for any $\alpha \in (0,1) \cup (1, \infty)$ and therefore
\eqref{eq: LB on RE of order alpha} holds with strict inequality
for any distribution other than the one specified in the statement
of this result. To get \eqref{eq2: LB on RE of order alpha}, let
$s = \Bigl \lfloor \frac1{p_{\max}} \Bigr \rfloor$,
$t = s\,  p_{\max}$ and $\theta = s+1$ in \eqref{useful equality}.
\end{proof}

\begin{remark}
Let $X$ be a discrete random variable, and consider the case
of $\alpha = 0$. Unless $X$ is finitely valued,
\eqref{eq: RE of zero order} yields $H_0(X) = \infty$.
If $X$ is finitely valued, then \eqref{eq: RE of zero order} and
the inequality $p_{\max} \, | \supp P_X | \geq 1$ yield
\begin{align} \label{20170904-2}
H_0(X) \geq \log \left\lceil \tfrac1{p_{\max}} \right\rceil
\end{align}
where, for $x \in \Reals$, $\lceil x \rceil$ denotes the smallest integer
that is larger than or equal to $x$. The bound in \eqref{eq: LB on RE of order alpha}
therefore holds for $\alpha=0$ with the convention that $0^0 = 0$, and it then
coincides with \eqref{20170904-2}. Equality in \eqref{20170904-2} holds if and
only if $|\supp P_X| = \left\lceil \tfrac1{p_{\max}} \right\rceil$
(e.g., if $X$ is equiprobable).
\end{remark}

\begin{remark}
Taking the limit $\alpha \to 1$ in Theorem~\ref{theorem: LB of RE} yields
\begin{align} \label{eq: LB on H}
H(X) & \geq h\left(p_{\max} \bigg\lfloor \frac1{p_{\max}}
\bigg \rfloor \right) + p_{\max} \bigg\lfloor \frac1{p_{\max}}
\bigg\rfloor \, \log \bigg\lfloor \frac1{p_{\max}} \bigg\rfloor \\
\label{eq2: LB on H}
& = \log \left(1 + \bigg\lfloor \frac1{p_{\max}} \bigg\rfloor \right)
- d \left( p_{\max} \bigg \lfloor \frac1{p_{\max}} \bigg \rfloor
\, \bigg \| \, \frac{\Big \lfloor \frac1{p_{\max}} \Big\rfloor}{1+
\Big \lfloor \frac1{p_{\max}} \Big\rfloor}\right)
\end{align}
with equality in \eqref{eq: LB on H} if and only if the condition
for equality in Theorem~\ref{theorem: LB of RE} holds. Hence,
the result in Theorem~\ref{theorem: LB of RE} generalizes the bound
in \eqref{eq: LB on H}, which is due to Kovalevsky \cite{Kovalevsky}
and Tebbe and Dwyer \cite{TebbeD68} in the special case of a
finitely-valued $X$ (rediscovered in \cite{FederM94}), and to Ho and
Verd\'u \cite[Theorem~10]{HoS-IT10} in the general setting of a
countable alphabet.
\end{remark}

\section{Arimoto-R\'{e}nyi Conditional Entropy and Error Probability}
\label{section: cond. RE vs. Pe}
Section~\ref{section: cond. RE vs. Pe} forms the main part of this paper,
and its results are outlined in Section~\ref{subsec: 1.4}
(see Items~\ref{s:1})--\ref{s:5})).

\subsection{Upper bound on the Arimoto-R\'{e}nyi conditional entropy: Generalized Fano's inequality}
\label{subsection: Generalized Fano inequality}

The minimum error probability  $\varepsilon_{X|Y}$
can be achieved by a deterministic function
(\textit{maximum-a-posteriori} decision rule)
$\mathcal{L}^\ast \colon \set{Y} \to \set{X}$:
\begin{align}
\varepsilon_{X|Y}
&= \min_{\mathcal{L} \colon \set{Y} \to \set{X}}
\mathbb{P} [ X \neq \mathcal{L} (Y) ] \label{20170904} \\
&= \mathbb{P} [ X \neq \mathcal{L}^\ast (Y) ] \label{eq:MAP}\\
&= 1- \mathbb{E} \left[ \max_{x \in \set{X}}
P_{X|Y}(x|Y) \right]
\label{eq1: cond. epsilon} \\
&\leq 1 - p_{\max} \label{USA}\\
&\leq 1 - \frac1M \label{UK}
\end{align}
where \eqref{USA} is the minimum error probability
achievable among blind decision rules that disregard
the observations (see \eqref{eq: blind decision}).
\par
Fano's inequality links the decision-theoretic uncertainty
$\varepsilon_{X|Y}$ and the information-theoretic uncertainty
$H(X|Y)$
through
\begin{align} \label{eq: Fano}
H(X | Y) & \leq \log M - d\bigl( \varepsilon_{X|Y} \| 1-\tfrac1M \bigr) \\
\label{eq2: Fano}
& = h(\varepsilon_{X|Y}) + \varepsilon_{X|Y} \log(M-1)
\end{align}
where the identity in \eqref{eq2: Fano} is \eqref{good4fano} with $t = \varepsilon_{X|Y}$.
Although the form of Fano's inequality in \eqref{eq: Fano}
is not nearly as popular as \eqref{eq2: Fano}, it turns out to
be the version that admits an elegant (although not immediate) generalization
to the Arimoto-R\'{e}nyi conditional entropy. It is straightforward to obtain
\eqref{eq2: Fano} by averaging a conditional version of \eqref{eq3: UB on H}
with respect to the observation. This simple route to the desired
result is not viable in the case of $H_\alpha (X|Y)$ since it
is not an average of R\'enyi entropies of conditional distributions.
The conventional proof of Fano's inequality (e.g., \cite[pp.~38--39]{Cover_Thomas}),
based on the use of the chain rule for  entropy, is also doomed to failure for
the Arimoto-R\'{e}nyi conditional entropy of order $\alpha \neq 1$ since it
does not satisfy the chain rule.

Before we generalize Fano's inequality by linking $\varepsilon_{X|Y}$
with $H_\alpha (X|Y)$ for $\alpha \in (0, \infty)$, note that for
$\alpha = \infty$, the following generalization of \eqref{eq2: min error prob.}
holds  in view of \eqref{eq: cond. RE at infinity} and
\eqref{eq1: cond. epsilon}:
\begin{align} \label{varepsilon_X|Y}
\varepsilon_{X|Y} = 1 - \exp\bigl(-H_{\infty}(X | Y) \bigr).
\end{align}

\begin{theorem}
\label{theorem: generalized Fano inequality}
Let $P_{XY}$ be a probability measure defined on
$\set{X} \times \set{Y}$ with $|\set{X}|=M < \infty$.
For all $\alpha \in (0, \infty)$,
\begin{align} \label{eq1: generalized Fano}
H_{\alpha}(X | Y) \leq \log M -
d_{\alpha}\bigl( \varepsilon_{X|Y} \| 1-\tfrac1M \bigr).
\end{align}
Let $\mathcal{L}^\ast \colon \set{Y} \to \set{X}$ be a
deterministic MAP decision rule. Equality holds in
\eqref{eq1: generalized Fano} if and only if, for
all $y \in \set{Y}$,
\begin{align} \label{eq: equality in Fano-nolist}
P_{X|Y}(x|y) =
\begin{dcases}
\frac{\varepsilon_{X|Y}}{M-1}, & \quad x \neq \mathcal{L}^\ast(y), \\[0.2cm]
1-\varepsilon_{X|Y}, & \quad x=\mathcal{L}^\ast(y).
\end{dcases}
\end{align}
\end{theorem}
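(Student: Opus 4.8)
The plan is to reduce this conditional statement to the unconditional bound already in hand (Corollary~\ref{corollary: UB2 of RE}), applied separately to each conditional law $P_{X|Y=y}$, and then to reassemble the slices by a single Jensen step. Throughout I treat $\alpha \in (0,1) \cup (1,\infty)$; the value $\alpha=1$ is the classical Fano inequality \eqref{eq: Fano}, recovered here by continuity, while $\alpha=\infty$ is outside the range $(0,\infty)$. First I would introduce the conditional error $\varepsilon(y) = 1 - \max_{x} P_{X|Y}(x|y) = \mathbb{P}[X \neq \mathcal{L}^\ast(y) \mid Y=y]$, so that $\varepsilon_{X|Y} = \mathbb{E}[\varepsilon(Y)]$ by \eqref{eq1: cond. epsilon}. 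Fixing $y$ and applying Corollary~\ref{corollary: UB2 of RE} to $P_{X|Y=y}$, whose largest mass is $1-\varepsilon(y)$, and using the symmetry $d_\alpha(p\|q)=d_\alpha(1-p\|1-q)$ of the binary R\'enyi divergence, gives the slice bound
$$H_\alpha(X|Y=y) \leq \log M - d_\alpha\bigl(\varepsilon(y) \,\big\|\, 1-\tfrac1M\bigr) =: \phi\bigl(\varepsilon(y)\bigr).$$

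Next I would substitute these slice bounds into the representation \eqref{eq2: Arimoto - cond. RE}. The map $z \mapsto \frac{\alpha}{1-\alpha}\log\sum_y P_Y(y)\exp\bigl(\frac{1-\alpha}{\alpha}z\bigr)$ is \emph{increasing} in each argument $z=H_\alpha(X|Y=y)$ for every $\alpha$, since the two sign reversals supplied by $\frac{1-\alpha}{\alpha}$ and $\frac{\alpha}{1-\alpha}$ cancel (their product is $1$); hence the slice bounds push through to
$$H_\alpha(X|Y) \leq \frac{\alpha}{1-\alpha}\log\sum_y P_Y(y)\,\psi\bigl(\varepsilon(y)\bigr), \qquad \psi(\epsilon) := \exp\Bigl(\tfrac{1-\alpha}{\alpha}\,\phi(\epsilon)\Bigr) = \Bigl((1-\epsilon)^\alpha + (M-1)^{1-\alpha}\epsilon^\alpha\Bigr)^{1/\alpha}.$$

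It remains to move the averaging inside $\psi$ at the mean $\varepsilon_{X|Y}=\mathbb{E}[\varepsilon(Y)]$, i.e.\ to show $\frac{\alpha}{1-\alpha}\log\sum_y P_Y(y)\psi(\varepsilon(y)) \leq \frac{\alpha}{1-\alpha}\log\psi(\varepsilon_{X|Y})=\phi(\varepsilon_{X|Y})$. The crux is the curvature of $\psi$. Writing $\psi(\epsilon) = \bigl\|\bigl(1-\epsilon,\, (M-1)^{(1-\alpha)/\alpha}\epsilon\bigr)\bigr\|_\alpha$ exhibits $\psi$ as the $\ell^\alpha$ norm of an affine function of $\epsilon$: for $\alpha>1$ this is a genuine norm, hence convex, and strictly so along our segment because that segment avoids the origin; for $\alpha\in(0,1)$ the reverse Minkowski inequality makes the $\ell^\alpha$ quasi-norm strictly concave. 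Jensen's inequality then yields $\sum_y P_Y(y)\psi(\varepsilon(y)) \leq \psi(\varepsilon_{X|Y})$ for $\alpha\in(0,1)$ and $\geq \psi(\varepsilon_{X|Y})$ for $\alpha>1$, and in each case the outer sign of $\frac{\alpha}{1-\alpha}$ converts this into $H_\alpha(X|Y)\leq \phi(\varepsilon_{X|Y})$, which is \eqref{eq1: generalized Fano}. This curvature verification is the main obstacle, but it is dispatched by Minkowski / reverse-Minkowski rather than by a direct second-derivative computation.

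For the equality characterization I would trace back through the two inequalities used. Strict convexity (resp.\ concavity) of $\psi$ forces the Jensen step to be tight only when $\varepsilon(y)$ is $P_Y$-almost-surely constant, hence equal to $\varepsilon_{X|Y}$; and Corollary~\ref{corollary: UB2 of RE} is tight only when each $P_{X|Y=y}$ is equiprobable on $\set{X}\setminus\{\mathcal{L}^\ast(y)\}$, the mode being the MAP choice $\mathcal{L}^\ast(y)$. Imposing both conditions simultaneously gives precisely the posterior in \eqref{eq: equality in Fano-nolist}, completing the proof.
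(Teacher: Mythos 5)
Your proposal is correct and takes essentially the same route as the paper's proof: a slice-wise application of Corollary~\ref{corollary: UB2 of RE} to each conditional law $P_{X|Y=y}$, assembly through the representation \eqref{eq2: Arimoto - cond. RE}, a Jensen step for the function you call $\psi$ (which is exactly the paper's $f_{\alpha,\beta,1}$ from \eqref{eq: f} with $\beta=(M-1)^{1-\alpha}$), and the same two-part equality analysis combining tightness of the slice bounds with constancy of $\varepsilon_{X|Y}(y)$. The only departure is how the curvature of $\psi$ is certified --- you invoke Minkowski and reverse-Minkowski for the $\ell^\alpha$ (quasi-)norm along a non-degenerate affine segment, whereas the paper computes the second derivative explicitly in Lemma~\ref{lemma: convexity/ concavity} --- and both arguments are valid.
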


\begin{proof}
If in Corollary~\ref{corollary: UB2 of RE} we replace $P_X$ by $P_{X|Y=y}$, then
$p_{\max}$ is replaced by $\underset{x \in \set{X}}{\max} \, P_{X|Y}(x|y)$ and we obtain
\begin{align} \label{th3-0}
H_{\alpha}(X | Y=y) \leq \frac1{1-\alpha} \;
\log \Bigl( \bigl(1-\varepsilon_{X|Y}(y)\bigr)^{\alpha} + (M-1)^{1-\alpha} \,
\varepsilon^{\alpha}_{X|Y}(y) \Bigr)
\end{align}
where we have defined the conditional error probability given the observation:
\begin{align} \label{Aachen}
\varepsilon_{X|Y}(y)= 1-  \max_{x \in \set{X}} P_{X|Y}(x|y) ,
\end{align}
which satisfies with $Y \sim P_Y$,
\begin{align}\label{eq1: avg. cond. epsilon}
\varepsilon_{X|Y} = \mathbb{E} [ \varepsilon_{X|Y}(Y) ].
\end{align}
For $\alpha \in (0,1) \cup (1, \infty)$, $(\beta, \gamma) \in (0, \infty)^2$,
define the function $f_{\alpha, \beta, \gamma} \colon [0,1] \to (0, \infty)$:
\begin{align} \label{eq: f}
f_{\alpha, \beta, \gamma}(u) = \left( \gamma (1-u)^{\alpha} + \beta u^{\alpha}
\right)^{\frac1\alpha}.
\end{align}
If $\alpha > 1$, then \eqref{eq2: Arimoto - cond. RE} allows us to bound a
monotonically decreasing function of $H_{\alpha}(X | Y)$:
\begin{align}
\exp \left( \frac{1-\alpha}{\alpha} \; H_{\alpha}(X | Y) \right)
\label{th3-2}
& \geq \mathbb{E} \left[ \left(
\bigl(1-\varepsilon_{X|Y}(Y)\bigr)^{\alpha} + \beta \,
\varepsilon_{X|Y}^{\alpha}(Y) \right)^{\frac1\alpha} \right] \\[0.2cm]
\label{th3-3}
& = \mathbb{E} \left[ f_{\alpha, \beta, 1}
\bigl(\varepsilon_{X|Y}(Y)\bigr) \right] \\[0.1cm]
\label{th3-4}
& \geq f_{\alpha, \beta, 1} \bigl( \varepsilon_{X|Y} \bigr) \\
\label{th3-5}
& = \Bigl( \bigl(1-\varepsilon_{X|Y}\bigr)^{\alpha} + (M-1)^{1-\alpha} \,
\varepsilon_{X|Y}^{\alpha} \Bigr)^{\frac1\alpha}
\end{align}
where $\beta = (M-1)^{1-\alpha}$ in \eqref{th3-2}--\eqref{th3-4};
\eqref{th3-2} follows from \eqref{eq2: Arimoto - cond. RE} and \eqref{th3-0};
\eqref{th3-4} follows from \eqref{eq1: avg. cond. epsilon} and Jensen's
inequality, due to the convexity of $f_{\alpha, \beta, \gamma} \colon [0,1] \to (0, \infty)$ for
$\alpha \in (1, \infty)$ and $(\beta, \gamma) \in (0, \infty)^2$
(see Lemma~\ref{lemma: convexity/ concavity} following this proof);
finally, \eqref{th3-5} follows from \eqref{eq: f}.
For $\alpha \in (1, \infty)$, \eqref{eq1: generalized Fano} follows from
\eqref{th3-2}--\eqref{th3-5} and identity \eqref{useful equality}
with $(\theta, s, t) = (M, \, M-1, \, \varepsilon_{X|Y})$.

For $\alpha \in (0,1)$, inequality~\eqref{th3-2} is reversed; and due to the
concavity of $f_{\alpha, \beta, \gamma}(\cdot)$ on $[0,1]$
(Lemma~\ref{lemma: convexity/ concavity}),
inequality~\eqref{th3-4} is also reversed. The rest of the proof proceeds as in
the case where $\alpha > 1$.

The necessary and sufficient condition for equality in \eqref{eq1: generalized Fano}
follows from the condition for equality in the statement of Corollary~\ref{corollary: UB2 of RE}
when $P_X$ is replaced by $P_{X|Y=y}$ for $y \in \set{Y}$. Under
\eqref{eq: equality in Fano-nolist}, it follows from \eqref{20170904}--\eqref{UK}
and \eqref{Aachen} that $\varepsilon_{X|Y}(y) = \varepsilon_{X|Y}$ for all
$y \in \set{Y}$; this implies that inequalities \eqref{th3-2} and \eqref{th3-4}
for $\alpha \in (1, \infty)$, and the opposite inequalities for $\alpha \in (0,1)$,
hold with equalities.
\end{proof}

\begin{lemma} \label{lemma: convexity/ concavity}
Let $\alpha \in (0,1) \cup (1, \infty)$ and $(\beta, \gamma) \in (0, \infty)^2$.
The function $f_{\alpha, \beta, \gamma} \colon [0,1] \to (0, \infty)$ defined
in \eqref{eq: f} is strictly convex  for $\alpha \in (1, \infty)$, and strictly
concave for $\alpha \in (0,1)$.
\end{lemma}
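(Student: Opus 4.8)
The plan is to determine the sign of the second derivative of $f_{\alpha,\beta,\gamma}$ on the open interval $(0,1)$: since $f_{\alpha,\beta,\gamma}$ is continuous on $[0,1]$, strict convexity (resp.\ concavity) on the whole interval follows once $f_{\alpha,\beta,\gamma}'' > 0$ (resp.\ $< 0$) throughout the interior. To organize the computation I would write $g(u) = \gamma(1-u)^{\alpha} + \beta u^{\alpha}$, so that $f_{\alpha,\beta,\gamma} = g^{1/\alpha}$, noting that $g > 0$ on $[0,1]$ because $\beta,\gamma > 0$.

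First I would differentiate twice and factor out the positive quantity, obtaining $f_{\alpha,\beta,\gamma}'' = \tfrac{1}{\alpha}\, g^{1/\alpha - 2}\bigl[\tfrac{1-\alpha}{\alpha}(g')^2 + g\,g''\bigr]$, so that on $(0,1)$ the sign of $f_{\alpha,\beta,\gamma}''$ is governed entirely by the bracketed expression (the prefactor $\tfrac{1}{\alpha}\,g^{1/\alpha-2}$ is strictly positive for $\alpha>0$). Substituting the explicit forms of $g'$ and $g''$, the bracket becomes $\alpha(\alpha-1)$ multiplied by a difference of two products in the variables $(1-u)$ and $u$.

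The crux — and the step I expect to be the main obstacle — is the algebraic simplification of that difference. Expanding both products, the terms proportional to $(1-u)^{2\alpha-2}$ and to $u^{2\alpha-2}$ cancel, and the surviving terms factor as $\beta\gamma\,(1-u)^{\alpha-2}u^{\alpha-2}$ times $\bigl((1-u)^2 + u^2 + 2(1-u)u\bigr) = \bigl((1-u)+u\bigr)^2$. The identity $(1-u)+u = 1$ is exactly what makes the whole expression collapse, leaving
\[
f_{\alpha,\beta,\gamma}''(u) = (\alpha-1)\,\beta\gamma\, g^{1/\alpha-2}\,(1-u)^{\alpha-2}\,u^{\alpha-2}.
\]

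Finally I would read off the sign: for $u\in(0,1)$ every factor other than $(\alpha-1)$ is strictly positive, so $f_{\alpha,\beta,\gamma}''$ has the sign of $\alpha-1$, yielding strict convexity when $\alpha\in(1,\infty)$ and strict concavity when $\alpha\in(0,1)$, as claimed. A calculus-free alternative is to recognize that $f_{\alpha,\beta,\gamma}(u) = \bigl\|(\gamma^{1/\alpha}(1-u),\,\beta^{1/\alpha}u)\bigr\|_{\alpha}$ is a weighted $\ell^{\alpha}$-norm of an affine function of $u$; Minkowski's inequality for $\alpha>1$ and its reversal for nonnegative vectors when $0<\alpha<1$ then give the convexity and concavity respectively, with strictness following because the two endpoint vectors fail to be proportional whenever $u_1\neq u_2$.
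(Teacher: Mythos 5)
Your proposal is correct and follows essentially the same route as the paper: the paper's proof simply states the second derivative $f''_{\alpha,\beta,\gamma}(u) = (\alpha-1)\beta\gamma\bigl(\gamma(1-u)^{\alpha}+\beta u^{\alpha}\bigr)^{\frac{1}{\alpha}-2}\bigl(u(1-u)\bigr)^{\alpha-2}$ and reads off its sign from $\alpha-1$, which is exactly the formula your algebraic simplification (correctly) produces. Your closing remark that one may instead invoke Minkowski's inequality and its reversal for the weighted $\ell^{\alpha}$-norm is a valid alternative, and your handling of the endpoints via continuity is if anything slightly more careful than the paper's claim that the formula holds for all $u\in[0,1]$.
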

\begin{proof}
The second derivative of $f_{\alpha, \beta, \gamma}(\cdot)$ in \eqref{eq: f}
is given by
\begin{align}
f''_{\alpha, \beta, \gamma}(u) = (\alpha-1) \beta \gamma
\Bigl( \gamma (1-u)^{\alpha} + \beta u^{\alpha} \Bigr)^{\frac1\alpha-2}
\bigl( u(1-u) \bigr)^{\alpha-2}
\end{align}
which is strictly negative if $\alpha \in (0,1)$, and strictly positive if
$\alpha \in (1, \infty)$ for any $u \in [0,1]$.
\end{proof}

\begin{remark}
From \eqref{d_1}, Fano's inequality (see \eqref{eq: Fano}--\eqref{eq2: Fano}) is
recovered by taking the limit $\alpha \to 1$ in \eqref{eq1: generalized Fano}.
\end{remark}

\begin{remark}
A pleasing feature of Theorem~\ref{theorem: generalized Fano inequality}
is that as $\alpha \to \infty$, the bound becomes tight.
To see this, we rewrite the identity in \eqref{varepsilon_X|Y} as
\begin{align}
H_{\infty}(X | Y)
\label{eq: haifa2}
&= \log M - \log \frac{1-\varepsilon_{X|Y}}{\tfrac1M} \\[0.1cm]
\label{eq: haifa3}
&= \log M - d_{\infty}\bigl( \varepsilon_{X|Y} \| 1-\tfrac1M \bigr)
\end{align}
where \eqref{eq: haifa3} follows from \eqref{d_infty} since
$\varepsilon_{X|Y} \leq 1 - \tfrac1M$ (see \eqref{20170904}--\eqref{UK}).
\end{remark}
\begin{remark} For $\alpha = 0$, \eqref{eq1: generalized Fano} also holds.
To see this, it is useful to distinguish two cases:
\begin{itemize}
\item
$\varepsilon_{X|Y} = 0$. Then, $H_0 (X | Y ) = 0$, and the right side of
\eqref{eq1: generalized Fano} is also equal to zero since $d_0(0 \| q ) = -\log(1 - q)$
for all $q \in [0,1)$.
\item
$\varepsilon_{X|Y} > 0$. Then, the right side of \eqref{eq1: generalized Fano}
is equal to $\log M$ (see \eqref{d_0}) which is indeed an upper bound to $H_0 (X|Y)$.
The condition for equality in this case is that there exists $y \in \set{Y}$
such that $P_{X|Y}( x|y) >0$ for all $x \in \set{X}$.
\end{itemize}
\end{remark}

\begin{remark}
Since $d_\alpha \bigl( \cdot \| 1-\tfrac1M \bigr) $ is monotonically decreasing
in $[0, 1-\tfrac1M]$, Theorem~\ref{theorem: generalized Fano inequality}
gives an implicit lower bound on $\varepsilon_{X|Y}$. Although, currently, there
is no counterpart to Shannon's explicit lower bound as a function of $H(X|Y)$
\cite{Shannon58}, we do have explicit lower bounds as a function of $H_\alpha(X|Y)$
in Section~\ref{subsec: explicit LBs}.
\end{remark}

\begin{remark}
If $X$ and $Y$ are random
variables taking values on a set of $M$ elements and $X$ is equiprobable,
then  \cite[Theorem~5.3]{PolyanskiyV10} shows that
\begin{align}  \label{eq: PV - alpha MI}
& I_{\alpha}(X; Y) \geq d_{\alpha} ( \varepsilon_{X|Y} \| 1-\tfrac1M),
\end{align}
which, together with \eqref{equiprobablerenyialpha}, yields \eqref{eq1: generalized Fano}.
However, note that in Theorem~\ref{theorem: generalized Fano inequality} we do not restrict
$X$ to be equiprobable.
\end{remark}

In information-theoretic problems, it is common to encounter the case in which
$X$ and $Y$ are actually vectors of dimension $n$. Fano's inequality ensures
that vanishing error probability implies vanishing normalized conditional
entropy as $n\to \infty$. As we see next, the picture with the Arimoto-R\'enyi
conditional entropy is more nuanced.
\begin{theorem}\label{thm:genhov}
Let $\{X_n\}$ be a sequence of random variables, with $X_n$ taking values
on $\set{X}_n$ for $n \in \naturals$, and assume that there exists
an integer $M \geq 2$ such that $|\set{X}_n| \leq M^n$ for all $n$.\footnote{Note
that this encompasses the conventional setting in which $\set{X}_n = \set{A}^n$.}
Let $\{Y_n\}$ be an arbitrary sequence of random variables, for which
$\varepsilon_{X_n | Y_n} \to 0$ as $n \to \infty$.
\begin{enumerate}[a)]
\item\label{thm:genhov:a}  If $\alpha \in (1, \infty]$, then $H_{\alpha}(X_n | Y_n) \to 0$;
\item\label{thm:genhov:a2} If $\alpha=1$, then $\frac1n \, H(X_n | Y_n) \to 0$;
\item\label{thm:genhov:b}  If $\alpha \in [0,1)$, then $\frac1{n} \, H_{\alpha}(X_n | Y_n)$
is upper bounded by $\log M$; nevertheless, it does not necessarily tend to 0.
\end{enumerate}
\end{theorem}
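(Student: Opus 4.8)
The plan is to treat the three ranges of $\alpha$ separately, leaning on the generalized Fano inequality for $\alpha\ge 1$ and on an explicit construction for $\alpha<1$. For part~\ref{thm:genhov:a} I would first dispose of $\alpha=\infty$ using the exact identity \eqref{varepsilon_X|Y}: since $H_{\infty}(X_n|Y_n)=\log\frac{1}{1-\varepsilon_{X_n|Y_n}}$ and $\varepsilon_{X_n|Y_n}\to 0$, continuity of the logarithm gives $H_{\infty}(X_n|Y_n)\to 0$. For a fixed $\alpha\in(1,\infty)$ I would apply Theorem~\ref{theorem: generalized Fano inequality} with $M_n:=|\set{X}_n|$ and rewrite its right-hand side through \eqref{useful equality} with $(\theta,s,t)=(M_n,\,M_n-1,\,\varepsilon_{X_n|Y_n})$, yielding
\begin{align}
0\le H_{\alpha}(X_n|Y_n)\le \frac{1}{1-\alpha}\,\log\Bigl((1-\varepsilon_{X_n|Y_n})^{\alpha}+(M_n-1)^{1-\alpha}\,\varepsilon_{X_n|Y_n}^{\alpha}\Bigr),
\end{align}
where the lower bound is $H_{\alpha}\ge H_{\infty}\ge 0$ (Proposition~\ref{prop1: mon}) and we may assume $M_n\ge 2$ (otherwise $H_\alpha(X_n|Y_n)=0$). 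The decisive observation is that $\tfrac{1}{1-\alpha}<0$ while $(M_n-1)^{1-\alpha}\le 1$ for $\alpha>1$, so the second term inside the logarithm is at most $\varepsilon_{X_n|Y_n}^{\alpha}\to 0$ \emph{regardless} of how fast $M_n$ grows; the argument of the logarithm thus tends to $1$ and the whole bound tends to $0$. This is why the growing alphabet is harmless precisely when $\alpha>1$.

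Part~\ref{thm:genhov:a2} is the classical Fano inequality \eqref{eq2: Fano}: using $M_n\le M^n$ I would bound $H(X_n|Y_n)\le h(\varepsilon_{X_n|Y_n})+\varepsilon_{X_n|Y_n}\log(M^n-1)$, divide by $n$, and let $n\to\infty$; since $h(\cdot)\le\log 2$ is bounded and $\varepsilon_{X_n|Y_n}\to 0$, both $\tfrac1n h(\varepsilon_{X_n|Y_n})$ and $\varepsilon_{X_n|Y_n}\log M$ vanish, giving $\tfrac1n H(X_n|Y_n)\to 0$.

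For part~\ref{thm:genhov:b} the uniform upper bound follows from monotonicity: for $\alpha\in(0,1)$, \eqref{P0N} gives $H_{\alpha}(X_n|Y_n)\le H_0(X_n|Y_n)$, and by \eqref{eq: cond. RE at 0} (and directly at $\alpha=0$) $H_0(X_n|Y_n)=\log\max_{y}|\supp P_{X_n|Y_n}(\cdot|y)|\le\log|\set{X}_n|\le n\log M$, so $\tfrac1n H_{\alpha}(X_n|Y_n)\le\log M$. The substance of the statement, and the step I expect to be the main obstacle, is exhibiting a sequence for which this quantity need not vanish. I would take $|\set{X}_n|=M^n$ with $Y_n$ independent of $X_n$ (so that $H_{\alpha}(X_n|Y_n)=H_{\alpha}(X_n)$ and $\varepsilon_{X_n|Y_n}=\varepsilon_{X_n}$), and let $P_{X_n}$ place mass $1-\tfrac1n$ on one mode and spread the remaining $\tfrac1n$ uniformly over the other $M^n-1$ points. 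Then $\varepsilon_{X_n}=\tfrac1n\to 0$, while for $\alpha\in(0,1)$, \eqref{eq: Renyi entropy} gives
\begin{align}
H_{\alpha}(X_n)=\frac{1}{1-\alpha}\,\log\Bigl((1-\tfrac1n)^{\alpha}+(M^n-1)^{1-\alpha}\,n^{-\alpha}\Bigr),
\end{align}
whose dominant term is $(M^n-1)^{1-\alpha}\,n^{-\alpha}$ because $1-\alpha>0$; dividing by $n$ then yields $\tfrac1n H_{\alpha}(X_n)\to\log M\ne 0$ (and at $\alpha=0$, $\tfrac1n H_0(X_n)=\log M$ directly from \eqref{eq: RE of zero order}). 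The same construction simultaneously shows the $\log M$ bound of the first half of part~\ref{thm:genhov:b} is asymptotically tight. The only care needed is to confirm that $1-\tfrac1n$ is indeed the largest mass for all large $n$, which is immediate.
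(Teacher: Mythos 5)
Your proof is correct. For parts a), b), and the $\log M$ bound in part c), your route is essentially the paper's: the same application of Theorem~\ref{theorem: generalized Fano inequality} combined with \eqref{useful equality} (the paper plugs in $M^n$ directly while you use $M_n=|\set{X}_n|$, and you dispatch $\alpha=\infty$ via the identity \eqref{varepsilon_X|Y} rather than via Proposition~\ref{prop1: mon} --- immaterial differences), then classical Fano for $\alpha=1$, and monotonicity for the uniform bound $\log M$. Where you genuinely depart from the paper is the counterexample in part c). You place mass $1-\tfrac1n$ on a mode and spread $\tfrac1n$ uniformly over the remaining $M^n-1$ points; this construction does not depend on $\alpha$, your computation is correct (the dominant term gives $H_\alpha(X_n)\approx \log(M^n-1)-\tfrac{\alpha}{1-\alpha}\log n$, hence $\tfrac1n H_{\alpha}(X_n)\to\log M$), and it has the added payoff of showing that the bound $\log M$ in part c) is asymptotically tight. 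The paper's Remark~\ref{noto} instead takes $P_{X_n}=\bigl[1-\beta^{-n},\ \beta^{-n}/(M^n-1),\ \ldots\bigr]$ with $\beta=M^{\frac{1-\alpha}{2\alpha}}>1$, which yields the smaller limit $\tfrac12\log M$ but makes a stronger point: there the error probability $\beta^{-n}$ decays \emph{exponentially} fast and still $\tfrac1n H_{\alpha}(X_n|Y_n)\not\to 0$, whereas your $\varepsilon_{X_n|Y_n}=\tfrac1n$ decays only polynomially, so your example leaves open whether exponentially vanishing error probability could force the normalized entropy to vanish. Both constructions fully establish the theorem as stated.
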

\begin{proof}
\begin{enumerate}[a)]
\item
For $\alpha \in (1, \infty)$,
\begin{align}
H_{\alpha}(X_n | Y_n) & \leq
n \log M - d_{\alpha}\bigl( \varepsilon_{X_n | Y_n} \, \| \, 1-M^{-n} \bigr) \label{z0} \\
&= \frac1{1-\alpha} \, \log \left( \varepsilon_{X^n|Y^n}^{\alpha} (M^n-1)^{1-\alpha} +
\bigl(1-\varepsilon_{X_n | Y_n}\bigr)^{\alpha} \right), \label{z1}
\end{align}
where \eqref{z0} follows from \eqref{eq1: generalized Fano} and $|\set{X}_n| \leq M^n$;
\eqref{z1} holds due to \eqref{useful equality} by setting the parameters $\theta = M^n$,
$s = M^n-1$ and $t = \varepsilon_{X^n | Y^n}$;
hence, $H_{\alpha}(X_n | Y_n) \to 0$ since $\alpha > 1$ and $\varepsilon_{X_n | Y_n} \to 0$.
Item~\ref{thm:genhov:a})
also holds for $\alpha = \infty$ since $H_{\alpha}(X_n | Y_n)$ is monotonically decreasing in
$\alpha$ throughout the real line (Proposition~\ref{prop1: mon}).
\item
For $\alpha=1$, from Fano's inequality,
\begin{align}
\tfrac1n \, H(X_n | Y_n) &\leq
\varepsilon_{X_n | Y_n} \, \log M + \tfrac1n \, h\bigl(\varepsilon_{X_n | Y_n}\bigr) \to 0.
\end{align}
Hence, not only does $\tfrac1n \, H(X_n | Y_n) \to 0$ if $\varepsilon_{X_n | Y_n} = o(1)$
but also $ H(X_n | Y_n) \to 0$ if $\varepsilon_{X_n | Y_n} = o\left( \frac1n \right)$.
\item Proposition~\ref{prop1: mon} implies that if $\alpha > 0$, then
\begin{align}
H_{\alpha}(X_n | Y_n) &\leq H_\alpha(X_n) \\
&\leq \log |\set{X}_n| \\
&\leq n \log M.
\end{align}
\end{enumerate}
\end{proof}
A counterexample where $\varepsilon_{X_n | Y_n} \to 0$ exponentially fast, and yet
$\frac1{n} H_{\alpha}(X_n | Y_n) \not \to 0$ for all $\alpha \in [0,1)$ is given as follows.
\begin{remark}\label{noto}
In contrast to the conventional case of $\alpha=1$, it is not possible to strengthen
Theorem~\ref{thm:genhov}\ref{thm:genhov:b}) to claim that $\varepsilon_{X_n | Y_n} \to 0$
implies that $ \frac1n H_{\alpha}(X_n | Y_n) \to 0$ for $\alpha \in [0,1).$
By Proposition~\ref{prop1: mon}, it is sufficient to consider the following counterexample:
fix any $\alpha \in (0,1)$, and let $Y_n$ be deterministic, $\set{X}_n = \{1, \ldots, M^n\}$,
and $X_n \sim P_{X_n}$ where
\begin{align} \label{eq: dist. X_n}
P_{X_n} = \left[1 - \beta^{-n}, ~ \frac{\beta^{-n}}{M^n-1}, \ldots, \frac{\beta^{-n}}{M^n-1}\right]
\end{align}
with
\begin{align} \label{eq: def beta}
\beta = M^{\frac{1-\alpha}{2 \alpha}} > 1.
\end{align}
Then,
\begin{align}
\varepsilon_{X_n | Y_n} = \beta^{-n} \to 0,
\end{align}
and
\begin{align}
H_{\alpha}(X_n | Y_n) &= H_{\alpha}(X_n) \label{ITA1} \\[0.1cm]
&= \frac1{1-\alpha} \, \log \left( \bigl(1-\beta^{-n}\bigr)^{\alpha}
+ (M^n-1) \, \left(\frac{\beta^{-n}}{M^n-1}\right)^{\alpha} \right) \label{ITA2} \\[0.1cm]
&= \frac1{1-\alpha} \, \log \left( \bigl(1-\beta^{-n}\bigr)^{\alpha}
+ (M^n-1)^{1-\alpha} \, M^{-\frac{n(1-\alpha)}{2}} \right) \label{ITA3} \\[0.1cm]
&= \tfrac12 \, n \log M + \frac1{1-\alpha} \, \log \Bigl( (1-\beta^{-n})^{\alpha}
M^{-\frac{n(1-\alpha)}{2}} + (1-M^{-n})^{1-\alpha} \Bigr) \label{ITA4}
\end{align}
where \eqref{ITA1} holds since $Y_n$ is deterministic; \eqref{ITA2} follows from
\eqref{eq: Renyi entropy} and \eqref{eq: dist. X_n}; \eqref{ITA3} holds due to
\eqref{eq: def beta}. Consequently, since $\alpha \in (0,1)$, $M \geq 2$, and $\beta>1$,
the second term in the right side of \eqref{ITA4}
tends to~0. In conclusion, normalizing \eqref{ITA1}--\eqref{ITA4}
by $n$, and letting $n \to \infty$ yields
\begin{align}
\lim_{n \to \infty} \frac1n \, H_{\alpha}(X_n | Y_n) = \tfrac12 \log M.
\end{align}
\end{remark}

\begin{remark}
Theorem~\ref{thm:genhov}\ref{thm:genhov:a2}) is due to \cite[Theorem~15]{HoS-IT10}.
Furthermore, \cite[Example~2]{HoS-IT10} shows that Theorem~\ref{thm:genhov}\ref{thm:genhov:a2})
cannot be strengthened to $H(X_n | Y_n) \to 0$, in contrast to the case where
$\alpha \in (1, \infty]$ in Theorem~\ref{thm:genhov}\ref{thm:genhov:a}).
\end{remark}

\subsection{Explicit lower bounds on $\varepsilon_{X|Y}$} \label{subsec: explicit LBs}
The results in Section \ref{subsection: Generalized Fano inequality}
yield implicit lower bounds on $\varepsilon_{X|Y}$ as a function of the Arimoto-R\'enyi
conditional entropy. In this section, we obtain several explicit bounds.
As the following result shows, Theorem~\ref{theorem: generalized Fano inequality}
readily results in explicit lower bounds on $\varepsilon_{X|Y}$ as a function of
$H_{\frac12}(X|Y)$ and of $H_{2}(X|Y)$.

\vspace*{0.1cm}
\begin{theorem}
\label{proposition: closed-form lower bound on epsilon}
Let $X$ be a discrete random variable taking $M \geq 2$
possible values. Then,
\begin{align}
\label{eq: lower bound 1 on epsilon}
\varepsilon_{X|Y} & \geq \left(1-\frac1M\right) \, \frac1{\xi_1}
\left(1-\sqrt{\frac{\xi_1-1}{M-1}} \, \right)^2, \\[0.1cm]
\label{eq: lower bound 2 on epsilon}
\varepsilon_{X|Y} & \geq \left(1-\frac1M\right)
\left(1-\sqrt{\frac{\xi_2-1}{M-1}} \, \right)
\end{align}
where
\begin{align}
\label{eq: xi1}
& \xi_1 = M \exp\bigl(-H_{\frac12}(X|Y)\bigr), \\
\label{eq: xi2}
& \xi_2 = M \exp\bigl(-H_2(X|Y)\bigr).
\end{align}
\end{theorem}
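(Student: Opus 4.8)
The plan is to specialize the generalized Fano inequality of Theorem~\ref{theorem: generalized Fano inequality} to the orders $\alpha=\tfrac12$ and $\alpha=2$ and then to invert the two resulting scalar relations so as to isolate $\varepsilon_{X|Y}$. The first step is to make the right-hand side of \eqref{eq1: generalized Fano} explicit: applying identity \eqref{useful equality} with $(\theta,s,t)=(M,M-1,\varepsilon_{X|Y})$ rewrites $\log M - d_{\alpha}\bigl(\varepsilon_{X|Y}\|1-\tfrac1M\bigr)$ as $\tfrac1{1-\alpha}\log\bigl(\varepsilon_{X|Y}^{\alpha}(M-1)^{1-\alpha}+(1-\varepsilon_{X|Y})^{\alpha}\bigr)$. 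Setting $\alpha=\tfrac12$, where $\tfrac1{1-\alpha}=2$, and using $\exp\bigl(H_{1/2}(X|Y)\bigr)=M/\xi_1$ from \eqref{eq: xi1}, the inequality becomes
\begin{align}
\frac{M}{\xi_1} \leq \left(\sqrt{\varepsilon_{X|Y}(M-1)}+\sqrt{1-\varepsilon_{X|Y}}\,\right)^{2}.
\end{align}
Setting $\alpha=2$, where $\tfrac1{1-\alpha}=-1$, and using $\exp\bigl(-H_2(X|Y)\bigr)=\xi_2/M$ from \eqref{eq: xi2}, it becomes
\begin{align}
\frac{\xi_2}{M} \geq \frac{\varepsilon_{X|Y}^{2}}{M-1}+\bigl(1-\varepsilon_{X|Y}\bigr)^{2}.
\end{align}

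For the inversion I would solve each inequality for $\varepsilon_{X|Y}$. In the case $\alpha=\tfrac12$, I would set $u=\sqrt{\varepsilon_{X|Y}}$ and $c=\sqrt{M/\xi_1}$, so the relation at the threshold reads $u\sqrt{M-1}+\sqrt{1-u^2}=c$; isolating $\sqrt{1-u^2}$ and squaring gives the quadratic $Mu^2-2c\sqrt{M-1}\,u+(c^2-1)=0$, whose roots, after substituting $c^2=M/\xi_1$, are $u_\pm=\bigl(\sqrt{M-1}\pm\sqrt{\xi_1-1}\bigr)/\sqrt{M\xi_1}$. Squaring the smaller root $u_-$ and factoring $M-1$ out of the numerator yields \eqref{eq: lower bound 1 on epsilon}. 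In the case $\alpha=2$, expanding and collecting terms produces the quadratic $\tfrac{M}{M-1}\varepsilon_{X|Y}^{2}-2\varepsilon_{X|Y}+\bigl(1-\tfrac{\xi_2}{M}\bigr)=0$, whose smaller root is $\bigl(M-1-\sqrt{(M-1)(\xi_2-1)}\bigr)/M$; this is exactly \eqref{eq: lower bound 2 on epsilon}.

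The step requiring genuine care — though it is not deep — is the root selection and checking that the bounds are well-defined. Since $H_\alpha(X|Y)\in[0,\log M]$ (by Proposition~\ref{prop1: mon} together with $|\set{X}|=M$), both $\xi_1$ and $\xi_2$ lie in $[1,M]$, so each radicand $\tfrac{\xi_i-1}{M-1}$ lies in $[0,1]$ and the bounds are real and nonnegative. For the root selection I would invoke that $\varepsilon_{X|Y}\leq 1-\tfrac1M$ always holds (see \eqref{UK}) and that on $[0,1-\tfrac1M]$ the map $\varepsilon\mapsto\sqrt{\varepsilon(M-1)}+\sqrt{1-\varepsilon}$ is strictly increasing while $\varepsilon\mapsto\tfrac{\varepsilon^2}{M-1}+(1-\varepsilon)^2$ is strictly decreasing, each attaining its extremum exactly at $\varepsilon=1-\tfrac1M$. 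Hence each Fano inequality is equivalent to inverting a strictly monotone function of $\varepsilon_{X|Y}$ over $[0,1-\tfrac1M]$, which selects precisely the smaller quadratic root and discards the larger one (which is either spurious, introduced by the squaring, or exceeds $1-\tfrac1M$ and is thus inadmissible). The remaining work is routine algebraic simplification.
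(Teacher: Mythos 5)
Your proposal is correct and follows essentially the same route as the paper's proof: specialize the generalized Fano inequality \eqref{eq1: generalized Fano} to $\alpha=\tfrac12$ and $\alpha=2$, rewrite via \eqref{useful equality} with $(\theta,s,t)=(M,M-1,\varepsilon_{X|Y})$, and invert the resulting quadratics, with the same root formulas emerging. The only (cosmetic) difference is in handling the spurious root introduced by squaring: the paper splits into cases according to the sign of $z-\sqrt{M-1}\,v$ and shows the bound holds in either case, whereas you argue via strict monotonicity of $\varepsilon\mapsto\sqrt{\varepsilon(M-1)}+\sqrt{1-\varepsilon}$ and $\varepsilon\mapsto\tfrac{\varepsilon^2}{M-1}+(1-\varepsilon)^2$ on $\bigl[0,1-\tfrac1M\bigr]$ together with $\varepsilon_{X|Y}\leq 1-\tfrac1M$; both are valid.
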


\begin{proof}
Since $0 \leq H_2(X|Y) \leq H_{\frac12}(X|Y) \leq \log M$,
\eqref{eq: xi1}--\eqref{eq: xi2} imply
that $1 \leq \xi_1 \leq \xi_2 \leq M$.
To prove \eqref{eq: lower bound 1 on epsilon},
setting $\alpha=\tfrac12$ in \eqref{eq1: generalized Fano}
and using  \eqref{useful equality} with
$(\theta, s, t) = (M, M-1, \varepsilon_{X|Y})$ we obtain
\begin{align} \label{eq: from generalized Fano's ineq.}
\sqrt{1-\varepsilon_{X|Y}} + \sqrt{(M-1) \varepsilon_{X|Y}}
\geq \exp\left(\tfrac12 \, H_{\frac12}(X|Y)\right).
\end{align}
Substituting
\begin{align}
\label{eq: substitute1}
& v = \sqrt{\varepsilon_{X|Y}}, \\
\label{eq: substitute2}
& z = \exp\left(\tfrac12 \, H_{\frac12}(X|Y)\right)
\end{align}
yields the inequality
\begin{align} \label{eq: intermediate}
& \sqrt{1-v^2} \geq z - \sqrt{M-1} \, v.
\end{align}
If the right side of \eqref{eq: intermediate} is non-negative, then
\eqref{eq: intermediate} is transformed to the following quadratic
inequality in $v$:
\begin{align}
\label{eq: quadratic ineq.}
M v^2 - 2 \sqrt{M-1} \, zv + (z^2-1) \leq 0
\end{align}
which yields
\begin{align} \label{eq for v}
v \geq \frac1M \left( \sqrt{M-1} \, z - \sqrt{M-z^2} \right).
\end{align}
If, however, the right side of \eqref{eq: intermediate} is negative then
\begin{align} \label{eq: 2nd case}
v > \frac{z}{\sqrt{M-1}}
\end{align}
which implies the satisfiability of \eqref{eq for v} also in the latter case.
Hence, \eqref{eq for v} always holds. In view of \eqref{eq: xi1},
\eqref{eq: substitute1}, \eqref{eq: substitute2}, and since $\xi_1 \in [1,M]$,
it can be verified that the right side of \eqref{eq for v} is non-negative.
Squaring both sides of \eqref{eq for v}, and using \eqref{eq: xi1},
\eqref{eq: substitute1} and \eqref{eq: substitute2} give
\eqref{eq: lower bound 1 on epsilon}.

Similarly, setting $\alpha=2$ in \eqref{eq1: generalized Fano} and
using \eqref{useful equality} with
$(\theta, s, t) = (M, M-1, \varepsilon_{X|Y})$ yield a quadratic
inequality in $\varepsilon_{X|Y}$, from which
\eqref{eq: lower bound 2 on epsilon} follows.
\end{proof}

\begin{remark}
Following up on this work, Renes \cite{Renes-arXiv17} has generalized
\eqref{eq: lower bound 1 on epsilon} to the quantum setting.
\end{remark}

\begin{remark} \label{remark: Devijver74}
The corollary to Theorem~\ref{theorem: generalized Fano inequality}
in \eqref{eq: lower bound 2 on epsilon} is equivalent
to \cite[Theorem~3]{Devijver74}.
\end{remark}

\begin{remark} \label{remark: Kailath67}
Consider the special case where $X$ is an equiprobable binary
random variable, and $Y$ is a discrete random variable which
takes values on a set $\set{Y}$. Following the notation in
\cite[(7)]{Kailath67}, let $\rho \in [0, 1]$ denote the
Bhattacharyya coefficient
\begin{align}
\label{eq0: BC}
\rho & = \sum_{y \in \set{Y}} \sqrt{P_{Y|X}(y|0) P_{Y|X}(y|1)} \\
\label{eq1: BC}
& = 2 \sum_{y \in \set{Y}} P_Y(y) \sqrt{P_{X|Y}(0|y) \, P_{X|Y}(1|y)}
\end{align}
where \eqref{eq1: BC} holds due to Bayes' rule which implies that
$P_{Y|X}(y|x) = 2 P_{X|Y}(x|y) P_Y(y)$
for all $x \in \{0,1\}$ and $y \in \set{Y}$.
From \eqref{eq1: Arimoto - cond. RE} and \eqref{eq1: BC}, we obtain
\begin{align}
\label{eq1: Bhattacharyya}
& H_{\frac12}(X|Y) = \log(1+\rho).
\end{align}
Since $X$ is a binary random variable, it follows from
\eqref{eq: xi1} and \eqref{eq1: Bhattacharyya} that
$\xi_1 = \frac2{1+\rho}$; hence, the lower bound on the
minimal error probability in \eqref{eq: lower bound 1 on epsilon}
is given by
\begin{align} \label{eq2: Bhattacharyya}
\varepsilon_{X|Y} \geq \tfrac12 \left(1-\sqrt{1-\rho^2}\right)
\end{align}
recovering the bound in \cite[(49)]{Kailath67} (see also \cite{Toussiant72}).
\end{remark}

\begin{remark}
The lower bounds on $\varepsilon_{X|Y}$ in \eqref{eq: lower bound 1 on epsilon}
and \eqref{eq: lower bound 2 on epsilon} depend on $H_{\alpha}(X|Y)$ with $\alpha = \tfrac12$
and $\alpha=2$, respectively; due to their dependence on different orders $\alpha$, none of
these bounds is superseded by the other, and they both prove to be useful in view of
their relation to the conditional Bayesian distance in \cite[Definition~2]{Devijver74}
and the Bhattacharyya coefficient (see Remarks~\ref{remark: Devijver74} and~\ref{remark: Kailath67}).
\end{remark}

\begin{remark}
Taking the limit $M \to \infty$ in the right side of
\eqref{eq: lower bound 2 on epsilon} yields
\begin{align} \label{eq: Devijver - Th. 2}
\varepsilon_{X|Y} \geq 1 - \exp\left(-\tfrac12 \, H_2(X|Y)\right)
\end{align}
which is equivalent to \cite[(8)]{Vajda68} (see also
\cite[Theorem~2]{Devijver74}), and its loosening yields
\cite[Corollary~1]{Devijver74}.
Theorem~\ref{theorem: LB via RevHolderI} (see also
Theorem~\ref{theorem: LB via RevHolderI - list decoding})
tightens \eqref{eq: Devijver - Th. 2}.
\end{remark}

\begin{remark}
Arimoto \cite{Arimoto71} introduced a different generalization of entropy and
conditional entropy parameterized by a continuously differentiable function
$\mathtt{f} \colon (0,1] \to [0, \infty)$ satisfying $\mathtt{f} ( 1) = 0$:
\begin{align}\label{Arimoto-f-entropy}
H_{\mathtt{f}}(X)
&= \inf_Y \mathbb{E} \left[ \mathtt{f}( P_Y (X))\right], \\
H_{\mathtt{f}}(X|Y)
&= \mathbb{E} \left[ H_{\mathtt{f}}( P_{X|Y} (\cdot | Y)) \right],
\label{Arimoto-f-conditional-entropy}
\end{align}
where the infimum is over all the distributions defined on the same set as $X$.
Arimoto \cite[Theorem~3]{Arimoto71} went on to show the following generalization of
Fano's inequality:
\begin{align} \label{arimoto4.1}
H_{\mathtt{f}} (X|Y) \leq \min_{\theta \in (0,1)}
\left\{
\mathbb{P} [ X = Y ] \, \mathtt{f} ( 1 -\theta) +
\mathbb{P} [ X \neq Y ] \, \mathtt{f} \left(\frac{\theta}{M-1}\right)
\right\}.
\end{align}
A functional dependence can be established between the R\'enyi
entropy and $H_{\mathtt{f}}(X)$ for a certain choice of $\mathtt{f}$
(see \cite[Example~2]{Arimoto71}); in view of \eqref{eq2: Arimoto - cond. RE},
the Arimoto-R\'enyi conditional entropy can be expressed in terms of
$H_{\mathtt{f}}(X|Y) $, although the analysis of generalizing Fano's inequality
with the Arimoto-R\'enyi conditional entropy becomes rather convoluted following
this approach.
\end{remark}

For convenience, we assume throughout the rest of this subsection that
\begin{align} \label{tuesday}
P_{X|Y}(x|y) > 0, \quad
(x,y) \in \set{X} \times \set{Y}.
\end{align}
The following bound, which is a special case of
Theorem~\ref{theorem: LB via HolderI - list decoding}, involves
the Arimoto-R\'enyi conditional entropy of negative orders,
and bears some similarity to Arimoto's converse for channel coding
\cite{Arimoto73}.
\begin{theorem}
\label{theorem: LB via HolderI}
Let $P_{XY}$ be a probability measure defined on
$\set{X} \times \set{Y}$ with $|\set{X}|=M < \infty$, which satisfies \eqref{tuesday}.
For all $\alpha \in (-\infty, 0)$,
\begin{align} \label{eq: LB via HolderI}
\varepsilon_{X|Y} \geq \exp \left( \frac{1-\alpha}{\alpha} \,
\Bigl[ H_{\alpha}(X|Y) - \log(M-1) \Bigr] \right).
\end{align}
\end{theorem}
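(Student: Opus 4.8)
The plan is to reduce the claim to a pointwise (conditional on $Y=y$) inequality and then average. Exponentiating the definition \eqref{eq1: Arimoto - cond. RE}, which remains valid for $\alpha<0$, gives $\exp\!\bigl(\tfrac{1-\alpha}{\alpha}\,H_\alpha(X|Y)\bigr)=\mathbb E\bigl[\|P_{X|Y}(\cdot|Y)\|_\alpha\bigr]$, where $\|p\|_\alpha=\bigl(\sum_x p^\alpha(x)\bigr)^{1/\alpha}$. Since the factor $(M-1)^{-(1-\alpha)/\alpha}$ does not depend on $y$, the target \eqref{eq: LB via HolderI} is equivalent to
\[
\varepsilon_{X|Y}\;\ge\;(M-1)^{-\frac{1-\alpha}{\alpha}}\,\mathbb E\bigl[\|P_{X|Y}(\cdot|Y)\|_\alpha\bigr].
\]
Using $\varepsilon_{X|Y}=\mathbb E\bigl[1-\max_x P_{X|Y}(x|Y)\bigr]$ from \eqref{eq1: cond. epsilon} and writing $\varepsilon_{X|Y}(y)=1-\max_x P_{X|Y}(x|y)$, it therefore suffices to establish, for each $y$, the pointwise bound $\varepsilon_{X|Y}(y)\ge (M-1)^{-(1-\alpha)/\alpha}\,\|P_{X|Y}(\cdot|y)\|_\alpha$, after which averaging over $Y\sim P_Y$ recovers the display.

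To prove the pointwise bound, fix $y$, abbreviate $p_x=P_{X|Y}(x|y)$ (all strictly positive by \eqref{tuesday}), and let $x^\ast$ attain the maximum, so that $\varepsilon_{X|Y}(y)=\sum_{x\ne x^\ast}p_x$. I would first discard the dominant atom: since $p_{x^\ast}^\alpha>0$ and $t\mapsto t^{1/\alpha}$ is \emph{decreasing} for $\alpha<0$, one has $\|P_{X|Y}(\cdot|y)\|_\alpha=\bigl(\sum_x p_x^\alpha\bigr)^{1/\alpha}\le\bigl(\sum_{x\ne x^\ast}p_x^\alpha\bigr)^{1/\alpha}$. It then remains to compare the $\alpha$-sum of the $M-1$ masses $\{p_x:x\ne x^\ast\}$ with their ordinary sum, which is exactly the generalized-mean inequality $M_\alpha\le M_1$ for these $M-1$ positive numbers, or equivalently Jensen's inequality for the convex map $t\mapsto t^\alpha$ (convex since $\alpha(\alpha-1)>0$ when $\alpha<0$): $\tfrac1{M-1}\sum_{x\ne x^\ast}p_x^\alpha\ge\bigl(\tfrac1{M-1}\sum_{x\ne x^\ast}p_x\bigr)^\alpha$. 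Raising to the negative power $1/\alpha$ (which again reverses the inequality) yields $\bigl(\sum_{x\ne x^\ast}p_x^\alpha\bigr)^{1/\alpha}\le (M-1)^{\frac{1-\alpha}{\alpha}}\sum_{x\ne x^\ast}p_x=(M-1)^{\frac{1-\alpha}{\alpha}}\,\varepsilon_{X|Y}(y)$; chaining the two estimates gives the desired pointwise bound.

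Averaging the pointwise inequality over $Y\sim P_Y$ and invoking $\varepsilon_{X|Y}=\mathbb E[\varepsilon_{X|Y}(Y)]$ reproduces the displayed equivalent form, and rewriting $(M-1)^{-(1-\alpha)/\alpha}\exp\!\bigl(\tfrac{1-\alpha}{\alpha}H_\alpha(X|Y)\bigr)=\exp\!\bigl(\tfrac{1-\alpha}{\alpha}[H_\alpha(X|Y)-\log(M-1)]\bigr)$ finishes the argument. The calculations are short; the only real care needed is the bookkeeping of inequality directions under the negative order $\alpha$ — both $t\mapsto t^{1/\alpha}$ and the final exponentiation by $1/\alpha$ reverse inequalities, while $t\mapsto t^\alpha$ is convex rather than concave — together with the observation that the positivity hypothesis \eqref{tuesday} is precisely what keeps every $\alpha$-sum finite and makes the power-mean step legitimate. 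I expect the main conceptual step to be recognizing that the right move is to drop the single MAP atom and apply the power-mean inequality to the remaining $M-1$ masses, since this is exactly what produces the factor $(M-1)$ in \eqref{eq: LB via HolderI}.
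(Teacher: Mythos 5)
Your proof is correct, but it follows a genuinely different route from the paper's. The paper never proves Theorem~\ref{theorem: LB via HolderI} by a standalone pointwise argument: it notes that the bound is equivalent to \cite[(23)]{Tirza08} and otherwise obtains it as the case $L=1$ of the list-decoding generalization, Theorem~\ref{theorem: LB via HolderI - list decoding}, whose proof applies H\"{o}lder's inequality to the joint expectation $\mathbb{E}[\xi(X,Y)\,u(X,Y)]$, with $\xi$ the error indicator and $u$ an auxiliary function built from the conditional sum $\bigl(\sum_{x'} P_{X|Y}^{1/(1-\beta)}(x'|y)\bigr)^{1-\beta}\big/P_{X|Y}(x|y)$; evaluating the two resulting expectations and substituting $\alpha = \frac{1}{1-\beta}$ with $\beta>1$ yields the bound. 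You instead work conditionally on each $y$: discard the MAP atom (which, for $\alpha<0$, can only increase $\bigl(\sum_x p_x^\alpha\bigr)^{1/\alpha}$), apply the power-mean/Jensen inequality for the convex map $t \mapsto t^\alpha$ to the remaining $M-1$ masses, and then average over $Y$ --- an averaging step that is legitimate precisely because your pointwise bound is linear in $\varepsilon_{X|Y}(y)$ and in $\|P_{X|Y}(\cdot|y)\|_\alpha$, so no further convexity argument is needed; your bookkeeping of the inequality reversals under the negative order is also correct. What each approach buys: yours is more elementary and self-contained (no auxiliary function, no H\"{o}lder on the joint space), and it extends verbatim to list decoding --- drop the $L$ atoms of $\set{L}(y)$ instead of the single MAP atom, giving $M-L$ in place of $M-1$ --- so it would equally prove Theorem~\ref{theorem: LB via HolderI - list decoding}. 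The paper's H\"{o}lder formulation keeps the derivation at the level of joint expectations in the style of Arimoto's channel-coding converse \cite{Arimoto73}, which is the form that connects directly to the existing bound in \cite{Tirza08}.
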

\par
It can be verified that the bound in \cite[(23)]{Tirza08} is equivalent to
\eqref{eq: LB via HolderI}. A different approach can be found in the proof of
Theorem~\ref{theorem: LB via HolderI - list decoding}.

\begin{remark}
By the assumption in \eqref{tuesday}, it follows from \eqref{P0N} that, for
$\alpha \in (-\infty, 0)$, the quantity in the exponent in the right side of
\eqref{eq: LB via HolderI} satisfies
\begin{align}
H_{\alpha}(X|Y) - \log(M-1)
& \geq H_0(X|Y) - \log(M-1) \\[0.1cm]
& = \log \frac{M}{M-1}.
\end{align}
Hence, by letting $\alpha \to 0$, the bound in \eqref{eq: LB via HolderI}
is trivial; while by letting $\alpha \to -\infty$, it follows from
\eqref{eq: cond. RE of order -infinity}, \eqref{tuesday} and
\eqref{eq: LB via HolderI} that
\begin{align}
\varepsilon_{X|Y} \geq (M-1) \, \expectation
\Bigl[ \min_{x \in \set{X}} P_{X|Y}(x|Y) \Bigr].
\end{align}
The  $\alpha \in (-\infty, 0)$ that results in the tightest bound
in \eqref{eq: LB via HolderI} is examined numerically in Example~\ref{example: negative alpha},
which illustrates the utility of Arimoto-R\'enyi conditional entropies of negative orders.
\end{remark}
\begin{remark}
For binary hypothesis testing, the lower bound on $\varepsilon_{X|Y}$ in
\eqref{eq: LB via HolderI} is asymptotically tight by letting $\alpha \to -\infty$
since, in view of \eqref{inf-inf} and \eqref{varepsilon_X|Y},
\begin{align} \label{eq: Galil3}
\varepsilon_{X|Y} = \exp \bigl( -H_{-\infty}(X|Y) \bigr).
\end{align}
\end{remark}

Next, we provide a lower bound depending on the Arimoto-R\'enyi conditional entropy of orders
greater than~1. A more general version of this result is given in
Theorem~\ref{theorem: LB via RevHolderI - list decoding} by relying on our generalization of
Fano's inequality for list decoding.
\begin{theorem}
\label{theorem: LB via RevHolderI}
Let $P_{XY}$ be a probability measure defined on
$\set{X} \times \set{Y}$ which satisfies \eqref{tuesday}, with $\set{X}$ being finite
or countably infinite. For all $\alpha \in (1, \infty)$
\begin{align} \label{eq: LB via RevHolderI}
\varepsilon_{X|Y} \geq 1 - \exp \left(\frac{1-\alpha}{\alpha} \; H_{\alpha}(X|Y) \right).
\end{align}
\end{theorem}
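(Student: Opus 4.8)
The plan is to recognize that, after exponentiation, \eqref{eq: LB via RevHolderI} reduces to a pointwise comparison between the $\ell_\alpha$-norm and the $\ell_\infty$-norm of each conditional distribution $P_{X|Y}(\cdot\,|\,y)$. Rearranging \eqref{eq: LB via RevHolderI}, the claim is equivalent to
\begin{align}
1 - \varepsilon_{X|Y} \leq \exp\left(\frac{1-\alpha}{\alpha}\, H_\alpha(X|Y)\right).
\end{align}
By \eqref{eq1: Arimoto - cond. RE}, the right-hand side equals exactly $\mathbb{E}\bigl[\| P_{X|Y}(\cdot\,|\,Y) \|_\alpha\bigr]$, since the prefactor $\tfrac{1-\alpha}{\alpha}$ cancels the factor $\tfrac{\alpha}{1-\alpha}$ in the definition; meanwhile \eqref{eq1: cond. epsilon} gives $1-\varepsilon_{X|Y}=\mathbb{E}\bigl[\max_{x\in\set{X}}P_{X|Y}(x|Y)\bigr]$. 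Hence it suffices to prove
\begin{align}
\mathbb{E}\Bigl[\max_{x\in\set{X}}P_{X|Y}(x|Y)\Bigr] \leq \mathbb{E}\bigl[\| P_{X|Y}(\cdot\,|\,Y) \|_\alpha\bigr].
\end{align}

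First I would establish this inequality pointwise in $y$. For every $y$ and every $\alpha>1$, the monotonicity of $\ell_p$-norms in $p$ gives
\begin{align}
\max_{x\in\set{X}}P_{X|Y}(x|y) = \bigl\| P_{X|Y}(\cdot\,|\,y) \bigr\|_\infty \leq \bigl\| P_{X|Y}(\cdot\,|\,y) \bigr\|_\alpha,
\end{align}
because $\sum_{x}P_{X|Y}^{\alpha}(x|y)\geq \max_x P_{X|Y}^{\alpha}(x|y)$ and $t\mapsto t^{1/\alpha}$ is increasing. Taking expectations over $Y\sim P_Y$ yields the displayed inequality and hence the theorem. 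Under the assumption \eqref{tuesday}, the maximum over a (possibly countably infinite) $\set{X}$ is attained since the masses are positive and sum to one, so $\|\cdot\|_\infty$ is well defined.

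An equivalent and even shorter route is to invoke Proposition~\ref{prop2: mon}: since $\tfrac{\alpha-1}{\alpha}\,H_\alpha(X|Y)$ is nondecreasing on $(0,\infty)$ and tends, as $\alpha\to\infty$, to $H_\infty(X|Y)$ by continuity of $H_\alpha(X|Y)$ in $\alpha$, we obtain $\tfrac{\alpha-1}{\alpha}\,H_\alpha(X|Y)\leq H_\infty(X|Y)$ for every $\alpha\in(1,\infty)$; exponentiating and using \eqref{varepsilon_X|Y}, namely $\exp\bigl(-H_\infty(X|Y)\bigr)=1-\varepsilon_{X|Y}$, recovers the bound and simultaneously shows that it becomes tight as $\alpha\to\infty$. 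I expect no genuine obstacle: the only points requiring care are the bookkeeping that cancels $\tfrac{1-\alpha}{\alpha}$ against the definition in \eqref{eq1: Arimoto - cond. RE}, and the remark that the maximizing mass exists in the countably infinite case. The inequality itself is just norm monotonicity, which is precisely the content of Proposition~\ref{prop2: mon}.
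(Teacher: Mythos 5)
Your proposal is correct and, in its ``shorter route,'' is exactly the paper's own proof: the paper invokes the monotonicity of $\tfrac{\alpha-1}{\alpha}\,H_\alpha(X|Y)$ (Proposition~\ref{prop2: mon}, via \eqref{mon}) to get $\tfrac{\alpha-1}{\alpha}\,H_\alpha(X|Y) \leq H_\infty(X|Y)$ and then applies \eqref{varepsilon_X|Y}. Your first route is not genuinely different either---it simply unwinds the norm monotonicity underlying Proposition~\ref{prop2: mon} into a pointwise comparison $\| P_{X|Y}(\cdot|y)\|_\infty \leq \| P_{X|Y}(\cdot|y)\|_\alpha$ followed by taking expectations, which is a sound and slightly more self-contained presentation of the same idea.
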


\begin{proof}
In view of the monotonicity property for positive orders in \eqref{mon}, we obtain that
\begin{align}  \label{eq: LB on H infinity}
H_{\infty}(X|Y) &\geq \frac{\alpha-1}{\alpha} \; H_{\alpha}(X|Y).
\end{align}
and the desired result follows in view of \eqref{varepsilon_X|Y}. (Note that for
$\alpha \in (0,1]$, the right side of \eqref{eq: LB via RevHolderI} is nonpositive.)
\end{proof}

\begin{remark}
The implicit lower bound on $\varepsilon_{X|Y}$ given in \eqref{th3-5} (same as
\eqref{eq1: generalized Fano}) is tighter than the explicit lower bound in \eqref{eq: LB via RevHolderI}.
\end{remark}

\begin{remark}
The lower bounds on $\varepsilon_{X|Y}$ in Theorems~\ref{theorem: generalized Fano inequality}
and~\ref{theorem: LB via RevHolderI}, which hold for positive orders of $H_{\alpha}(X|Y)$, are both
asymptotically tight by letting $\alpha \to \infty$. In contrast, the lower bound on $\varepsilon_{X|Y}$
of Theorem~\ref{theorem: LB via HolderI}, which holds for negative orders of $H_{\alpha}(X|Y)$, is
not asymptotically tight by letting $\alpha \to -\infty$ (unless $X$ is a binary random variable).
\end{remark}

In the following example, the lower bounds on $\varepsilon_{X|Y}$ in
Theorems~\ref{theorem: generalized Fano inequality} and~\ref{theorem: LB via RevHolderI}
are examined numerically in their common range of $\alpha \in (1, \infty)$.
\begin{example}\label{example:45b}
Let $X$ and $Y$ be random variables defined on the set $\set{X} = \{1, 2, 3\}$,
and let
\begin{align} \label{P}
\bigl[P_{XY}(x,y)\bigr]_{(x,y) \in \set{X}^2}
= \frac1{45} \left( \begin{array}{ccc}
8  & 1  & 6 \\
3  & 5  & 7 \\
4  & 9  & 2
\end{array}
\right).
\end{align}
It can be verified that
$\varepsilon_{X|Y} = \tfrac{21}{45} \approx 0.4667$.
Note that although in this example the bound in \eqref{eq: LB via RevHolderI}
is only slightly looser than the bound in \eqref{eq1: generalized Fano} for
moderate values of $\alpha > 1$ (see Table~\ref{table1})\footnote{Recall
that for $\alpha = 2$, \eqref{eq1: generalized Fano} admits the
explicit expression in \eqref{eq: lower bound 2 on epsilon}.}, both
are indeed asymptotically tight as $\alpha \to \infty$; furthermore,
\eqref{eq: LB via RevHolderI} has the advantage of providing a
closed-form lower bound on $\varepsilon_{X|Y}$ as a function of
$H_{\alpha}(X|Y)$ for $\alpha \in (1, \infty)$.

\begin{table}[h]
\renewcommand{\arraystretch}{1.5}
\begin{center}
\begin{tabular}{||r|c|c||}
\hline
$\alpha$ & \eqref{eq1: generalized Fano} &\eqref{eq: LB via RevHolderI}  \\
\hline
   2 & 0.4247 & 0.3508  \\
   4 & 0.4480 & 0.4406  \\
   6 & 0.4573 & 0.4562  \\
   8 & 0.4620 & 0.4613  \\
  10 & 0.4640 & 0.4635  \\
  50 & 0.4667 & 0.4667  \\
\hline
\end{tabular}
\vspace*{0.5cm}
\caption{\label{table1} Lower bounds on $\varepsilon_{X|Y}$  in
\eqref{eq1: generalized Fano} and \eqref{eq: LB via RevHolderI}
for Example~\ref{example:45b}.}
\end{center}
\end{table}
\end{example}

\begin{example} \label{example: negative alpha}
Let $X$ and $Y$ be random variables defined on the set $\set{X} = \{1, 2, 3, 4\}$,
and let
\begin{align} \label{P6X6}
\bigl[P_{XY}(x,y)\bigr]_{(x,y) \in \set{X}^2}
= \frac{1}{400}\left( \begin{array}{cccc}
10  &  38  &  10  &  26 \\
32  &  20  &  44  &  20 \\
10  &  29  &  10  &  35 \\
41  &  20  &  35  &  20
\end{array}
\right).
\end{align}
In this case $\varepsilon_{X|Y} = \tfrac{121}{200} = 0.6050$, and the tightest
lower bound in \eqref{eq: LB via HolderI} for $\alpha \in (-\infty, 0)$ is equal to
0.4877 (obtained at $ \alpha = - 2.531$).
Although $\varepsilon_{X|Y}$ can be calculated exactly when $H_{\infty}(X|Y)$ is
known (see \eqref{varepsilon_X|Y}), this example illustrates that
conditional Arimoto-R\'{e}nyi entropies of negative orders are useful in the
sense that \eqref{eq: LB via HolderI} gives an informative lower bound on
$\varepsilon_{X|Y}$ by knowing $H_{\alpha}(X|Y)$ for a negative $\alpha$.
\end{example}

\subsection{List decoding}
In this section we consider the case where the decision rule outputs a list of
choices. The extension of Fano's inequality to list decoding was initiated in
\cite[Section~5]{Ahlswede_Gacs_Korner} (see also \cite[Appendix~3.E]{FnT14}).
It is useful for proving converse results in conjunction with the blowing-up
lemma \cite[Lemma~1.5.4]{Csiszar_Korner}.
The main idea of the successful combination of these two tools is that, given an
arbitrary code, one can blow-up the decoding sets in such a way that the probability
of decoding error can be as small as desired for sufficiently large blocklengths;
since the blown-up decoding sets are no longer disjoint, the resulting setup is
a list decoder with subexponential list size.
\par
A generalization  of Fano's inequality for list decoding of size $L$  is
\cite{verdubook}\footnote{See \cite[Lemma~1]{Kim_Sutivong_Cover}
for a weaker version of \eqref{fanoML}.}
\begin{align}\label{fanoML}
H (X | Y ) \leq \log M - d\bigl( P_{\set{L}} \| 1-\tfrac{L}{M} \bigr),
\end{align}
where $P_{\mathcal{L}}$ denotes the probability of $X$ not being in the list.
As we noted before,
averaging a conditional version of \eqref{eq: generalized UB on RE}
with respect to the observation is not viable in the case of $H_\alpha(X|Y)$
with $\alpha \neq 1$ (see \eqref{eq1: Arimoto - cond. RE}). A pleasing
generalization of \eqref{fanoML} to the Arimoto-R\'enyi conditional entropy
does indeed hold as the following result shows.

\begin{theorem}
\label{theorem: generalized Fano-Renyi - list decoding}
Let $P_{XY}$ be a probability measure defined on
$\set{X} \times \set{Y}$ with $|\set{X}|=M$. Consider
a decision rule\footnote{$\binom{\set{X}}{L}$ stands for the set of  all the subsets
of $\set{X}$ with cardinality $L$, with $L \leq | \set{X}|$.}
$\mathcal{L} \colon \set{Y} \to \binom{\set{X}}{L}$, and denote
the decoding error probability by
\begin{align}
P_{\mathcal{L}} = \prob \bigl[ X \notin \mathcal{L}(Y) \bigr].
\end{align}
Then, for all $\alpha \in (0,1) \cup(1,\infty)$,
\begin{align} \label{eq: generalized Fano-Renyi - list decoding}
H_{\alpha}(X | Y) & \leq \log M - d_{\alpha}\bigl( P_{\mathcal{L}}
\| 1-\tfrac{L}{M} \bigr) \\
\label{eq2: generalized Fano-Renyi - list decoding}
& = \frac1{1-\alpha} \;
\log \Bigl( L^{1-\alpha} \, \bigl(1-P_{{\mathcal{L}}}\bigr)^{\alpha}
+ (M-L)^{1-\alpha} \, P_{{\mathcal{L}}}^{\alpha} \Bigr)
\end{align}
with equality in \eqref{eq: generalized Fano-Renyi - list decoding}
if and only if
\begin{align} \label{eq: tight Fano-Renyi - list decoding}
P_{X|Y}(x|y) =
\begin{dcases}
\frac{P_{\mathcal{L}}}{M-L}, & \quad x \notin \set{L}(y) \\[0.2cm]
\frac{1-P_{\mathcal{L}}}{L}, & \quad x \in \set{L}(y).
\end{dcases}
\end{align}
\end{theorem}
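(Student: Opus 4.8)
The plan is to mirror the proof of Theorem~\ref{theorem: generalized Fano inequality}, replacing the pointwise bound coming from Corollary~\ref{corollary: UB2 of RE} by the list-refined version in Corollary~\ref{corollary: generalized UB of RE}. First I would introduce the conditional list-error probability given the observation, $P_{\mathcal{L}}(y) = \prob[X \notin \mathcal{L}(y) \mid Y = y]$, which satisfies $P_{\mathcal{L}} = \mathbb{E}[P_{\mathcal{L}}(Y)]$. Applying Corollary~\ref{corollary: generalized UB of RE} to the conditional distribution $P_{X|Y=y}$ with the subset $\mathcal{L}(y) \in \binom{\set{X}}{L}$ yields, for each $y$,
\begin{align}
H_{\alpha}(X|Y=y) \leq \frac{1}{1-\alpha}\,\log\Bigl( L^{1-\alpha}\bigl(1-P_{\mathcal{L}}(y)\bigr)^{\alpha} + (M-L)^{1-\alpha}\, P_{\mathcal{L}}^{\alpha}(y) \Bigr).
\end{align}
In the notation of \eqref{eq: f}, the argument of the logarithm is $f_{\alpha,\beta,\gamma}^{\alpha}\bigl(P_{\mathcal{L}}(y)\bigr)$ with $\beta = (M-L)^{1-\alpha}$ and $\gamma = L^{1-\alpha}$.

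Then I would transfer this pointwise estimate to $H_{\alpha}(X|Y)$ through the representation \eqref{eq2: Arimoto - cond. RE}, which gives $\exp\bigl(\tfrac{1-\alpha}{\alpha} H_{\alpha}(X|Y)\bigr) = \sum_{y} P_Y(y)\,\exp\bigl(\tfrac{1-\alpha}{\alpha} H_{\alpha}(X|Y=y)\bigr)$. For $\alpha > 1$ the factor $\tfrac{1-\alpha}{\alpha}$ is negative, so the pointwise bound becomes $\exp\bigl(\tfrac{1-\alpha}{\alpha}H_{\alpha}(X|Y=y)\bigr) \geq f_{\alpha,\beta,\gamma}\bigl(P_{\mathcal{L}}(y)\bigr)$, and after averaging, Jensen's inequality together with the strict convexity of $f_{\alpha,\beta,\gamma}$ for $\alpha > 1$ (Lemma~\ref{lemma: convexity/ concavity}) yields $\mathbb{E}\bigl[f_{\alpha,\beta,\gamma}(P_{\mathcal{L}}(Y))\bigr] \geq f_{\alpha,\beta,\gamma}(P_{\mathcal{L}})$. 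Taking logarithms and multiplying by $\tfrac{\alpha}{1-\alpha} < 0$ flips the inequality and produces \eqref{eq2: generalized Fano-Renyi - list decoding}. For $\alpha \in (0,1)$ each of these three inequalities reverses---the exponent factor is positive, $f_{\alpha,\beta,\gamma}$ is strictly concave, and $\tfrac{\alpha}{1-\alpha}>0$---so the same conclusion \eqref{eq2: generalized Fano-Renyi - list decoding} is reached. Finally, identity \eqref{useful equality} with $(\theta, s, t) = (M, M-L, P_{\mathcal{L}})$ rewrites the right side as $\log M - d_{\alpha}\bigl(P_{\mathcal{L}} \| 1-\tfrac{L}{M}\bigr)$, giving \eqref{eq: generalized Fano-Renyi - list decoding}.

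For the equality characterization I would track when both inequalities used above are simultaneously tight. Given the strict convexity/concavity of $f_{\alpha,\beta,\gamma}$, Jensen's inequality is an equality if and only if $P_{\mathcal{L}}(Y)$ is almost surely constant, i.e. $P_{\mathcal{L}}(y) = P_{\mathcal{L}}$ for every $y$ with $P_Y(y) > 0$; the pointwise bound from Corollary~\ref{corollary: generalized UB of RE} is tight if and only if, for each such $y$, $P_{X|Y=y}$ is equiprobable on $\mathcal{L}(y)$ and on its complement. Imposing both conditions forces $P_{X|Y}(x|y) = \tfrac{1-P_{\mathcal{L}}}{L}$ for $x \in \mathcal{L}(y)$ and $P_{X|Y}(x|y) = \tfrac{P_{\mathcal{L}}}{M-L}$ for $x \notin \mathcal{L}(y)$, which is exactly \eqref{eq: tight Fano-Renyi - list decoding}; conversely, this form makes $P_{\mathcal{L}}(y) \equiv P_{\mathcal{L}}$ and the conditional law uniform on and off the list, so both steps hold with equality.

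I do not anticipate a deep obstacle, since the required building blocks---the list-refined upper bound of Corollary~\ref{corollary: generalized UB of RE}, the convexity/concavity of $f_{\alpha,\beta,\gamma}$, and the representation \eqref{eq2: Arimoto - cond. RE}---are already in place. The part demanding the most care is the bookkeeping of the sign of $\tfrac{1-\alpha}{\alpha}$ across the two regimes $\alpha > 1$ and $\alpha \in (0,1)$, ensuring that the three reversals (exponent monotonicity, the Jensen direction, and the final multiplication by $\tfrac{\alpha}{1-\alpha}$) compose to the same one-sided bound; and, secondarily, the correct merging of the two equality conditions into \eqref{eq: tight Fano-Renyi - list decoding}.
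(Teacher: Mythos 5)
Your proposal is correct and takes essentially the same route as the paper, whose own proof of Theorem~\ref{theorem: generalized Fano-Renyi - list decoding} is exactly a description of this modification: repeat the proof of Theorem~\ref{theorem: generalized Fano inequality} with the conditional version of Corollary~\ref{corollary: generalized UB of RE} replacing that of Corollary~\ref{corollary: UB2 of RE}, the parameters $\beta=(M-L)^{1-\alpha}$ and $\gamma=L^{1-\alpha}$ in $f_{\alpha,\beta,\gamma}$, the identity \eqref{useful equality} with $(\theta,s,t)=(M,\,M-L,\,P_{\set{L}})$, and the equality condition of Corollary~\ref{corollary: generalized UB of RE} applied given each observation. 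Your sign bookkeeping across $\alpha\in(0,1)$ and $\alpha\in(1,\infty)$, and your merging of the Jensen and pointwise equality conditions into \eqref{eq: tight Fano-Renyi - list decoding}, are consistent with the paper's argument.
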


\begin{proof}
Instead of giving a standalone proof, for brevity, we explain
the differences between this and the proof of
Theorem~\ref{theorem: generalized Fano inequality}:
\begin{itemize}
\item
Instead of the conditional version of \eqref{eq: specialized UB on RE},
we use a conditional version of \eqref{eq: gen. UB on RE} given the observation $Y=y$;
\item
The choices of the arguments of the function $f_{\alpha, \beta, \gamma}$ in
\eqref{th3-2}--\eqref{th3-4} namely,
$\beta = (M-1)^{1-\alpha}$ and $\gamma=1$,
are replaced by $\beta = (M-L)^{1-\alpha}$ and $\gamma = L^{1-\alpha}$;
\item \eqref{eq2: generalized Fano-Renyi - list decoding}
follows from \eqref{useful equality} with $(\theta, s, t) = (M, M-L, P_{\set{L}})$;
\item
Equality in \eqref{eq: generalized Fano-Renyi - list decoding} holds if and only if
the condition for equality in the statement of Corollary~\ref{corollary: generalized UB of RE},
conditioned on the observation, is satisfied; the latter implies that, given $Y=y$, $X$ is
equiprobable on both $\set{L}(y)$ and $\set{L}^{\mathrm{c}}(y)$ for all $y \in \set{Y}$.
\end{itemize}
\end{proof}

Next, we give the fixed list-size generalization of Theorem~\ref{theorem: LB via HolderI}.
\begin{theorem}
\label{theorem: LB via HolderI - list decoding}
Let $P_{XY}$ be a probability measure defined on
$\set{X} \times \set{Y}$ with $|\set{X}|=M < \infty$, which satisfies \eqref{tuesday},
and let $\mathcal{L} \colon \set{Y} \to \binom{\set{X}}{L}$.
Then, for all $\alpha \in (-\infty, 0)$, the probability that the decoding list does not include the
correct decision satisfies
\begin{align} \label{eq: LB via HolderI - list decoding}
P_{\mathcal{L}} \geq \exp \left( \frac{1-\alpha}{\alpha} \,
\Bigl[ H_{\alpha}(X|Y) - \log(M-L) \Bigr] \right).
\end{align}
\end{theorem}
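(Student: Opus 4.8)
The plan is to give a standalone argument that works directly from the definition \eqref{eq1: Arimoto - cond. RE} of $H_\alpha(X|Y)$ for negative $\alpha$, rather than trying to deduce the statement from the $L=1$ case (Theorem~\ref{theorem: LB via HolderI}). First I would introduce the conditional list-error probability $P_{\mathcal{L}}(y) = \sum_{x \notin \mathcal{L}(y)} P_{X|Y}(x|y)$, so that $P_{\mathcal{L}} = \mathbb{E}[P_{\mathcal{L}}(Y)]$, and then bound, separately for each fixed $y$, the inner sum $\sum_{x \in \set{X}} P_{X|Y}^\alpha(x|y)$ appearing inside the expectation that defines $H_\alpha(X|Y)$.

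The crux is a single pointwise inequality. Fix $y$ and restrict attention to the complement of the list, a set of exactly $M-L$ indices. Since $t \mapsto t^\alpha$ is strictly convex on $(0,\infty)$ for $\alpha \in (-\infty,0)$, Jensen's inequality applied to the uniform average over these $M-L$ indices (equivalently, the monotonicity of power means, i.e. a reverse-H\"older step, which accounts for the theorem's name) gives $\frac{1}{M-L}\sum_{x \notin \mathcal{L}(y)} P_{X|Y}^\alpha(x|y) \ge \bigl(\frac{1}{M-L}\sum_{x\notin\mathcal{L}(y)} P_{X|Y}(x|y)\bigr)^\alpha = \bigl(\tfrac{P_{\mathcal{L}}(y)}{M-L}\bigr)^\alpha$, so that $\sum_{x \notin \mathcal{L}(y)} P_{X|Y}^\alpha(x|y) \ge (M-L)^{1-\alpha}\, P_{\mathcal{L}}(y)^\alpha$. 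Here the positivity assumption \eqref{tuesday} is essential: it makes every $P_{X|Y}^\alpha(x|y)$ finite, and since all terms are then strictly positive I may add back the $L$ list terms to conclude $\sum_{x \in \set{X}} P_{X|Y}^\alpha(x|y) \ge (M-L)^{1-\alpha}\,P_{\mathcal{L}}(y)^\alpha$ without weakening the bound in the wrong direction.

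From there the computation is routine, and the only real pitfall I foresee is the sign bookkeeping. Raising the last display to the power $\tfrac1\alpha < 0$ reverses the inequality, yielding $\bigl(\sum_{x} P_{X|Y}^\alpha(x|y)\bigr)^{1/\alpha} \le (M-L)^{(1-\alpha)/\alpha}\,P_{\mathcal{L}}(y)$ for every $y$; taking expectations over $Y$ (a monotone operation) and invoking \eqref{eq1: Arimoto - cond. RE} turns the left-hand side into $\exp\bigl(\tfrac{1-\alpha}{\alpha}H_\alpha(X|Y)\bigr)$ and the right-hand side into $(M-L)^{(1-\alpha)/\alpha}\,P_{\mathcal{L}}$. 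Taking logarithms and rearranging, while recalling that $\tfrac{1-\alpha}{\alpha}<0$ for $\alpha<0$, isolates $\log P_{\mathcal{L}}$ and, after exponentiating, produces exactly \eqref{eq: LB via HolderI - list decoding}. The main obstacle is thus not any deep estimate but the consistent tracking of inequality directions through the three sign-reversing operations ($t\mapsto t^\alpha$, $t \mapsto t^{1/\alpha}$, and the negative prefactor $\tfrac{1-\alpha}{\alpha}$); specializing to $L=1$ then recovers Theorem~\ref{theorem: LB via HolderI}.
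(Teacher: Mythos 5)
Your proof is correct, and it takes a genuinely different route from the paper's. The paper proves Theorem~\ref{theorem: LB via HolderI - list decoding} globally: it introduces the indicator $\xi(x,y)=1\{x\notin\mathcal{L}(y)\}$ in \eqref{eq: xi}, applies H\"older's inequality \eqref{ISR2} over the joint distribution $P_{XY}$ with a carefully engineered auxiliary function $u(x,y)$ in \eqref{ISR4}, evaluates the two resulting expectations in \eqref{ISR5} and \eqref{ISR6}, and finally substitutes $\alpha=\frac{1}{1-\beta}$. Your argument instead works conditionally on $Y=y$: convexity of $t\mapsto t^{\alpha}$ for $\alpha<0$ (Jensen, equivalently the power-mean inequality $M_\alpha\leq M_1$) applied to the uniform average over the $M-L$ non-list symbols gives $\sum_{x\notin\mathcal{L}(y)}P_{X|Y}^{\alpha}(x|y)\geq (M-L)^{1-\alpha}P_{\mathcal{L}}^{\alpha}(y)$, the list terms are added back (here \eqref{tuesday} guarantees positivity and finiteness), the map $t\mapsto t^{1/\alpha}$ reverses the inequality, and averaging over $Y$ together with \eqref{eq1: Arimoto - cond. RE} yields \eqref{eq: LB via HolderI - list decoding}. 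In fact, both proofs arrive at the identical intermediate inequality
\begin{align}
P_{\mathcal{L}} \;\geq\; (M-L)^{\frac{\alpha-1}{\alpha}}\,
\mathbb{E}\left[ \left( \sum_{x \in \set{X}} P_{X|Y}^{\alpha}(x|Y)\right)^{\frac{1}{\alpha}}\right],
\end{align}
so the difference is purely in how that inequality is established. What your route buys is economy and transparency: there is no auxiliary function to guess, the role of assumption \eqref{tuesday} is laid bare, and the sign bookkeeping is localized to two explicit reversals. What the paper's route buys is its connection to the H\"older-based machinery of Arimoto's channel-coding converse (hence the theorem's label) and a template in which the indicator $\xi$ can be replaced by other test functions; but as a proof of this particular statement, your conditional Jensen argument is complete and arguably cleaner.
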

\begin{proof}
Let $\xi \colon \set{X} \times \set{Y} \to \{0,1\}$ be given by the indicator function
\begin{align} \label{eq: xi}
\xi(x,y) = 1\bigl\{x \notin \mathcal{L}(y)\bigr\},
\end{align}
Let $u \colon \set{X} \times \set{Y} \to [0, \infty)$, and $\beta>1$.
By H\"older's inequality,
\begin{align}   \label{ISR2}
& \mathbb{E} \bigl[ \xi(X,Y) \, u(X,Y) \bigr]
\leq \mathbb{E}^{\frac1\beta} \bigl[ \xi^\beta(X,Y) \bigr] \;
\mathbb{E}^{\frac{\beta-1}{\beta}} \bigl[ u^{\frac{\beta}{\beta-1}}(X,Y) \bigr].
\end{align}
Therefore,
\begin{align}
P_{\set{L}} &=
\mathbb{E} \bigl[ \xi^{\beta}(X,Y) \bigr] \\
&\geq \mathbb{E}^{\beta} \bigl[ \xi(X,Y) \, u(X,Y) \bigr] \;
\mathbb{E}^{1-\beta}  \bigl[ u^{\frac{\beta}{\beta-1}}(X,Y) \bigr]. \label{ISR3}
\end{align}
Under the assumption in \eqref{tuesday}, we specialize \eqref{ISR3} to
\begin{align}
u(x,y) = \frac1{P_{X|Y}(x|y)} \, \left( \sum_{x' \in \set{X}}
P_{X|Y}^{\frac1{1-\beta}}(x'|y) \right)^{1-\beta}
\label{ISR4}
\end{align}
for all $(x,y) \in \set{X} \times \set{Y}$.
From \eqref{eq1: Arimoto - cond. RE}, \eqref{eq: xi} and \eqref{ISR4}, we have
\begin{align}
\mathbb{E}\bigl[ \xi(X,Y) \, u(X,Y) \bigr]
& = \mathbb{E} \Bigl[ \mathbb{E}\bigl[ \xi(X,Y) \, u(X,Y) \, \big| \, Y \bigr] \Bigr] \\
& = \mathbb{E} \left[ \sum_{x \in \set{X}} P_{X|Y}(x|Y) \, u(x,Y) \, \xi(x,Y) \right] \\
& = \mathbb{E} \left[ \left( \sum_{x' \in \set{X}} P_{X|Y}^{\frac1{1-\beta}}(x'|Y) \right)^{1-\beta}
\sum_{x \in \set{X}} \xi(x,Y) \right] \\
& = \mathbb{E} \left[ \left( \sum_{x' \in \set{X}} P_{X|Y}^{\frac1{1-\beta}}(x'|Y)
\right)^{1-\beta} \, |\mathcal{L}^{\mathrm{c}}(Y)|   \right] \\
&= (M-L) \exp\left( -\beta \, H_{\frac1{1-\beta}}(X|Y) \right)  \label{ISR5}
\end{align}
and
\begin{align}
\mathbb{E}\bigl[ u^{\frac{\beta}{\beta-1}}(X,Y) \bigr]
&= \mathbb{E} \Bigl[ \mathbb{E}\bigl[ u^{\frac{\beta}{\beta-1}}(X,Y) \, \big| \, Y \bigr] \Bigr] \\
&= \mathbb{E} \left[ \sum_{x \in \set{X}}  P_{X|Y}(x|Y) \, u^{\frac{\beta}{\beta-1}}(x,Y) \right] \\
&= \mathbb{E} \left[ \left( \sum_{x \in \set{X}} P_{X|Y}^{\frac1{1-\beta}}(x|Y) \right)^{1-\beta} \right] \\
&= \exp\bigl( -\beta \, H_{\frac1{1-\beta}}(X|Y) \bigr).   \label{ISR6}
\end{align}
Assembling \eqref{ISR3}, \eqref{ISR5} and \eqref{ISR6} and substituting
$\alpha = \frac1{1-\beta}$ results  in \eqref{eq: LB via HolderI - list decoding}.
\end{proof}

The fixed list-size generalization of Theorem~\ref{theorem: LB via RevHolderI}
is the following.
\begin{theorem}
\label{theorem: LB via RevHolderI - list decoding}
Let $P_{XY}$ be a probability measure defined on
$\set{X} \times \set{Y}$ which satisfies \eqref{tuesday}, with $\set{X}$ being finite
or countably infinite, and let $\mathcal{L} \colon \set{Y} \to \binom{\set{X}}{L}$.
Then, for all $\alpha \in (1, \infty)$,
\begin{align} \label{eq: LB via RevHolderI - list decoding}
P_{\mathcal{L}} \geq 1 - \exp \left(\frac{1-\alpha}{\alpha}
\, \Bigl[ H_{\alpha}(X|Y) - \log L \Bigr] \right).
\end{align}
\end{theorem}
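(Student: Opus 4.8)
The plan is to upper bound the probability that the list is correct, namely
$1-P_{\mathcal{L}} = \prob[X \in \mathcal{L}(Y)] = \expectation\bigl[\sum_{x \in \mathcal{L}(Y)} P_{X|Y}(x|Y)\bigr]$, by a constant multiple of the quantity $\expectation\bigl[\|P_{X|Y}(\cdot|Y)\|_{\alpha}\bigr]$ that appears in the definition \eqref{eq1: Arimoto - cond. RE} of the Arimoto--R\'enyi conditional entropy, and then to rearrange. This mirrors the H\"older-based argument used in the proof of Theorem~\ref{theorem: LB via HolderI - list decoding}, except that here the order $\alpha>1$ makes the inequality point in the direction needed. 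An appealing feature of this route is that it applies verbatim whether $\set{X}$ is finite or countably infinite, whereas the approach below through Theorem~\ref{theorem: generalized Fano-Renyi - list decoding} is most naturally carried out for finite $M$.

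First I would fix $y \in \set{Y}$ and apply H\"older's inequality to the inner sum over the list, with the conjugate exponents $\alpha$ and $\tfrac{\alpha}{\alpha-1}$ (both exceeding $1$ because $\alpha > 1$). Writing each summand as $P_{X|Y}(x|y)\cdot 1$ and using $|\mathcal{L}(y)|=L$ gives
\begin{align}
\sum_{x \in \mathcal{L}(y)} P_{X|Y}(x|y) \leq L^{\frac{\alpha-1}{\alpha}} \Bigl(\sum_{x \in \mathcal{L}(y)} P_{X|Y}^{\alpha}(x|y)\Bigr)^{\frac1\alpha} \leq L^{\frac{\alpha-1}{\alpha}} \, \|P_{X|Y}(\cdot|y)\|_{\alpha},
\end{align}
where the last step enlarges the set of summation from $\mathcal{L}(y)$ to all of $\set{X}$, which is legitimate since $\alpha>0$; for $\alpha>1$ the $\alpha$-norm $\|P_{X|Y}(\cdot|y)\|_{\alpha}$ is finite even when $\set{X}$ is countably infinite. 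This is the one place where a little care is needed: the direction of H\"older's inequality, the bookkeeping of the exponent $\tfrac{\alpha-1}{\alpha}$, and the monotonicity of the partial $\alpha$-norm in its index set.

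Next I would take expectations over $Y\sim P_Y$ and invoke \eqref{eq1: Arimoto - cond. RE}, which identifies $\expectation\bigl[\|P_{X|Y}(\cdot|Y)\|_{\alpha}\bigr] = \exp\bigl(\tfrac{1-\alpha}{\alpha}H_{\alpha}(X|Y)\bigr)$. Combining this with the displayed bound yields
\begin{align}
1-P_{\mathcal{L}} \leq L^{\frac{\alpha-1}{\alpha}}\,\exp\Bigl(\tfrac{1-\alpha}{\alpha}H_{\alpha}(X|Y)\Bigr) = \exp\Bigl(\tfrac{1-\alpha}{\alpha}\bigl[H_{\alpha}(X|Y)-\log L\bigr]\Bigr),
\end{align}
and rearranging produces exactly \eqref{eq: LB via RevHolderI - list decoding}. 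Specializing to $L=1$ recovers Theorem~\ref{theorem: LB via RevHolderI}.

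Finally, I would note that, at least when $M<\infty$, the same bound can be read off directly from the generalized Fano--R\'enyi inequality for list decoding: discarding the nonnegative term $(M-L)^{1-\alpha}P_{\mathcal{L}}^{\alpha}$ on the right side of \eqref{eq2: generalized Fano-Renyi - list decoding} and using $\tfrac1{1-\alpha}<0$ for $\alpha>1$ gives $H_{\alpha}(X|Y) \leq \log L + \tfrac{\alpha}{1-\alpha}\log(1-P_{\mathcal{L}})$, which rearranges to the claim; the countably infinite case then follows by letting $M\to\infty$, since the bound is independent of $M$. I do not expect a genuine obstacle in either route; the only thing to watch is the sign of $1-\alpha$ when clearing logarithms, which is negative throughout the range $\alpha\in(1,\infty)$ and therefore reverses the inequalities at each such step.
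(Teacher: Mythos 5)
Your primary argument is correct, and it is a genuinely different route from the paper's. The paper proves this theorem in two lines as a corollary of its generalized Fano--R\'enyi inequality for list decoding: starting from \eqref{eq2: generalized Fano-Renyi - list decoding}, it discards the nonnegative term $(M-L)^{1-\alpha}P_{\mathcal{L}}^{\alpha}$ inside the logarithm (which, since $\tfrac1{1-\alpha}<0$ for $\alpha>1$, only weakens the bound) and rearranges --- i.e., exactly the ``secondary'' route you sketch at the end. Your main proof instead works directly from the definition \eqref{eq1: Arimoto - cond. RE}: H\"older's inequality on the list $\mathcal{L}(y)$ with exponents $\alpha$ and $\tfrac{\alpha}{\alpha-1}$, monotonicity of the partial $\alpha$-norm in its index set, and the identity $\expectation\bigl[\|P_{X|Y}(\cdot|Y)\|_{\alpha}\bigr]=\exp\bigl(\tfrac{1-\alpha}{\alpha}H_{\alpha}(X|Y)\bigr)$; every step checks out, including the exponent bookkeeping $L^{\frac{\alpha-1}{\alpha}}\exp\bigl(\tfrac{1-\alpha}{\alpha}H_{\alpha}(X|Y)\bigr)=\exp\bigl(\tfrac{1-\alpha}{\alpha}[H_{\alpha}(X|Y)-\log L]\bigr)$. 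What each approach buys: the paper's derivation is essentially free once Theorem~\ref{theorem: generalized Fano-Renyi - list decoding} is available, and it makes transparent the paper's remark that the implicit bound \eqref{eq: generalized Fano-Renyi - list decoding} is tighter than the explicit bound \eqref{eq: LB via RevHolderI - list decoding}; your argument is self-contained and more elementary (it bypasses the convexity lemma and Jensen machinery underlying Theorem~\ref{theorem: generalized Fano-Renyi - list decoding}), it never uses the positivity assumption \eqref{tuesday}, and --- as you correctly observe --- it covers countably infinite $\set{X}$ verbatim, whereas deducing the infinite case from Theorem~\ref{theorem: generalized Fano-Renyi - list decoding} (stated for $|\set{X}|=M$) requires a limiting argument that the paper leaves implicit and that your phrase ``letting $M\to\infty$'' glosses over in the same way; your H\"older route removes that loose end entirely.
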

\begin{proof}
From \eqref{eq2: generalized Fano-Renyi - list decoding}, for all $\alpha \in (1, \infty)$,
\begin{align}
H_{\alpha}(X|Y) &\leq \frac1{1-\alpha} \, \log\bigl(L^{1-\alpha} \, (1-P_{\mathcal{L}})^\alpha \bigr) \\
&= \log L + \frac{\alpha}{1-\alpha} \, \log(1-P_{\mathcal{L}})
\end{align}
which gives the bound in \eqref{eq: LB via RevHolderI - list decoding}.
\end{proof}

\begin{remark}
The implicit lower bound on $\varepsilon_{X|Y}$ given by the generalized Fano's inequality
in \eqref{eq: generalized Fano-Renyi - list decoding} is tighter than the explicit
lower bound in \eqref{eq: LB via RevHolderI - list decoding} (this can be verified from the
proof of Theorem~\ref{theorem: LB via RevHolderI - list decoding}).
\end{remark}

\subsection{Lower bounds on the Arimoto-R\'{e}nyi conditional entropy}
\label{subsection: LB on cond. RE}
The major existing lower bounds on the Shannon conditional entropy $H(X|Y)$
as a function of the minimum error probability $\varepsilon_{X|Y}$ are:
\begin{enumerate}
\item In view of \cite[Theorem~11]{HoS-IT10}, \eqref{onehalf} (shown in
\cite[Theorem~1]{Baladova66}, \cite[(12)]{ChuC66} and \cite[(41)]{HellmanR70}
for finite alphabets) holds for a general discrete random variable $X$.
As an example where \eqref{onehalf} holds with equality,
consider a Z-channel with input and output alphabets $\set{X}=\set{Y}=\{0,1\}$,
and assume that $P_X(0)=1-P_X(1)=p \in \bigl(0, \tfrac12\bigr]$, and
\begin{align}
& P_{Y|X}(0|0)=1, \quad P_{Y|X}(0|1)=\frac{p}{1-p}.
\end{align}
It can be verified that $\varepsilon_{X|Y} = p$, and $H(X|Y) = 2p$ bits.
\item Due to Kovalevsky \cite{Kovalevsky}, Tebbe and Dwyer \cite{TebbeD68}
(see also \cite{FederM94}) in the finite alphabet case, and to Ho and Verd\'u
\cite[(109)]{HoS-IT10} in the general case,
\begin{align} \label{eq: LB cond. Shannon entropy}
\phi\bigl( \varepsilon_{X|Y} \bigr) \leq H(X|Y)
\end{align}
where $\phi \colon [0,1) \to [0, \infty)$ is the piecewise linear function
that is defined on the interval $t \in \bigl[1-\frac1k, 1-\frac1{k+1} \bigr)$ as
\begin{align} \label{eq: phi}
\phi(t) = t \, k (k+1) \log \left( \frac{k+1}{k} \right) + (1-k^2) \log(k+1) + k^2 \log k
\end{align}
where $k $ is an arbitrary positive integer.
Note that \eqref{eq: LB cond. Shannon entropy} is tighter than \eqref{onehalf} since $\phi(t) \geq 2 t \log 2 $.
\end{enumerate}

In view of \eqref{varepsilon_X|Y}, since $H_{\alpha}(X|Y)$ is monotonically decreasing
in $\alpha$, one can readily obtain the following counterpart to Theorem~\ref{theorem: LB via RevHolderI}:
\begin{align} \label{varepsilon_X|Y+}
H_{\alpha}(X | Y) \geq \log \frac{1}{1- \varepsilon_{X|Y}}
\end{align}
for $\alpha \in [0, \infty]$ with equality if $\alpha = \infty$.

\begin{remark}
As a consequence of \eqref{varepsilon_X|Y+},
it follows that if $\alpha \in [0,\infty)$ and $\{X_n\}$ is a sequence of discrete
random variables, then $H_{\alpha}(X_n | Y_n) \to 0$
implies that $\varepsilon_{X_n | Y_n} \to 0$, thereby generalizing the case $\alpha=1$
shown in \cite[Theorem~14]{HoS-IT10}.
\end{remark}

\begin{remark}
Setting $\alpha=2$ in \eqref{varepsilon_X|Y+} yields
\cite[Theorem~1]{Devijver74}, and setting $\alpha=1$ in
\eqref{varepsilon_X|Y+} yields \eqref{mystery}.
Recently, Prasad \cite[Section~2.C]{Prasad15} improved the latter bound
in terms of the entropy of tilted
probability measures, instead of $H_{\alpha}(X|Y)$.
\end{remark}

\par
The next result gives a counterpart to Theorem~\ref{theorem: generalized Fano inequality},
and a generalization of \eqref{eq: LB cond. Shannon entropy}.
\begin{theorem} \label{theorem: LB on the conditional Renyi entropy}
If $\alpha \in (0,1) \cup (1, \infty)$, then
\begin{align} \label{eq: LB on the cond. RE}
\frac{\alpha}{1-\alpha} \, \log  g_{\alpha}( \varepsilon_{X|Y})  \leq H_{\alpha}(X|Y) ,
\end{align}
where the piecewise linear function
$g_{\alpha} \colon [0,1) \to \set{D}_{\alpha}$, with $\set{D}_{\alpha} = [1, \infty)$
for $\alpha \in (0,1)$ and $\set{D}_{\alpha} = (0,1]$ for $\alpha \in (1, \infty)$, is defined by
\begin{align}
\label{eq: g_alpha}
g_{\alpha}(t) =  \left( k(k+1)^{\frac1\alpha}-k^{\frac1\alpha}(k+1) \right) t
+  k^{\frac1{\alpha}+1} - (k-1) (k+1)^{\frac1{\alpha}}
\end{align}
on the interval $t \in \bigl[1-\frac1k, 1-\frac1{k+1} \bigr)$ for an arbitrary positive
integer $k$.
\end{theorem}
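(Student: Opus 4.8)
The plan is to reduce \eqref{eq: LB on the cond. RE} to a pointwise (per-observation) statement and then integrate it against $P_Y$ by means of Jensen's inequality. Writing the conditional error probability $\varepsilon_{X|Y}(y)=1-\max_x P_{X|Y}(x|y)$ as in \eqref{Aachen}, the largest conditional mass is $1-\varepsilon_{X|Y}(y)$, so Theorem~\ref{theorem: LB of RE} applied to the law $P_{X|Y=y}$ gives $\|P_{X|Y}(\cdot|y)\|_\alpha \geq G_\alpha(\varepsilon_{X|Y}(y))$ for $\alpha\in(0,1)$ and $\|P_{X|Y}(\cdot|y)\|_\alpha \leq G_\alpha(\varepsilon_{X|Y}(y))$ for $\alpha\in(1,\infty)$, where
\begin{align*}
G_\alpha(t) = \Bigl( k\,(1-t)^\alpha + \bigl(1-k(1-t)\bigr)^\alpha \Bigr)^{1/\alpha}, \qquad k = \Bigl\lfloor \tfrac{1}{1-t}\Bigr\rfloor ,
\end{align*}
and the reversal of direction stems from the sign of $\tfrac1{1-\alpha}$. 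Since \eqref{eq1: Arimoto - cond. RE} yields $\exp\!\bigl(\tfrac{1-\alpha}{\alpha}H_\alpha(X|Y)\bigr)=\mathbb{E}\bigl[\|P_{X|Y}(\cdot|Y)\|_\alpha\bigr]$, the theorem follows once this $G_\alpha$--bound is upgraded to the linear function $g_\alpha$ and the expectation is pushed through by Jensen.

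To carry this out, first I would verify by direct computation that $g_\alpha$ in \eqref{eq: g_alpha} is precisely the piecewise-linear interpolation of $G_\alpha$ at the breakpoints $t_k=1-\tfrac1k$: at each node both functions equal $k^{\frac1\alpha-1}$ (the value for the uniform law on $k$ masses), and the affine piece on $[t_k,t_{k+1})$ carries exactly the slope written in \eqref{eq: g_alpha}, namely $k(k+1)[(k+1)^{\frac1\alpha-1}-k^{\frac1\alpha-1}]$. Two facts then complete the argument. First, on each interval $[t_k,t_{k+1}]$ the function $G_\alpha$ is concave for $\alpha\in(0,1)$ and convex for $\alpha\in(1,\infty)$; since it agrees with the chord $g_\alpha$ at the endpoints, it lies above (resp.\ below) $g_\alpha$, upgrading the pointwise bound to $\|P_{X|Y}(\cdot|y)\|_\alpha\geq g_\alpha(\varepsilon_{X|Y}(y))$ for $\alpha<1$ and $\leq$ for $\alpha>1$. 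Second, $g_\alpha$ is globally convex for $\alpha\in(0,1)$ and globally concave for $\alpha\in(1,\infty)$. Granting both, I take expectations and invoke Jensen together with $\mathbb{E}[\varepsilon_{X|Y}(Y)]=\varepsilon_{X|Y}$ from \eqref{eq1: avg. cond. epsilon}: for $\alpha\in(0,1)$,
\begin{align*}
\mathbb{E}\bigl[\|P_{X|Y}(\cdot|Y)\|_\alpha\bigr]\;\geq\;\mathbb{E}\bigl[g_\alpha(\varepsilon_{X|Y}(Y))\bigr]\;\geq\;g_\alpha\bigl(\varepsilon_{X|Y}\bigr),
\end{align*}
and the reversed chain holds for $\alpha\in(1,\infty)$; taking logarithms and multiplying by $\tfrac{\alpha}{1-\alpha}$, which is positive for $\alpha<1$ and negative for $\alpha>1$ (thereby flipping the inequality the correct way), produces \eqref{eq: LB on the cond. RE} in both regimes.

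The global convexity/concavity of $g_\alpha$ is cleanest via a reparametrization: the interpolation nodes $(t_k,k^{\frac1\alpha-1})$ equal $\bigl(u,(1-u)^{1-\frac1\alpha}\bigr)$ with $u=1-\tfrac1k$, hence lie on the curve $v(u)=(1-u)^{1-\frac1\alpha}$, whose second derivative has the sign of $\tfrac1\alpha(\tfrac1\alpha-1)$ --- positive for $\alpha<1$ and negative for $\alpha>1$; since consecutive secant slopes of a convex (concave) curve are monotone, $g_\alpha$ inherits the stated convexity (concavity). The main obstacle is the concavity/convexity of $G_\alpha$ on each subinterval: writing $G_\alpha=\Psi^{1/\alpha}$ with $\Psi(t)=k(1-t)^\alpha+\bigl(1-k(1-t)\bigr)^\alpha$, it is immediate that $\Psi$ is concave, but the outer map $x\mapsto x^{1/\alpha}$ is convex and increasing when $\tfrac1\alpha>1$, so concavity of $G_\alpha$ is not automatic and must be extracted from a second-derivative computation. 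The resolution is that $\operatorname{sign} G_\alpha'' = \operatorname{sign}\bigl[(\tfrac1\alpha-1)(\Psi')^2+\Psi\Psi''\bigr]$, and upon expanding this bracket with $p=1-t$ and $r=1-kp$ the cross terms cancel, leaving $\alpha(1-\alpha)k$ times $-\bigl[2k\,p^{\alpha-1}r^{\alpha-1}+k^2p^{\alpha}r^{\alpha-2}+r^{\alpha}p^{\alpha-2}\bigr]$; the bracketed sum is manifestly positive, so the whole expression is negative for $\alpha\in(0,1)$ and positive for $\alpha\in(1,\infty)$, giving the required strict concavity/convexity on each open subinterval (the boundary case $r=0$, i.e.\ $t=t_k$, being handled by continuity of $G_\alpha$). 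This argument applies verbatim when $\set{X}$ is countably infinite, since Theorem~\ref{theorem: LB of RE} and the underlying Schur concavity remain valid there.
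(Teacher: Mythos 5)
Your proposal is correct and takes essentially the same route as the paper's own proof: a pointwise application of Theorem~\ref{theorem: LB of RE} to $P_{X|Y=y}$, replacement of the resulting function $G_\alpha$ (the paper's $f_\alpha$) by its piecewise-linear interpolant $g_\alpha$ at the nodes $t_k = 1-\tfrac1k$, and then Jensen's inequality driven by the convexity (for $\alpha<1$) or concavity (for $\alpha>1$) of $g_\alpha$ together with the sign of $\tfrac{\alpha}{1-\alpha}$. The only difference is that you supply explicit proofs --- the second-derivative computation showing $G_\alpha$ is concave/convex on each subinterval, and the secant-slope argument showing $g_\alpha$ is globally convex/concave --- of the two envelope facts that the paper merely asserts ("it can be verified that $g_\alpha$ is the lower convex envelope of $f_\alpha$"), which fills in detail rather than changing the argument.
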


\begin{proof}
For every $y \in \set{Y}$ such that $P_Y(y) > 0$, Theorem~\ref{theorem: LB of RE}
and \eqref{Aachen} yield
\begin{align}  \label{eq: LB on cond. RE given Y=y}
H_{\alpha}(X \, | \, Y=y)
\geq s_{\alpha} \bigl( \varepsilon_{X|Y}(y) \bigr)
\end{align}
where $s_{\alpha} \colon [0,1) \to [0, \infty)$ is given by
\begin{align}
s_{\alpha}(t) = \frac1{1-\alpha} \, \log \left( \left\lfloor \frac1{1-t}
\right\rfloor (1-t)^{\alpha} + \left(1-(1-t) \left\lfloor \frac1{1-t}
\right\rfloor \right)^{\alpha}\right).
\end{align}
In the remainder of the proof, we consider the cases $\alpha \in (0,1)$ and $\alpha \in (1, \infty)$
separately.
\begin{itemize}
\item
$\alpha \in (0,1)$.
Define the function $f_{\alpha} \colon [0,1) \to [1, \infty)$ as
\begin{align}\label{vier}
f_{\alpha}(t) & = \exp \left( \frac{1-\alpha}{\alpha} \; s_{\alpha}(t) \right).
\end{align}
The piecewise linear function $g_{\alpha}(\cdot)$ in \eqref{eq: g_alpha}
coincides with the monotonically increasing function $f_{\alpha}(\cdot)$ in \eqref{vier} at $t_k = 1-\frac1k$
for every positive integer $k$ (note that $f_{\alpha}(t_k) = g_{\alpha}(t_k) = k^{\frac1\alpha-1}$).
It can be also verified that $g_{\alpha}(\cdot)$ is the lower convex envelope of $f_{\alpha}(\cdot)$
(i.e., $g_{\alpha}(\cdot)$ is the largest convex function for which $g_\alpha(t) \leq f_\alpha(t)$
for all $t \in [0,1)$). Hence,
\begin{align}
\label{eq: ee1}
H_{\alpha}(X|Y) &\geq \frac{\alpha}{1-\alpha} \, \log \mathbb{E}
\left[ f_{\alpha} \bigl( \varepsilon_{X|Y}(Y) \bigr) \right] \\[0.1cm]
&\geq \frac{\alpha}{1-\alpha} \, \log \mathbb{E}
\left[ g_{\alpha} \bigl( \varepsilon_{X|Y}(Y) \bigr) \right]
\label{eq: ee4}\\[0.1cm]
& \geq \frac{\alpha}{1-\alpha} \, \log \, g_{\alpha}( \varepsilon_{X|Y})
\label{eq: ee5}
\end{align}
where \eqref{eq: ee1} follows from \eqref{eq2: Arimoto - cond. RE},
\eqref{eq: LB on cond. RE given Y=y}, \eqref{vier}, and since
$f_{\alpha} \colon [0,1) \to [1, \infty)$ is a monotonically
increasing function for $\alpha \in (0,1)$;
\eqref{eq: ee4} follows from $f_\alpha \geq g_\alpha$; \eqref{eq1: avg. cond. epsilon}
and Jensen's inequality for the convex function $g_{\alpha}(\cdot)$ result in \eqref{eq: ee5}.
\item
$\alpha \in (1, \infty)$. The function $f_{\alpha}(\cdot)$ in \eqref{vier} is monotonically decreasing
in $[0,1)$, and the piecewise linear function $g_{\alpha}(\cdot)$ in \eqref{eq: g_alpha} coincides with
$f_{\alpha}(\cdot)$ at $t_k = 1-\frac1k$ for every positive integer $k$. Furthermore, $g_{\alpha}(\cdot)$ is
the smallest concave function that is not below $f_{\alpha}(\cdot)$ on the interval $[0,1)$. The proof
now proceeds as in the previous case, and \eqref{eq: ee1}--\eqref{eq: ee5} continue to hold for $\alpha>1$.
\end{itemize}
\end{proof}

\begin{remark}
The most useful domain of applicability of Theorem~\ref{theorem: LB on the conditional Renyi entropy}
is $\varepsilon_{X|Y} \in [0,\tfrac12]$, in which case the lower bound specializes to ($k=1$)
\begin{align} \label{eq: Barcelona}
\frac{\alpha}{1 - \alpha} \,
\log \Bigl( 1 + \bigl( 2^\frac1{\alpha} -2 \bigr) \varepsilon_{X|Y} \Bigr) \leq H_\alpha (X|Y)
\end{align}
which yields \eqref{onehalf} as $\alpha \to 1$.
\end{remark}

\begin{remark}
Theorem~\ref{theorem: LB on the conditional Renyi entropy} is indeed a generalization of
\eqref{eq: LB cond. Shannon entropy} since for all $\tau \in[0,1]$,
\begin{align}
\lim_{\alpha \to 1}  \frac{\alpha}{1-\alpha} \, \log \, g_{\alpha}( \tau) =  \phi (\tau) ,
\end{align}
with $\phi$ defined in \eqref{eq: phi}.
\end{remark}

\begin{remark}
As $\alpha \to \infty$, \eqref{eq: LB on the cond. RE} is asymptotically tight:
from \eqref{eq: g_alpha},
\begin{align}
\lim_{\alpha \to \infty} g_{\alpha}(t)
&= 1-t \label{eq: lim g_alpha}
\end{align}
so the right side of
\eqref{eq: LB on the cond. RE} converges to $\log \frac1{1-\varepsilon_{X|Y}}$
which, recalling \eqref{varepsilon_X|Y}, proves the claim.
\end{remark}

\begin{remark}
It can be shown that the function in \eqref{eq: g_alpha} satisfies
\begin{align}
g_{\alpha}(t)
\begin{dcases}
\leq (1-t)^{1-\frac1\alpha}, & \alpha \in (1, \infty] \\[0.2cm]
\geq (1-t)^{1-\frac1\alpha}, & \alpha \in (0, 1).
\end{dcases}
\end{align}
Moreover, if $t = 1 - \frac1k$, for $k\in \{ 1, 2, 3, \ldots \}$, then
\begin{align}
g_{\alpha}(t) = (1-t)^{1- \frac1\alpha}.
\end{align}
Therefore, Theorem~\ref{theorem: LB on the conditional Renyi entropy}
gives a tighter bound than \eqref{varepsilon_X|Y+}, unless
$\varepsilon_{X|Y} \in \bigl\{ \frac12, \frac23, \frac34 , \ldots \frac{M-1}{M} \bigr\}$
($M$ is allowed to be $\infty$ here) in which case they are identical,
and independent of $\alpha$ as it is illustrated in Figure~\ref{figure: plot_UB/LB_Cond_RE}.
\end{remark}

\vspace*{-4cm}
\begin{figure}[h]
\centerline{\includegraphics[width=10cm]{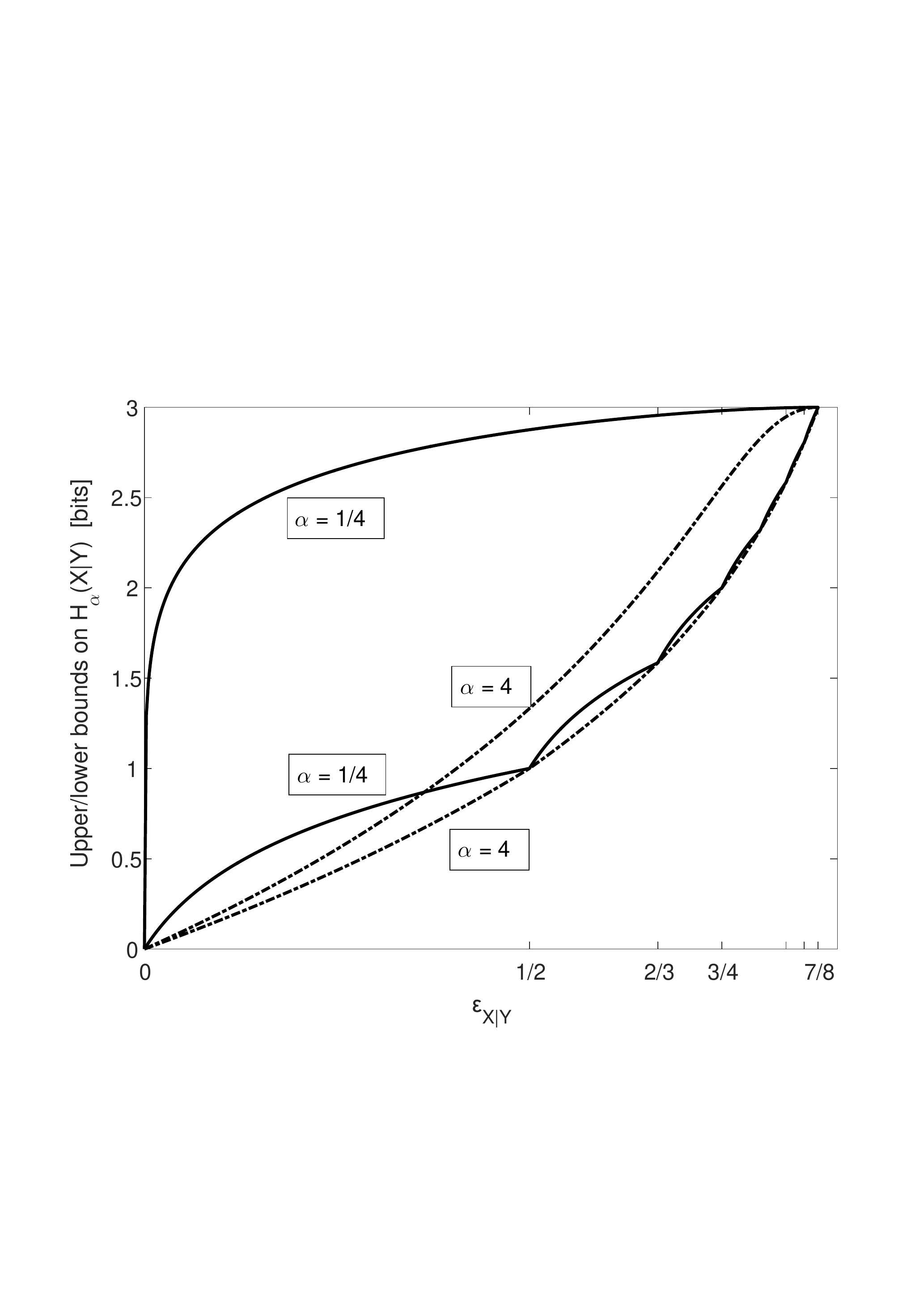}}
\vspace*{-3.2cm}
\caption{\label{figure: plot_UB/LB_Cond_RE}
Upper and lower bounds on $H_{\alpha}(X|Y)$ in Theorems~\ref{theorem: generalized Fano inequality}
and~\ref{theorem: LB on the conditional Renyi entropy}, respectively,
as a function of $\varepsilon_{X|Y} \in [0, 1-\tfrac1M]$ for $\alpha = \tfrac14$ (solid lines)
and $\alpha=4$ (dash-dotted lines) with $M=8$.}
\end{figure}

The following proposition applies when $M$ is finite, but does not depend on $M$. It is
supported by Figure~\ref{figure: plot_UB/LB_Cond_RE}.
\begin{proposition} \label{prop: lb/ub}
Let $M \in \{2, 3, \ldots \}$ be finite, and let the upper and lower bounds on $H_{\alpha}(X|Y)$
as a function of $\varepsilon_{X|Y}$, as given in Theorems~\ref{theorem: generalized Fano inequality}
and~\ref{theorem: LB on the conditional Renyi entropy}, be denoted by $u_{\alpha,M}(\cdot)$ and
$l_{\alpha}(\cdot)$, respectively. Then,
\begin{enumerate}[a)]
\item \label{prop: lb/ub - part 1}
these bounds coincide if and only if $X$ is a deterministic function of the observation
$Y$ or $X$ is equiprobable on the set $\set{X}$ and independent of $Y$;
\item \label{prop: lb/ub - part 2}
the limit of the ratio of the upper-to-lower bounds
when $\varepsilon_{X|Y} \to 0$ is given by
\begin{align} \label{eq: limit u/l}
\lim_{\varepsilon_{X|Y} \to 0}  \frac{u_{\alpha,M}(\varepsilon_{X|Y})}{l_{\alpha}(\varepsilon_{X|Y})} =
\begin{dcases}
\hspace*{0.4cm} \infty, & \quad \alpha \in (0,1), \\[0.2cm]
\frac1{2 - 2^{\frac1\alpha}}, & \quad \alpha \in (1, \infty).
\end{dcases}
\end{align}
\end{enumerate}
\end{proposition}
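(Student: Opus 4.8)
The plan is to work with the explicit forms of the two bounds. By \eqref{useful equality} with $(\theta,s,t)=(M,M-1,\varepsilon_{X|Y})$, the upper bound of Theorem~\ref{theorem: generalized Fano inequality} reads $u_{\alpha,M}(t)=\frac1{1-\alpha}\log\bigl((1-t)^\alpha+(M-1)^{1-\alpha}t^\alpha\bigr)$, which is precisely $H_\alpha$ of the distribution $D_t$ placing mass $1-t$ on one symbol and $\tfrac{t}{M-1}$ on each of the remaining $M-1$; the lower bound is $l_\alpha(t)=\tfrac{\alpha}{1-\alpha}\log g_\alpha(t)$ with $g_\alpha$ as in \eqref{eq: g_alpha}. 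A direct substitution shows $u_{\alpha,M}(0)=l_\alpha(0)=0$, and since $t=1-\tfrac1M$ is the breakpoint $t_M$ at which $g_\alpha(t_M)=M^{\frac1\alpha-1}$, that $u_{\alpha,M}(1-\tfrac1M)=l_\alpha(1-\tfrac1M)=\log M$. This settles the ``if'' direction once the two structural conditions are matched to the endpoints of the domain.

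For that matching I would use \eqref{eq1: cond. epsilon} and \eqref{UK}. Since $\varepsilon_{X|Y}=1-\mathbb{E}[\max_x P_{X|Y}(x|Y)]$, we have $\varepsilon_{X|Y}=0$ iff $\max_x P_{X|Y}(x|Y)=1$ almost surely, i.e.\ iff $X$ is a deterministic function of $Y$; and $\max_x P_{X|Y}(x|y)\ge\tfrac1M$ with equality iff $P_{X|Y}(\cdot|y)$ is equiprobable, so $\varepsilon_{X|Y}=1-\tfrac1M$ iff $P_{X|Y}(\cdot|y)$ is equiprobable for almost every $y$, which forces $X$ to be equiprobable on $\set{X}$ and independent of $Y$.

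The substance is the ``only if'' direction, namely $l_\alpha(t)<u_{\alpha,M}(t)$ for every $t\in(0,1-\tfrac1M)$. I would sandwich the two bounds by $s_\alpha(t)$, the R\'enyi entropy of the Schur-minimal distribution of Theorem~\ref{theorem: LB of RE} whose maximal mass is $1-t$: the identity $\tfrac{\alpha}{1-\alpha}\log f_\alpha=s_\alpha$ (from \eqref{vier}) together with the fact that $g_\alpha$ is the convex, resp.\ concave, envelope of $f_\alpha$ gives $l_\alpha(t)\le s_\alpha(t)$, with equality iff $t$ is a breakpoint $1-\tfrac1k$; and applying Theorem~\ref{theorem: LB of RE} to $D_t$ gives $s_\alpha(t)\le u_{\alpha,M}(t)$, with equality iff $D_t$ is itself the Schur-minimal distribution. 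I would then split into cases: for $M\ge 3$, $D_t$ has $M-1\ge 2$ masses strictly below its mode $1-t$, whereas the Schur-minimal distribution has at most one, so $D_t$ is not Schur-minimal and the second inequality is strict by the strict Schur-concavity asserted in Theorem~\ref{theorem: LB of RE}; for $M=2$, the distribution $D_t=(1-t,t)$ does coincide with the Schur-minimal distribution on $(0,\tfrac12)$, but this interval contains no breakpoint $1-\tfrac1k$, so instead the first inequality is strict. Either way $l_\alpha(t)<u_{\alpha,M}(t)$ in the interior, so coincidence of the bounds forces $\varepsilon_{X|Y}\in\{0,1-\tfrac1M\}$. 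The main obstacle is precisely this degenerate $M=2$ branch, where strictness must be extracted from the gap between $f_\alpha$ and its envelope $g_\alpha$ rather than from Schur-concavity.

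For part~\ref{prop: lb/ub - part 2} I would expand both bounds about $t=0$. On $[0,\tfrac12)$ the piecewise-linear $g_\alpha$ equals $1+(2^{\frac1\alpha}-2)t$, so $l_\alpha(t)=\tfrac{\alpha(2^{\frac1\alpha}-2)}{1-\alpha}\,t+o(t)$, which is linear in $t$ with a positive coefficient. For $u_{\alpha,M}$ I would keep the dominant term of $\log\bigl((1-t)^\alpha+(M-1)^{1-\alpha}t^\alpha\bigr)$: when $\alpha\in(0,1)$ the term $t^\alpha$ dominates $t$, giving $u_{\alpha,M}(t)\sim\tfrac{(M-1)^{1-\alpha}}{1-\alpha}\,t^\alpha$, so the ratio behaves like a constant times $t^{\alpha-1}\to\infty$; when $\alpha\in(1,\infty)$ the linear term wins, giving $u_{\alpha,M}(t)\sim\tfrac{\alpha}{\alpha-1}\,t$, and the quotient of the two linear leading coefficients collapses to $\tfrac1{2-2^{1/\alpha}}$, as claimed. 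These are routine asymptotic computations once the $k=1$ piece of $g_\alpha$ is recorded.
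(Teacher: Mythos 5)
Your proposal is correct, but it takes a genuinely different route from the paper's proof (Appendix~\ref{app: lb/ub}). For part~a), the paper studies the single difference function $v_\alpha$ in \eqref{v_alpha}--\eqref{v_alpha,k}, essentially $\exp\bigl(\tfrac{1-\alpha}{\alpha}\,u_{\alpha,M}(t)\bigr)-g_\alpha(t)$: it vanishes at $t=0$ and $t=1-\tfrac1M$, has a strict sign at the interior breakpoints $t_k=\tfrac{k}{k+1}$, $k\in\{1,\ldots,M-2\}$ (the explicit computation \eqref{v}--\eqref{v2}), is continuous, and is strictly concave (resp.\ convex) on each piece by Lemma~\ref{lemma: convexity/ concavity}; positivity (negativity) on the open interval follows uniformly for all $M\geq 2$, with no case split. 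You instead interpose the Schur-minimal entropy $s_\alpha(t)$ and sandwich $l_\alpha(t)\leq s_\alpha(t)\leq u_{\alpha,M}(t)$, using the identification $u_{\alpha,M}(t)=H_\alpha(D_t)$ so that the strict-equality condition of Theorem~\ref{theorem: LB of RE} pins down exactly when the right inequality is tight, while the left inequality is tight exactly at the breakpoints $1-\tfrac1k$; your case split ($M\geq3$: $D_t$ has $M-1\geq2$ positive masses strictly below its mode, so it is never Schur-minimal in the interior; $M=2$: $D_t$ is Schur-minimal, but $(0,\tfrac12)$ contains no breakpoint) is both the price and the payoff of this route --- it avoids the paper's sign computation \eqref{v} by recycling equality conditions already proved, and it makes transparent why $M=2$ is the boundary case. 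One point you should make explicit: the claim that $l_\alpha(t)=s_\alpha(t)$ \emph{only} at breakpoints does not follow from the envelope property alone; it needs the strict concavity/convexity of $f_\alpha$ on each piece (an instance of Lemma~\ref{lemma: convexity/ concavity}, after an affine change of variable), so that $f_\alpha$ differs strictly from its chord $g_\alpha$ between breakpoints. You also supply the probabilistic translation of $\varepsilon_{X|Y}\in\{0,\,1-\tfrac1M\}$ into the deterministic-function/equiprobable-independent dichotomy, which the paper asserts without proof. For part~b), your Taylor expansions are the same routine computation that the paper carries out via L'H\^{o}pital's rule.
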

\begin{proof}
See Appendix~\ref{app: lb/ub}.
\end{proof}

\begin{remark}
The low error probability limit of the ratio of upper-to-lower bounds
in \eqref{eq: limit u/l} decreases monotonically with $\alpha \in (1,\infty)
$ from $\infty$ to $1$.
\end{remark}

\subsection{Explicit upper bounds on $\varepsilon_{X|Y}$}

In this section, we give counterparts to the explicit lower bounds on
$\varepsilon_{X|Y}$ given in Section~\ref{subsec: explicit LBs} by
capitalizing on the bounds in Section~\ref{subsection: LB on cond. RE}.

The following result is a consequence of Theorem~\ref{theorem: LB on the conditional Renyi entropy}:
\begin{theorem} \label{theorem: UB min err}
Let $k \in \naturals$, and $\alpha \in (0,1) \cup (1, \infty)$.
If $\log k \leq H_{\alpha}(X|Y) < \log(k+1)$, then
\begin{align} \label{eq: UB1 min err}
\varepsilon_{X|Y} \leq
\frac{\exp\left(\frac{1-\alpha}{\alpha} \, H_{\alpha}(X|Y) \right)
- k^{\frac1{\alpha}+1} + (k-1) (k+1)^{\frac1{\alpha}} }{
k(k+1)^{\frac1\alpha}-k^{\frac1\alpha} (k+1)}.
\end{align}
Furthermore, the upper bound on $\varepsilon_{X|Y}$ as a function of $H_{\alpha}(X|Y)$
is asymptotically tight in the limit where $\alpha \to \infty$.
\end{theorem}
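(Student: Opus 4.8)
The plan is to invert the lower bound on $H_\alpha(X|Y)$ furnished by Theorem~\ref{theorem: LB on the conditional Renyi entropy}, treating the whole statement as a function-inversion exercise. Write $L(t) = \frac{\alpha}{1-\alpha}\log g_\alpha(t)$ for $t \in [0,1)$, so that \eqref{eq: LB on the cond. RE} reads $L(\varepsilon_{X|Y}) \leq H_\alpha(X|Y)$. The first step is to show that $L$ is a continuous, strictly increasing bijection from $[0,1)$ onto $[0,\infty)$: for $\alpha \in (0,1)$ the factor $\frac{\alpha}{1-\alpha}$ is positive and $g_\alpha$ is increasing, while for $\alpha \in (1,\infty)$ the factor is negative and $g_\alpha$ is decreasing, so in either regime $L$ is increasing on each linear piece; continuity of the piecewise-linear $g_\alpha$ across the knots $t_k = 1-\tfrac1k$ then makes $L$ globally increasing. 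Using $g_\alpha(t_k) = k^{\frac1\alpha - 1}$, as recorded in the proof of Theorem~\ref{theorem: LB on the conditional Renyi entropy}, one computes $L(t_k) = \log k$, so $L$ carries the $k$-th interval $[1-\tfrac1k, 1-\tfrac1{k+1})$ onto $[\log k, \log(k+1))$.

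Since $L$ is increasing, the inequality $L(\varepsilon_{X|Y}) \leq H_\alpha(X|Y)$ yields $\varepsilon_{X|Y} \leq L^{-1}\bigl(H_\alpha(X|Y)\bigr)$. The hypothesis $\log k \leq H_\alpha(X|Y) < \log(k+1)$ places $L^{-1}\bigl(H_\alpha(X|Y)\bigr)$ precisely on the $k$-th piece, on which $g_\alpha(t) = a t + b$ with $a = k(k+1)^{\frac1\alpha} - k^{\frac1\alpha}(k+1)$ and $b = k^{\frac1\alpha+1} - (k-1)(k+1)^{\frac1\alpha}$ read off from \eqref{eq: g_alpha}. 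Solving $a t + b = \exp\bigl(\tfrac{1-\alpha}{\alpha}H_\alpha(X|Y)\bigr)$ for $t$ produces exactly the right-hand side of \eqref{eq: UB1 min err}. I emphasize that it is immaterial which piece $\varepsilon_{X|Y}$ itself occupies: the bound $\varepsilon_{X|Y}\leq L^{-1}\bigl(H_\alpha(X|Y)\bigr)$ uses only the global monotonicity of $L$, and the $k$-th-piece formula is forced solely by the location of $H_\alpha(X|Y)$.

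The one point deserving care, which I would flag as the main (if modest) obstacle, is the sign bookkeeping in the inversion, and I would verify the two regimes separately. Multiplying $L(\varepsilon_{X|Y}) \leq H_\alpha(X|Y)$ by $\tfrac{1-\alpha}{\alpha}$ reverses the inequality exactly when $\alpha > 1$, giving $g_\alpha(\varepsilon_{X|Y}) \geq \exp\bigl(\tfrac{1-\alpha}{\alpha}H_\alpha(X|Y)\bigr)$; simultaneously the coefficient $a$ is negative when $\alpha > 1$ and positive when $\alpha \in (0,1)$, which is seen from $\tfrac{k(k+1)^{1/\alpha}}{(k+1)k^{1/\alpha}} = \bigl(\tfrac{k+1}{k}\bigr)^{1/\alpha - 1}$, so dividing by $a$ reverses the inequality a second time. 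The two sign changes cancel, and both regimes collapse to the single upper bound \eqref{eq: UB1 min err}.

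Finally, for asymptotic tightness I would let $\alpha \to \infty$ in \eqref{eq: UB1 min err}. Since $\tfrac{1-\alpha}{\alpha}\to -1$, $k^{1/\alpha}\to 1$, $(k+1)^{1/\alpha}\to 1$, and $H_\alpha(X|Y)\to H_\infty(X|Y)$ by Proposition~\ref{prop1: mon}, the numerator tends to $\exp\bigl(-H_\infty(X|Y)\bigr) - 1$ and the denominator to $-1$ for every fixed $k$, so the bound converges to $1 - \exp\bigl(-H_\infty(X|Y)\bigr)$, which equals $\varepsilon_{X|Y}$ by \eqref{varepsilon_X|Y}. This mirrors the tightness already visible at the level of the lower bound, where $g_\alpha(t)\to 1-t$ in \eqref{eq: lim g_alpha} renders Theorem~\ref{theorem: LB on the conditional Renyi entropy} asymptotically exact; the inversion simply transports that tightness to the upper bound on $\varepsilon_{X|Y}$.
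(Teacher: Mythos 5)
Your proposal is correct and takes essentially the same route as the paper's proof: both invert the lower bound of Theorem~\ref{theorem: LB on the conditional Renyi entropy}, using its monotonicity in $\varepsilon_{X|Y}$ together with the knot values $\tfrac{\alpha}{1-\alpha}\log g_\alpha\bigl(1-\tfrac1k\bigr)=\log k$ to solve the linear piece selected by $H_\alpha(X|Y)$, and both deduce the asymptotic tightness as $\alpha\to\infty$ from the tightness of that lower bound. Your write-up is simply more explicit than the paper's (notably in observing that the relevant piece is determined by where $H_\alpha(X|Y)$ lies rather than where $\varepsilon_{X|Y}$ lies, and in carrying out the limit computation directly), which is a welcome filling-in of details the paper leaves to the reader.
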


\begin{proof}
The left side of \eqref{eq: LB on the cond. RE} is equal to $\log k$
when $\varepsilon_{X|Y} = 1-\frac1{k}$ for $k \in \naturals$, and it is also
monotonically increasing in $\varepsilon_{X|Y}$. Hence, if
$\varepsilon_{X|Y} \in \bigl[1-\frac1k, 1-\frac1{k+1}\bigr)$, then the lower
bound on $H_{\alpha}(X|Y)$ in the left side of \eqref{eq: LB on the cond. RE}
lies in the interval $[\log k, \log(k+1))$. The bound in \eqref{eq: UB1 min err}
can be therefore verified to be equivalent to
Theorem~\ref{theorem: LB on the conditional Renyi entropy}. Finally, the
asymptotic tightness of \eqref{eq: UB1 min err} in the limit where
$\alpha \to \infty$ is inherited by the same asymptotic tightness of
the equivalent bound in Theorem~\ref{theorem: LB on the conditional Renyi entropy}.
\end{proof}

\begin{remark}  \label{remark: UB min err - alpha=1}
By letting $\alpha \to 1$ in the right side of \eqref{eq: UB1 min err},
we recover the bound by Ho and Verd\'u \cite[(109)]{HoS-IT10}:
\begin{align} \label{eq: UB2 min err}
\varepsilon_{X|Y} \leq
\frac{H(X|Y) + (k^2-1) \log(k+1) - k^2 \log k}{k(k+1) \log \left(\frac{k+1}{k}\right)}
\end{align}
if $\log k \leq H(X|Y) < \log(k+1)$ for an arbitrary $k \in \naturals$. In the special
case of finite alphabets, this result was obtained in \cite{FederM94}, \cite{Kovalevsky}
and \cite{TebbeD68}. In the finite alphabet setting, Prasad \cite[Section~5]{Prasad15}
recently refined the bound in \eqref{eq: UB2 min err} by lower bounding $H(X|Y)$ subject
to the knowledge of the first two largest posterior probabilities rather than only the
largest one; following the same approach, \cite[Section~6]{Prasad15} gives a refinement of
Fano's inequality.
\end{remark}

\indent
\textit{Example} \ref{example:45b} \textit{(cont.)}
Table~\ref{table2} compares the asymptotically tight bounds in
Theorems~\ref{theorem: generalized Fano inequality} and
\ref{theorem: UB min err} for three values of $\alpha$.
\begin{table}[h]
\renewcommand{\arraystretch}{1.5}
\begin{center}
\begin{tabular}{||r|c|c|c||}
\hline
$\alpha$ & {lower bound} & $\varepsilon_{X|Y}$ & {upper bound} \\
\hline
1 & 0.4013  & 0.4667 & 0.6061 \\
10 & 0.4640 & 0.4667 & 0.4994 \\
100 & 0.4667 & 0.4667 & 0.4699\\
\hline
\end{tabular}
\vspace*{0.3cm}
\caption{\label{table2} Upper and Lower bounds on $\varepsilon_{X|Y}$
in \eqref{eq1: generalized Fano} and \eqref{eq: UB1 min err} for
Example~\ref{example:45b}.}
\end{center}
\end{table}

As the following example illustrates, the bounds in
\eqref{eq1: generalized Fano} and \eqref{eq: UB1 min err}
are in general not monotonic in $\alpha$.
\begin{example}\label{example:fig2}
Suppose $X$ and $Y$ are binary random variables with joint distribution
\begin{align} \label{eq: joint_dist}
\bigl[P_{XY}(x,y)\bigr]_{(x,y) \in \{0,1\}^2}
= \left( \begin{array}{ccc}
0.1906  & 0.3737 \\
0.4319  & 0.0038
\end{array}
\right).
\end{align}
\begin{figure}[h]
\centerline{\includegraphics[width=10cm]{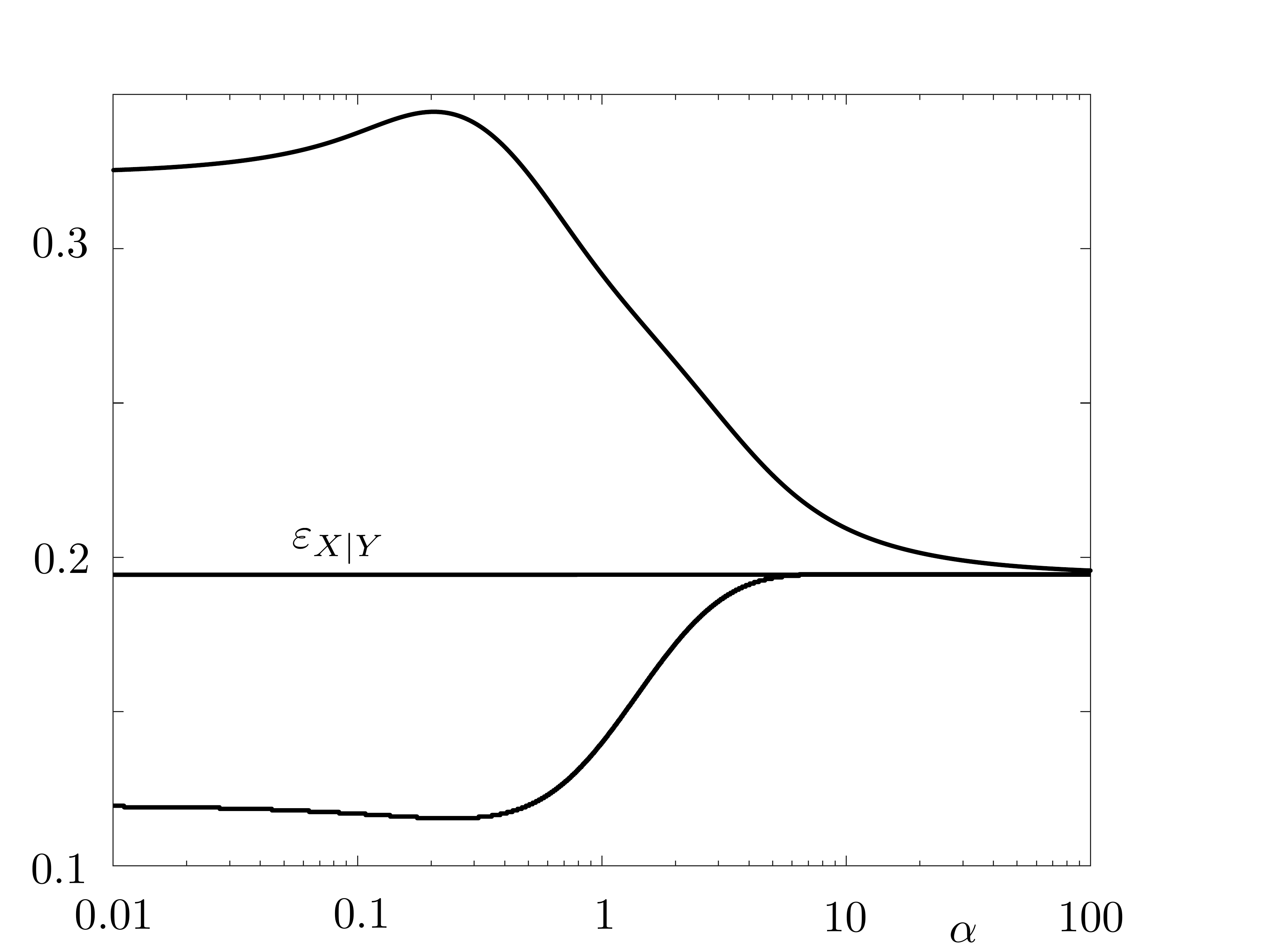}}
\caption{\label{figure: compare_bounds_epsilon_X_given_Y}
Upper (Theorem~\ref{theorem: UB min err}) and lower
(Theorem~\ref{theorem: generalized Fano inequality}) bounds
on $\varepsilon_{X|Y}$ as a function of $\alpha$ in
Example~\ref{example:fig2}.}
\end{figure}
In this case  $\varepsilon_{X|Y} = 0.1944$, and its upper and
lower bounds  are depicted in Figure~\ref{figure: compare_bounds_epsilon_X_given_Y}
as a function of $\alpha$.
\end{example}

\section{Upper Bounds on $\varepsilon_{X|Y}$ Based on Binary Hypothesis Testing}
\label{section: BHT}

In this section, we give explicit upper bounds on $\varepsilon_{X|Y}$
by connecting the $M$-ary hypothesis testing problem for finite $M$
with the associated $\binom{M}{2}$ binary hypothesis testing problems.
The latter bounds generalize the binary hypothesis testing upper bound on
$\varepsilon_{X|Y}$ by Hellman and Raviv \cite[Theorem~1]{HellmanR70}
to any number of hypotheses, including the tightening of the upper
bound by Kanaya and Han \cite{kanaya1995asymptotics}, and Leang and Johnson \cite{LeangJ97}.

Hellman and Raviv \cite[Theorem~1]{HellmanR70} generalized the Bhattacharyya
bound to give an upper bound on the  error probability for \textit{binary}
hypothesis testing in terms of the R\'enyi divergence between both models
$\PZ$ and $\PU$ (with prior probabilities $\mathbb{P} [ \HZ ]$, and
$\mathbb{P} [ \HU ]$ respectively):
\begin{align}
\varepsilon_{X|Y} &\leq \inf_{\alpha\in(0,1)}
\mathbb{P}^\alpha  [ \HU ] \mathbb{P}^{1-\alpha} [ \HZ ]
\exp \bigl( (\alpha -1 ) D_\alpha (\PU \| \PZ) \bigr).
\label{eq:binHR-2}
\end{align}
To generalize \eqref{eq:binHR-2} to any number of hypotheses (including infinity),
it is convenient  to keep the same notation denoting the models by
\begin{align} \label{models}
P_i = P_{Y|X=i}
\end{align}
with prior probabilities  $\mathbb{P} [ \mathsf{H}_{i} ]$ for $i \in \{1, \ldots, M\}$, and let
\begin{align}\label{eq: bar P}
\overbar{\mathbb{P}} [ \mathsf{H}_{i} ] &= 1 - \mathbb{P} [ \mathsf{H}_{i} ], \\[0.1cm]
\overbar{P}_i &= \sum_{j \neq i} \frac{ \mathbb{P} [ \mathsf{H}_{j} ]}{
\overbar{\mathbb{P}} [ \mathsf{H}_{i} ] } \, P_j. \label{eq1b: bar P}
\end{align}

\begin{theorem}  \label{theorem: generalized Hellman-Raviv bound}
\begin{align} \label{eq: generalized Hellman-Raviv bound}
& \varepsilon_{X|Y} \leq \min_{i\neq k} \inf_{\alpha\in(0,1)}
\overbar{\mathbb{P}}^\alpha [ \mathsf{H}_{i} ] \overbar{\mathbb{P}}^{1-\alpha} [ \mathsf{H}_{k} ]
\, \exp \left( (\alpha-1) D_{\alpha} \left( \overbar{P}_i \, \big\| \, \overbar{P}_k \right) \right).
\end{align}
\end{theorem}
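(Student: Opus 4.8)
The plan is to upper bound $\varepsilon_{X|Y}$ by the error probability of a carefully chosen \emph{suboptimal} decision rule, and then to recognize that this error probability is exactly the Bayesian error of an auxiliary binary test, to which the same bounding step underlying the Hellman--Raviv bound \eqref{eq:binHR-2} can be applied.

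First I would fix an ordered pair $(i,k)$ with $i \neq k$ and consider the decision rule that is \emph{constrained} to output only $i$ or $k$: for a set $\set{A} \subseteq \set{Y}$, declare $\hat{X}=i$ when $Y \in \set{A}$, and $\hat{X}=k$ otherwise. Since $\varepsilon_{X|Y}$ is the minimum error probability over \emph{all} decision rules (see \eqref{20170904}), it is upper bounded by the error probability of this particular rule for every choice of $\set{A}$, and that error probability decomposes as $\prob[X \neq i, \, Y \in \set{A}] + \prob[X \neq k, \, Y \in \set{A}^{\mathrm{c}}]$. The crucial observation is that $\overbar{P}_i$ in \eqref{eq1b: bar P} is exactly the conditional law $P_{Y \mid X \neq i}$; consequently $\prob[X \neq i, \, Y \in \set{A}] = \overbar{\mathbb{P}}[\mathsf{H}_{i}] \, \overbar{P}_i(\set{A})$, and similarly for the second term.

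Minimizing over $\set{A}$ then identifies the smallest error of this family of rules as the Bayesian error of the binary test between $\overbar{P}_i$ and $\overbar{P}_k$ carrying the (generally unnormalized) weights $\overbar{\mathbb{P}}[\mathsf{H}_{i}]$ and $\overbar{\mathbb{P}}[\mathsf{H}_{k}]$:
\begin{align}
\varepsilon_{X|Y} \leq \int \min\bigl( \overbar{\mathbb{P}}[\mathsf{H}_{i}] \, \overbar{p}_i, \; \overbar{\mathbb{P}}[\mathsf{H}_{k}] \, \overbar{p}_k \bigr) \, \mathrm{d}R ,
\end{align}
where $\overbar{p}_i, \overbar{p}_k$ are densities of $\overbar{P}_i, \overbar{P}_k$ with respect to a common dominating measure $R$. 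To finish I would invoke the elementary inequality $\min(a,b) \leq a^{\alpha} b^{1-\alpha}$, valid for $a,b \geq 0$ and $\alpha \in (0,1)$, with $a = \overbar{\mathbb{P}}[\mathsf{H}_{i}] \, \overbar{p}_i$ and $b = \overbar{\mathbb{P}}[\mathsf{H}_{k}] \, \overbar{p}_k$; integrating and recognizing $\int \overbar{p}_i^{\alpha} \, \overbar{p}_k^{1-\alpha} \, \mathrm{d}R = \exp\bigl((\alpha-1) D_\alpha(\overbar{P}_i \| \overbar{P}_k)\bigr)$ from \eqref{eq:RD0} yields, for each $\alpha \in (0,1)$,
\begin{align}
\varepsilon_{X|Y} \leq \overbar{\mathbb{P}}^{\alpha}[\mathsf{H}_{i}] \, \overbar{\mathbb{P}}^{1-\alpha}[\mathsf{H}_{k}] \, \exp\bigl((\alpha-1) D_{\alpha}(\overbar{P}_i \| \overbar{P}_k)\bigr) .
\end{align}
Taking the infimum over $\alpha \in (0,1)$ and the minimum over all pairs $i \neq k$ then delivers \eqref{eq: generalized Hellman-Raviv bound}. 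Since $\overbar{\mathbb{P}}[\mathsf{H}_{i}]$ and $\overbar{\mathbb{P}}[\mathsf{H}_{k}]$ need not sum to one, I would run this last step directly through $\min(a,b) \leq a^{\alpha}b^{1-\alpha}$ rather than literally quoting \eqref{eq:binHR-2}, although it is the identical bounding device.

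The step I expect to require the most care is not any single calculation but the bookkeeping of the reduction: verifying that $\overbar{P}_i = P_{Y \mid X \neq i}$, so that the $M$-ary error of the two-output rule collapses exactly onto a binary Bayesian error, and tracking the asymmetry between $D_\alpha(\overbar{P}_i \| \overbar{P}_k)$ and $D_\alpha(\overbar{P}_k \| \overbar{P}_i)$. The latter is harmless here because the outer minimization ranges over \emph{ordered} pairs $i \neq k$, so both orderings are present and the two forms of the bound coincide after relabeling.
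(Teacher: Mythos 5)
Your proposal is correct and follows essentially the same route as the paper: both reduce $\varepsilon_{X|Y}$ to the binary Bayesian error $\int \min\bigl( \overbar{\mathbb{P}}[\mathsf{H}_{i}] \, \overbar{p}_i, \; \overbar{\mathbb{P}}[\mathsf{H}_{k}] \, \overbar{p}_k \bigr) \mathrm{d}R$ (the paper gets there algebraically, writing $\varepsilon_{X|Y} = \mathbb{E}\bigl[ \min_m \overbar{\mathbb{P}}[\mathsf{H}_{m}] \, \overbar{p}_m(Y) \bigr]$ via Bayes' rule and dropping all but two indices, whereas you get there operationally via a suboptimal two-output rule --- the same inequality), and then both apply $\min(a,b) \leq a^{\alpha} b^{1-\alpha}$ and identify the R\'enyi divergence. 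Your closing remarks, including the identification $\overbar{P}_i = P_{Y \mid X \neq i}$ and the handling of the ordered-pair asymmetry, are also accurate.
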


\begin{proof}
Let
\begin{align} \label{eq: P_Y}
Y \sim P_Y = \sum_{m=1}^M \mathbb{P} [ \mathsf{H}_{m} ] \, P_m
\end{align}
and, for $m \in \{1, \ldots, M\}$, denote the densities
\begin{align} \label{20171009}
p_m = \frac{\mathrm{d} P_m}{\mathrm{d} P_Y }, \quad
\overbar{p}_m = \frac{\mathrm{d} \overbar{P}_m}{\mathrm{d} P_Y }.
\end{align}
For $\alpha \in (0,1)$ and for all $i,k \in \{1, \ldots, M\}$ with $i \neq k$,
\begin{align} \label{eq1: Hel-Rav}
 \varepsilon_{X|Y} & = 1 - \mathbb{E} \left[ \max_{m \in \{1, \ldots, M\}}
\prob[\mathsf{H}_m | Y] \right] \\[0.1cm]
&= \mathbb{E} \left[ \min_{m \in \{1, \ldots, M\}} \sum_{j \neq m}
\mathbb{P} [ \mathsf{H}_{j} ] \, p_j (Y) \right]
\label{eq2: Hel-Rav}\\[0.2cm]
&=  \mathbb{E} \left[ \min_{m \in \{1, \ldots, M\}} \overbar{\mathbb{P}}
[ \mathsf{H}_{m} ] \, \overbar{p}_m (Y) \right] \label{eq2.5: Hel-Rav} \\[0.2cm]
&\leq \mathbb{E} \left[ \min \left\{  \overbar{\mathbb{P}} [ \mathsf{H}_{i} ]
\, \overbar{p}_i (Y) , \, \overbar{\mathbb{P}} [ \mathsf{H}_{k} ] \, \overbar{p}_k (Y)  \right\} \right]\\[0.2cm]
\label{eq3: Hel-Rav}
&\leq  \mathbb{E} \left[
 \overbar{\mathbb{P}}^{\alpha} [ \mathsf{H}_{i} ] \; \overbar{p}_i^{\alpha} (Y)\;
\overbar{\mathbb{P}}^{1-\alpha} [ \mathsf{H}_{k} ] \; \overbar{p}_k^{1-\alpha} (Y) \right] \\[0.2cm]
\label{eq4: Hel-Rav}
&=  \overbar{\mathbb{P}}^\alpha [ \mathsf{H}_{i} ] \, \overbar{\mathbb{P}}^{1-\alpha}[ \mathsf{H}_{k} ]
\exp \left( (\alpha-1) D_{\alpha} \left( \overbar{P}_i \, \big\| \, \overbar{P}_k \right) \right)
\end{align}
where
\begin{itemize}
\item \eqref{eq2: Hel-Rav} follows from Bayes' rule:
\begin{align}
1- \prob[\mathsf{H}_m | Y]  = \sum_{j \neq m} \prob[\mathsf{H}_j | Y] = \sum_{j \neq m}
\mathbb{P} [ \mathsf{H}_{j} ] \, p_j(Y),
\end{align}
\item \eqref{eq2.5: Hel-Rav} follows from the definition in \eqref{eq1b: bar P},
\item \eqref{eq3: Hel-Rav} follows from $\min \{ t, s\} \leq t^\alpha s^{1-\alpha}$
if $\alpha \in [0,1]$ and $t,s\geq 0$,
\item \eqref{eq4: Hel-Rav} follows from \eqref{eq:RD0}.
\end{itemize}
Minimizing the right side of \eqref{eq4: Hel-Rav} over all $(i,k)$ with $i \neq k$
gives \eqref{eq: generalized Hellman-Raviv bound}.
\end{proof}

We proceed to generalize Theorem~\ref{theorem: generalized Hellman-Raviv bound}
along a different direction. We can obtain upper and lower bounds on the
Bayesian $M$-ary minimal error probability in terms of the minimal error
probabilities of the $\binom{M}{2}$ associated  binary hypothesis testing
problems. For that end, denote the random variable $X$ restricted to
$\{i, j\} \subseteq \set{X}$, $i \neq j$, by $X_{ij}$, namely,
\begin{align}
\mathbb{P}[ X_{ij} = i] &= \frac{\mathbb{P}  [ \mathsf{H}_i ] }{\mathbb{P}
[ \mathsf{H}_i ] +\mathbb{P}  [ \mathsf{H}_j ] } \label{iolai},\\[0.2cm]
\mathbb{P}[ X_{ij} = j] &= \frac{\mathbb{P}  [ \mathsf{H}_j ] }{\mathbb{P}
[ \mathsf{H}_i ] +\mathbb{P}  [ \mathsf{H}_j ] }. \label{jolai}
\end{align}
Guessing the hypothesis without the benefit of observations
yields the error probability
\begin{align}
\varepsilon_{X_{ij} } &=  \min \{ \mathbb{P}[ X_{ij} = i], \mathbb{P}[ X_{ij} = j] \} \\[0.2cm]
& = \frac{\min\{ \mathbb{P}  [ \mathsf{H}_i ] ,
\mathbb{P}  [ \mathsf{H}_j ] \} }{\mathbb{P}  [ \mathsf{H}_i ] +\mathbb{P}  [ \mathsf{H}_j ]} \label{varij}.
\end{align}
The minimal error probability of the binary hypothesis test
\begin{align}
& \mathsf{H}_i: Y_{ij} \sim P_{Y|X_{ij}=i} = P_{Y|X=i} = P_i, \label{eq: Hi} \\[0.1cm]
& \mathsf{H}_j: Y_{ij} \sim P_{Y|X_{ij}=j} = P_{Y|X=j} = P_j \label{eq: Hj}
\end{align}
with the prior probabilities in \eqref{iolai} and \eqref{jolai} is
denoted by $\varepsilon_{X_{ij} | Y_{ij}}$, and it is given by
\begin{align}\label{silito}
\varepsilon_{X_{ij} | Y_{ij} } =
\mathbb{E} \left[ \min\{\mathbb{P}[\mathsf{H}_i |Y_{ij}], \, \mathbb{P}[\mathsf{H}_j |Y_{ij}] \} \right],
\end{align}
with
\begin{align} \label{eq: Y_ij pdf}
Y_{ij} \sim  P_{Y_{ij}} = \mathbb{P}[ X_{ij} = i]  P_i + \mathbb{P}[ X_{ij} = j]  P_j,
\end{align}
and
\begin{align} \label{baijyes}
\mathbb{P}[ \mathsf{H}_i |Y_{ij}=y] =
\mathbb{P}[ X_{ij} = i] \, \frac{\mathrm{d}P_i}{\mathrm{d}P_{Y_{ij}}}(y).
\end{align}
The error probability achieved by the Bayesian $M$-ary MAP decision rule can be upper bounded
in terms of the error probabilities in \eqref{silito} of the $\tfrac12 \, M(M-1)$ binary
hypothesis tests in \eqref{eq: Hi}--\eqref{eq: Hj}, with $1 \leq i < j \leq M$, by means
of the following result.

\begin{theorem}\label{thm:UBMbasedon2HT}
\begin{align}  \label{eq:UBMbasedon2HT-2}
\varepsilon_{X|Y} &\leq \sum_{1 \leq i < j \leq M}
\mathbb{E} \left[ \min\{\mathbb{P}[ \mathsf{H}_i |Y], \mathbb{P}[ \mathsf{H}_j |Y] \} \right]\\[0.1cm]
&= \sum_{1\leq i < j \leq M} \left( \mathbb{P}  [ \mathsf{H}_i ] + \mathbb{P}[ \mathsf{H}_j ]  \right)
\, \varepsilon_{X_{ij} | Y_{ij} }
\label{eq:UBMbasedon2HT}
\end{align}
where $Y \sim P_Y$ with $P_Y$ in \eqref{eq: P_Y},
and $Y_{ij} \sim P_{Y_{ij}}$ in \eqref{eq: Y_ij pdf}.
\end{theorem}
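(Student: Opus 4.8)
The plan is to prove the two displayed relations separately: first the inequality in \eqref{eq:UBMbasedon2HT-2}, which bounds $\varepsilon_{X|Y}$ by a sum over all unordered pairs, and then the identity in \eqref{eq:UBMbasedon2HT}, which re-expresses each summand as the (prior-scaled) minimal error probability of the associated binary test.

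For the inequality I would argue conditionally on the observation and reduce it to a purely combinatorial fact about probability vectors. Writing $p_m(y) = \mathbb{P}[\mathsf{H}_m | Y=y]$, these numbers are nonnegative and sum to one for each fixed $y$, and the conditional minimal error probability equals $1 - \max_m p_m(y)$. The key step is the deterministic pointwise bound
\begin{align}
1 - \max_{1 \leq m \leq M} p_m(y) \leq \sum_{1 \leq i < j \leq M} \min\bigl\{ p_i(y), p_j(y) \bigr\}.
\end{align}
To see this, relabel the masses in nonincreasing order $p_{(1)}(y) \geq \cdots \geq p_{(M)}(y)$, so the left side is $\sum_{m \geq 2} p_{(m)}(y)$, while $\min\{p_{(i)}(y), p_{(j)}(y)\} = p_{(j)}(y)$ whenever $i<j$; hence the right side equals $\sum_{j \geq 2} (j-1)\, p_{(j)}(y)$, which dominates the left side term by term because $j-1 \geq 1$ for $j \geq 2$. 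Taking expectation over $Y \sim P_Y$ and invoking the representation of $\varepsilon_{X|Y}$ in \eqref{eq1: Hel-Rav} then gives \eqref{eq:UBMbasedon2HT-2}, with equality precisely when all but the two largest posteriors vanish $P_Y$-almost surely (the binary case).

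For the identity I would reduce each summand to the corresponding binary test by a change of measure. By Bayes' rule and the densities in \eqref{20171009}, $\mathbb{P}[\mathsf{H}_i | Y=y] = \mathbb{P}[\mathsf{H}_i]\, p_i(y)$, so the $(i,j)$ summand in \eqref{eq:UBMbasedon2HT-2} is $\mathbb{E}\bigl[\min\{\mathbb{P}[\mathsf{H}_i]\, p_i(Y), \mathbb{P}[\mathsf{H}_j]\, p_j(Y)\}\bigr]$. On the binary side, combining \eqref{iolai}--\eqref{jolai}, \eqref{eq: Y_ij pdf} and \eqref{baijyes} with $\tfrac{\mathrm{d}P_i}{\mathrm{d}P_{Y_{ij}}} = p_i \big/ \tfrac{\mathrm{d}P_{Y_{ij}}}{\mathrm{d}P_Y}$ and $\tfrac{\mathrm{d}P_{Y_{ij}}}{\mathrm{d}P_Y} = \bigl(\mathbb{P}[\mathsf{H}_i]\, p_i + \mathbb{P}[\mathsf{H}_j]\, p_j\bigr) \big/ \bigl(\mathbb{P}[\mathsf{H}_i] + \mathbb{P}[\mathsf{H}_j]\bigr)$ yields
\begin{align}
\mathbb{P}[\mathsf{H}_i | Y_{ij}=y] = \frac{\mathbb{P}[\mathsf{H}_i]\, p_i(y)}{\mathbb{P}[\mathsf{H}_i]\, p_i(y) + \mathbb{P}[\mathsf{H}_j]\, p_j(y)},
\end{align}
and symmetrically for the index $j$. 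Substituting these into \eqref{silito} and converting the expectation over $Y_{ij} \sim P_{Y_{ij}}$ into one over $Y \sim P_Y$, the common denominator $\mathbb{P}[\mathsf{H}_i]\, p_i + \mathbb{P}[\mathsf{H}_j]\, p_j$ cancels exactly against the Radon--Nikodym factor $\tfrac{\mathrm{d}P_{Y_{ij}}}{\mathrm{d}P_Y}$, leaving $\varepsilon_{X_{ij} | Y_{ij}} = \bigl(\mathbb{P}[\mathsf{H}_i] + \mathbb{P}[\mathsf{H}_j]\bigr)^{-1}\, \mathbb{E}\bigl[\min\{\mathbb{P}[\mathsf{H}_i]\, p_i(Y), \mathbb{P}[\mathsf{H}_j]\, p_j(Y)\}\bigr]$, which is exactly \eqref{eq:UBMbasedon2HT}.

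The conceptual heart is the pointwise combinatorial inequality, which is the only place where information beyond the two competing hypotheses is discarded; the rest is bookkeeping. The main technical care is needed in the change of measure for the identity: one must ensure that $P_{Y_{ij}}$ is dominated by $P_Y$ so that all the Radon--Nikodym derivatives are well defined, and check that the cancellation of the denominator is legitimate $P_Y$-almost everywhere, which holds because on the set where $\mathbb{P}[\mathsf{H}_i]\, p_i + \mathbb{P}[\mathsf{H}_j]\, p_j$ vanishes both $P_i$ and $P_j$ place no mass, so that set contributes nothing to either expectation.
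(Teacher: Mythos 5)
Your proposal is correct and follows essentially the same route as the paper: the pointwise bound $1 - \max_m \mathbb{P}[\mathsf{H}_m \,|\, Y=y] \leq \sum_{1\leq i<j\leq M}\min\{\mathbb{P}[\mathsf{H}_i\,|\,Y=y],\,\mathbb{P}[\mathsf{H}_j\,|\,Y=y]\}$ integrated over $P_Y$, followed by a Bayes-rule change-of-measure computation identifying each pairwise term with $\left(\mathbb{P}[\mathsf{H}_i]+\mathbb{P}[\mathsf{H}_j]\right)\varepsilon_{X_{ij}|Y_{ij}}$. Your sorted-order counting proof of the combinatorial inequality and your explicit Radon--Nikodym bookkeeping (including the domination and null-set caveats) are only cosmetic variants of the paper's argument, which observes that the pairs involving the largest posterior already sum to the left side and carries out the same cancellation in differential notation.
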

\begin{proof}
Every probability mass function $(q_1, \ldots , q_M )$ satisfies
\begin{align} \label{simple}
1 - \max \{ q_1 , \ldots , q_M \} \leq \sum_{1\leq i < j \leq M}  \min\{ q_i , q_j \}
\end{align}
because, due to symmetry, if without any loss of generality $q_1$ attains the maximum in
the left side of \eqref{simple}, then the partial sum in the right side of \eqref{simple}
over the $M-1$ terms that involve $q_1$ is equal $q_2 + \ldots + q_M = 1 - q_1$ which is
equal to the left side of \eqref{simple}. Hence,
\begin{align}
\varepsilon_{X|Y} &= \int \varepsilon_{X|Y=y} \, \mathrm{d}P_Y(y) \label{ee0} \\[0.1cm]
&= \int \left( 1 - \max \left\{ \mathbb{P}[\mathsf{H}_1 | Y=y], \ldots,
\mathbb{P}[\mathsf{H}_M | Y=y] \right\} \right) \, \mathrm{d}P_Y (y) \label{ee1} \\[0.1cm]
&\leq \int \sum_{1\leq i < j \leq M}  \min\{ \mathbb{P}[\mathsf{H}_i | Y=y],
\, \mathbb{P}[\mathsf{H}_j | Y=y] \} \, \mathrm{d}P_Y (y) \label{ee2} \\[0.1cm]
&= \sum_{1 \leq i < j \leq M}
\mathbb{E} \left[ \min\{\mathbb{P}[ \mathsf{H}_i |Y], \mathbb{P}[ \mathsf{H}_j |Y] \} \right]
\label{ee4}
\end{align}
where \eqref{ee2} follows from \eqref{simple} with $q_i \leftarrow \mathbb{P}[\mathsf{H}_i | Y=y]$
for $i \in \{1, \ldots, M\}$ and $y \in \set{Y}$. This proves \eqref{eq:UBMbasedon2HT-2}.

To show \eqref{eq:UBMbasedon2HT}, note that for all $i, j \in \{1, \ldots, M\}$ with $i \neq j$,
\begin{align}
& \left( \mathbb{P}[ \mathsf{H}_i ] +\mathbb{P}[ \mathsf{H}_j ] \right) \,
\varepsilon_{X_{ij} | Y_{ij} } \nonumber \\[0.1cm]
&= \left( \mathbb{P}[ \mathsf{H}_i ] +\mathbb{P}[ \mathsf{H}_j ] \right) \,
\int \min \bigl\{ \mathbb{P}[\mathsf{H}_i | Y_{ij}=y], \mathbb{P}[\mathsf{H}_j | Y_{ij}=y] \bigr\}
\, \mathrm{d}P_{Y_{ij}}(y) \label{ee5} \\[0.1cm]
&= \left( \mathbb{P}[ \mathsf{H}_i ] +\mathbb{P}[ \mathsf{H}_j ] \right) \,
\int \min \bigl\{ \mathbb{P}[X_{ij}=i] \, \mathrm{d}P_i(y), \,
\mathbb{P}[X_{ij}=j] \, \mathrm{d}P_j(y) \bigr\} \label{ee6} \\[0.1cm]
&= \int \min \bigl\{\mathbb{P}[ \mathsf{H}_i ] \, \mathrm{d}P_i(y), \,
\mathbb{P}[ \mathsf{H}_j ] \, \mathrm{d}P_j (y) \bigr\} \label{ee7} \\[0.1cm]
&= \mathbb{E} \left[ \min\{\mathbb{P}[ \mathsf{H}_i |Y], \mathbb{P}[ \mathsf{H}_j |Y] \} \right] \label{ee8}
\end{align}
where \eqref{ee5} and \eqref{ee6} hold due to \eqref{silito} and \eqref{baijyes}, respectively; \eqref{ee7} follows from
\eqref{iolai} and \eqref{jolai}; \eqref{ee8} follows from the equality
$\mathbb{P}[ \mathsf{H}_i |Y=y] = \mathbb{P}[\mathsf{H}_i] \, \frac{\mathrm{d}P_i}{\mathrm{d}P_Y}(Y=y)$.
Finally, \eqref{eq:UBMbasedon2HT} follows from \eqref{eq:UBMbasedon2HT-2} and \eqref{ee5}--\eqref{ee8}.
\end{proof}

The following immediate consequence of Theorem~\ref{thm:UBMbasedon2HT} is
the bound obtained in
\cite[(3)]{LeangJ97}.
\begin{corollary}
\begin{align}\label{eq2:UBMbasedon2HT}
\varepsilon_{X|Y} \leq
\tfrac12{M(M-1)} \, \max_{i\neq j}\varepsilon_{X_{ij} | Y_{ij} } .
\end{align}
\end{corollary}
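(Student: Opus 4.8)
The plan is to invoke the bound already established in Theorem~\ref{thm:UBMbasedon2HT}. That result supplies the chain $\varepsilon_{X|Y}\le\sum_{1\le i<j\le M}\bigl(\mathbb{P}[\mathsf{H}_i]+\mathbb{P}[\mathsf{H}_j]\bigr)\,\varepsilon_{X_{ij}|Y_{ij}}$, i.e.\ the inequality together with the identity on the right side of \eqref{eq:UBMbasedon2HT}. The entire task therefore reduces to coarsening this weighted sum of the pairwise binary error probabilities into the single-term form claimed in \eqref{eq2:UBMbasedon2HT}.

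First I would replace every binary error probability by its maximum over pairs, using $\varepsilon_{X_{ij}|Y_{ij}}\le\max_{i\ne j}\varepsilon_{X_{ij}|Y_{ij}}$, which is a constant independent of $(i,j)$. Next I would bound the prior weight of each summand by $\mathbb{P}[\mathsf{H}_i]+\mathbb{P}[\mathsf{H}_j]\le 1$, which holds because the priors $\{\mathbb{P}[\mathsf{H}_m]\}_{m=1}^M$ are nonnegative and sum to one, so any two of them add up to at most one. After these two elementary estimates each summand is bounded by $\max_{i\ne j}\varepsilon_{X_{ij}|Y_{ij}}$, and since the index set $\{(i,j):1\le i<j\le M\}$ has exactly $\binom{M}{2}=\tfrac12 M(M-1)$ elements, summing over it yields precisely \eqref{eq2:UBMbasedon2HT}.

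There is essentially no obstacle here; the statement follows immediately by the termwise bound above. The only point worth flagging is that the constant $\tfrac12 M(M-1)$ is a deliberately crude count, chosen so as to reproduce exactly the bound of \cite[(3)]{LeangJ97}. A sharper route is in fact available: since each index $i$ occurs in exactly $M-1$ unordered pairs, one has $\sum_{1\le i<j\le M}\bigl(\mathbb{P}[\mathsf{H}_i]+\mathbb{P}[\mathsf{H}_j]\bigr)=(M-1)\sum_{m=1}^M\mathbb{P}[\mathsf{H}_m]=M-1$, so factoring the maximum out of the sum before bounding the prior weights would give the tighter factor $M-1$ in place of $\tfrac12 M(M-1)$. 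I would nonetheless present the corollary as stated, since its purpose is to recover the known bound, and the claimed form is the immediate consequence of Theorem~\ref{thm:UBMbasedon2HT} via the two coarse estimates described above.
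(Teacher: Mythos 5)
Your proof is correct and matches the paper's intent: the corollary is presented there as an immediate consequence of Theorem~\ref{thm:UBMbasedon2HT}, obtained exactly as you do by bounding each prior weight $\mathbb{P}[\mathsf{H}_i]+\mathbb{P}[\mathsf{H}_j]$ by $1$, each binary error probability by its maximum over pairs, and counting the $\binom{M}{2}$ terms. Your side remark that factoring out the maximum before bounding the weights yields the sharper constant $M-1$ is also sound, and the paper itself makes this very point (in the remark following Theorem~\ref{thm:UBSV20160616}) when noting that Kanaya--Han and Leang--Johnson have the weaker constant $\tfrac12 M(M-1)$ in place of $M-1$.
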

We proceed to give upper bounds on $\varepsilon_{X|Y}$ based on
R\'enyi divergence and Chernoff information.
\begin{theorem}\label{thm:UBSV20160616}
\begin{align}
\varepsilon_{X|Y}
&\leq \sum_{1\leq i < j \leq M}  \inf_{\alpha\in(0,1)}
\mathbb{P}^\alpha[ \mathsf{H}_i ] \,
\mathbb{P}^{1-\alpha}[ \mathsf{H}_j ]
\, \exp\bigl( (\alpha -1) D_\alpha(P_i \| P_j)  \bigr)
\label{fra2rum1} \\
&\leq (M-1) \exp \Bigl( - \min_{i\neq j} C ( P_i \| P_j ) \Bigr).
\label{eq:SV20160611}
\end{align}
\end{theorem}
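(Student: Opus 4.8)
The plan is to derive the two inequalities in succession: I would obtain \eqref{fra2rum1} as a direct consequence of Theorem~\ref{thm:UBMbasedon2HT}, and then deduce \eqref{eq:SV20160611} from \eqref{fra2rum1} by a pairwise Chernoff bound combined with a combinatorial reduction.

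First I would establish \eqref{fra2rum1}. Starting from the pairwise decomposition \eqref{eq:UBMbasedon2HT-2}, it suffices to bound each summand $\mathbb{E}\bigl[\min\{\mathbb{P}[\mathsf{H}_i|Y],\mathbb{P}[\mathsf{H}_j|Y]\}\bigr]$. Writing the posteriors through the densities $p_m=\frac{\mathrm{d}P_m}{\mathrm{d}P_Y}$ as $\mathbb{P}[\mathsf{H}_m|Y]=\mathbb{P}[\mathsf{H}_m]\,p_m(Y)$ (exactly as in the proof of Theorem~\ref{thm:UBMbasedon2HT}), I would apply the elementary inequality $\min\{t,s\}\le t^\alpha s^{1-\alpha}$ for $\alpha\in(0,1)$ and $t,s\ge0$ (already used in \eqref{eq3: Hel-Rav}) to obtain $\min\{\mathbb{P}[\mathsf{H}_i]p_i(Y),\mathbb{P}[\mathsf{H}_j]p_j(Y)\}\le\mathbb{P}^\alpha[\mathsf{H}_i]\,\mathbb{P}^{1-\alpha}[\mathsf{H}_j]\,p_i^\alpha(Y)\,p_j^{1-\alpha}(Y)$. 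Taking expectations with $Y\sim P_Y$ and recognizing $\mathbb{E}\bigl[p_i^\alpha(Y)p_j^{1-\alpha}(Y)\bigr]=\exp\bigl((\alpha-1)D_\alpha(P_i\|P_j)\bigr)$ by the definition \eqref{eq:RD0} (with $R=P_Y$, which dominates both $P_i$ and $P_j$) yields the pairwise bound. Taking the infimum over $\alpha\in(0,1)$ and summing over $1\le i<j\le M$ then gives \eqref{fra2rum1}.

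Next I would pass from \eqref{fra2rum1} to \eqref{eq:SV20160611}. For each pair, since the weighted geometric mean satisfies $\mathbb{P}^\alpha[\mathsf{H}_i]\,\mathbb{P}^{1-\alpha}[\mathsf{H}_j]\le\max\{\mathbb{P}[\mathsf{H}_i],\mathbb{P}[\mathsf{H}_j]\}$ for every $\alpha\in(0,1)$, the expression minimized over $\alpha$ is, at each $\alpha$, at most $\max\{\mathbb{P}[\mathsf{H}_i],\mathbb{P}[\mathsf{H}_j]\}\,\exp\bigl(-(1-\alpha)D_\alpha(P_i\|P_j)\bigr)$. Since the prior factor is now constant in $\alpha$, taking the infimum acts only on the exponential and, by the definition of Chernoff information in \eqref{eq:def:chernoffinformation}, produces the per-pair bound $\max\{\mathbb{P}[\mathsf{H}_i],\mathbb{P}[\mathsf{H}_j]\}\,\exp\bigl(-C(P_i\|P_j)\bigr)$. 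Bounding $C(P_i\|P_j)\ge\min_{k\neq l}C(P_k\|P_l)$ factors the exponential out of the sum and leaves $\sum_{1\le i<j\le M}\max\{\mathbb{P}[\mathsf{H}_i],\mathbb{P}[\mathsf{H}_j]\}$.

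The crux — and the step that produces the constant $M-1$ rather than the naive $\binom{M}{2}$ — is the combinatorial inequality $\sum_{1\le i<j\le M}\max\{\mathbb{P}[\mathsf{H}_i],\mathbb{P}[\mathsf{H}_j]\}\le M-1$. I would prove it by relabeling so that $\mathbb{P}[\mathsf{H}_1]\ge\cdots\ge\mathbb{P}[\mathsf{H}_M]$, whence $\max\{\mathbb{P}[\mathsf{H}_i],\mathbb{P}[\mathsf{H}_j]\}=\mathbb{P}[\mathsf{H}_i]$ for every $i<j$, so the sum collapses to $\sum_{i=1}^{M}(M-i)\,\mathbb{P}[\mathsf{H}_i]\le(M-1)\sum_{i=1}^{M}\mathbb{P}[\mathsf{H}_i]=M-1$. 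Combining this with the per-pair bound delivers \eqref{eq:SV20160611}. I expect the only subtlety to be the interchange of the infimum over $\alpha$ with the $\max$ bound; this is harmless because the inequality holds for every fixed $\alpha$ before the infimum is taken, and if the supremum defining $C(P_i\|P_j)$ is not attained one simply passes to a maximizing sequence of orders, with the prior factor uniformly bounded by $1$ throughout.
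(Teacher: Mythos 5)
Your proof is correct, and its overall architecture matches the paper's: the pairwise decomposition of Theorem~\ref{thm:UBMbasedon2HT}, the geometric-mean bound $\min\{t,s\}\le t^\alpha s^{1-\alpha}$ to obtain \eqref{fra2rum1}, and a decoupling of the priors from the exponential to obtain \eqref{eq:SV20160611}. There are two differences worth noting. First, for \eqref{fra2rum1} you start from \eqref{eq:UBMbasedon2HT-2} and re-derive the Hellman--Raviv-type bound inline, via $\mathbb{P}[\mathsf{H}_m|Y]=\mathbb{P}[\mathsf{H}_m]\,p_m(Y)$ and the definition \eqref{eq:RD0} with reference measure $P_Y$, whereas the paper starts from \eqref{eq:UBMbasedon2HT} and substitutes the priors \eqref{iolai}--\eqref{jolai} into the cited binary bound \eqref{eq:binHR-2}; these are the same computation packaged differently, and your identification $\mathbb{E}\bigl[p_i^\alpha(Y)\,p_j^{1-\alpha}(Y)\bigr]=\exp\bigl((\alpha-1)D_\alpha(P_i\|P_j)\bigr)$ is exactly what makes the cited bound work. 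Second, and more substantively, to pass to \eqref{eq:SV20160611} the paper uses the looser estimate $t^\alpha s^{1-\alpha}\le t+s$, under which the prior sum collapses to the exact identity $\sum_{1\le i<j\le M}\bigl(\mathbb{P}[\mathsf{H}_i]+\mathbb{P}[\mathsf{H}_j]\bigr)=M-1$; you instead keep the middle term $t^\alpha s^{1-\alpha}\le\max\{t,s\}$ and prove $\sum_{1\le i<j\le M}\max\bigl\{\mathbb{P}[\mathsf{H}_i],\mathbb{P}[\mathsf{H}_j]\bigr\}\le M-1$ by sorting the priors. Your route is in fact the sharper one: it is precisely the refinement the paper records in the remark following the theorem (via $\max\{t,s\}=\tfrac12(t+s+|t-s|)$, giving the prefactor $\tfrac{M-1}{2}+\tfrac12\sum_{i<j}\bigl|\mathbb{P}[\mathsf{H}_i]-\mathbb{P}[\mathsf{H}_j]\bigr|$), and your sorting argument shows cleanly that this refined prefactor never exceeds $M-1$. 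Your handling of the infimum over $\alpha$ --- bounding pointwise in $\alpha$ before taking the infimum, so that attainment of the supremum defining $C(P_i\|P_j)$ is irrelevant --- is also sound.
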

\begin{proof}
Consider the binary hypothesis test in \eqref{eq: Hi}--\eqref{eq: Hj} with its
minimal error probability $\varepsilon_{X_{ij} | Y_{ij} }$.
Substituting the prior probabilities in \eqref{iolai}--\eqref{jolai} into the Hellman-Raviv bound \eqref{eq:binHR-2},
we obtain
\begin{align}
& \left( \mathbb{P}  [ \mathsf{H}_i ] +\mathbb{P}  [ \mathsf{H}_j ]  \right) \varepsilon_{X_{ij} | Y_{ij} } \nonumber \\[0.1cm]
& \leq \left( \mathbb{P}  [ \mathsf{H}_i ] +\mathbb{P}  [ \mathsf{H}_j ]  \right) \inf_{\alpha \in (0,1)}
\Bigl\{ \mathbb{P}^{\alpha}[ X_{ij} = i] \, \mathbb{P}^{1-\alpha}[ X_{ij} = j]
\, \exp \bigl(  ( \alpha - 1) D_\alpha ( P_i \| P_j ) \bigr) \Bigr\} \\[0.1cm]
& = \inf_{\alpha\in(0,1)} \mathbb{P}^\alpha [ \mathsf{H}_i  ] \, \mathbb{P}^{1-\alpha} [ \mathsf{H}_j  ]
\exp \bigl(  ( \alpha - 1) D_\alpha ( P_i \| P_j ) \bigr)
\end{align}
which, upon substitution in \eqref{eq:UBMbasedon2HT}, yields \eqref{fra2rum1}.
To show \eqref{eq:SV20160611}, note that for all $t,s \geq 0$ and $\alpha \in (0,1)$
\begin{align} \label{good}
t^{\alpha} \, s^{1-\alpha} \leq \max\{t,s\} \leq t+s.
\end{align}
Hence, we get
\begin{align}
\label{good1}
\varepsilon_{X|Y}
&\leq \sum_{1\leq i < j \leq M}
(\mathbb{P}[ \mathsf{H}_i ] + \mathbb{P}[ \mathsf{H}_j ])
\, \inf_{\alpha\in(0,1)} \exp\bigl( (\alpha -1) D_\alpha(P_i \| P_j)  \bigr) \\[0.1cm]
\label{good2}
&= \sum_{1\leq i < j \leq M}
(\mathbb{P}[ \mathsf{H}_i ] + \mathbb{P}[ \mathsf{H}_j ])
\, \exp\bigl(-C(P_i \| P_j)\bigr) \\[0.1cm]
\label{good3}
&\leq \exp \Bigl( - \min_{i\neq j} C ( P_i \| P_j ) \Bigr) \,
\sum_{1\leq i < j \leq M}
(\mathbb{P}[ \mathsf{H}_i ] + \mathbb{P}[ \mathsf{H}_j ])
\end{align}
where \eqref{good1} holds due to \eqref{fra2rum1} and \eqref{good}
with $t \leftarrow \mathbb{P}[ \mathsf{H}_i ]$ and
$s \leftarrow \mathbb{P}[ \mathsf{H}_j ]$; \eqref{good2} holds by
the definition in \eqref{eq:def:chernoffinformation} with $P \leftarrow P_i$
and $Q \leftarrow P_j$. The sum in the right
side of \eqref{good3} satisfies
\begin{align}
\sum_{1 \leq i < j \leq M}
(\mathbb{P}[ \mathsf{H}_i ] + \mathbb{P}[ \mathsf{H}_j ])
\label{good4}
&= \tfrac12 \sum_{i \neq j} (\mathbb{P}[ \mathsf{H}_i ]
+ \mathbb{P}[ \mathsf{H}_j ]) \\
\label{good5}
&= \tfrac12 \sum_{i=1}^M \sum_{j=1}^M (\mathbb{P}[ \mathsf{H}_i ]
+ \mathbb{P}[ \mathsf{H}_j ]) - \sum_{i=1}^M \mathbb{P}[H_i] \\
&= M-1. \label{good6}
\end{align}
The result in \eqref{eq:SV20160611} follows from \eqref{good1}--\eqref{good6}.
\end{proof}
\begin{remark}
The bound on $\varepsilon_{X|Y}$ in \eqref{eq:SV20160611} can be improved by
relying on the middle term in \eqref{good}, which then gives
that for $t,s \geq 0$ and $\alpha \in (0,1)$
\begin{align} \label{good-ver2}
t^{\alpha} \, s^{1-\alpha} \leq \max\{t,s\} = \tfrac12 (t+s+|t-s|).
\end{align}
Following the analysis in \eqref{good1}--\eqref{good6}, we get
\begin{align}
\varepsilon_{X|Y} \leq \left( \tfrac{M-1}{2} + \tfrac12 \sum_{1 \leq i < j \leq M}
\bigl| \mathbb{P}[ \mathsf{H}_i ] - \mathbb{P}[ \mathsf{H}_j ] \bigr| \right)
\exp \Bigl( - \min_{i\neq j} C ( P_i \| P_j ) \Bigr),
\end{align}
which yields an improvement over \eqref{eq:SV20160611} by a factor of at most
$\tfrac12$, attained when $X$ is equiprobable.
\end{remark}
\begin{remark}
Using the identity
\begin{align}
(1-\alpha) D_{\alpha}(P\|Q) = \alpha D_{1-\alpha}(Q \| P)
\end{align}
for all $\alpha \in (0,1)$ and probability measures $P$ and $Q$, it is easy to check that
Theorems~\ref{theorem: generalized Hellman-Raviv bound}
and~\ref{thm:UBSV20160616} result in the same bound for binary hypothesis testing.
\end{remark}
\begin{remark}
Kanaya and Han \cite{kanaya1995asymptotics} and Leang and Johnson \cite{LeangJ97} give
a weaker bound with $\frac12 M (M-1)$ in lieu of $M-1$ in \eqref{eq:SV20160611}.
\end{remark}
\begin{remark}
In the absence of observations,
the bound in  \eqref{eq: generalized Hellman-Raviv bound} is tight
since $\varepsilon_{X|Y} = 1 - \max_{i} \mathbb{P}  [ \mathsf{H}_i ].$
On the other hand, \eqref{eq2:UBMbasedon2HT} and \eqref{fra2rum1} need not
be tight in that case, and in fact may be strictly larger than~1.
\end{remark}

Numerical experimentation with Examples~\ref{example:45b}--\ref{example:fig2}
and others shows that, in general, the Arimoto-R\'enyi conditional entropy bounds
are tighter than \eqref{eq: generalized Hellman-Raviv bound} and \eqref{fra2rum1}.

\section{Arimoto-R\'{e}nyi Conditional Entropy Averaged over Codebook Ensembles}
\label{section: random coding}
In this section we consider the channel coding setup with a code ensemble
$\mathcal{C}$, over which we are interested in averaging the Arimoto-R\'{e}nyi
conditional entropy of the channel input given the channel output. We denote such
averaged quantity by $\expectation_{\set{C}}\bigl[H_{\alpha}(X^n | Y^n)\bigr]$
where $X^n = (X_1, \ldots, X_n)$ and $Y^n = (Y_1, \ldots, Y_n)$.
Some motivation for this study arises from the fact that the normalized equivocation
$\frac1n H(X^n|Y^n)$ as a reliability measure was used by Shannon \cite{CES48}
in his proof that reliable communication is impossible at rates above channel capacity;
furthermore, the asymptotic convergence to zero of the equivocation $H(X^n | Y^n)$
at rates below capacity was studied by Feinstein \cite[Section~3]{Feinstein54}.

We can capitalize on the convexity of $d_\alpha (\cdot \| q) $ for $\alpha \in [0,1]$
(see \cite[Theorem~11]{ErvenH14} and \cite[Section~6.A]{ErvenH14}) to claim that
in view of Theorem~\ref{theorem: generalized Fano inequality}, for any ensemble
$\set{C}$ of size-$M$ codes and blocklength $n$,
\begin{align} \label{gf1}
\expectation_{\set{C}}\bigl[H_{\alpha}(X^n | Y^n)\bigr]
\leq \log M - d_{\alpha}\bigl( \overbar{\varepsilon}\, \| 1-\tfrac1M \bigr)
\end{align}
where $\alpha \in [0,1]$ and $\overbar{\varepsilon}$ is the error probability of the
maximum-likelihood decoder averaged over the code ensemble, or an upper bound
thereof which does not exceed $1-\frac1M$ (for such upper bounds see, e.g.,
\cite[Chapters~1--4]{FnT06}).
Analogously, in view of Theorem~\ref{theorem: generalized Fano-Renyi - list decoding}
for list decoding with a fixed list size $L$, the same convexity argument implies that
\begin{align} \label{gf2}
\expectation_{\set{C}}\bigl[H_{\alpha}(X^n | Y^n)\bigr]  \leq \log M -
d_{\alpha}\bigl( \overbar{P_{\set{L}}}\, \| 1-\tfrac{L}{M} \bigr)
\end{align}
where $\overbar{P_{\set{L}}}$ denotes the minimum list decoding error probability averaged
over the ensemble, or an upper bound thereof which does not exceed $1-\frac{L}{M}$ (for
such bounds see, e.g., \cite{Forney68}, \cite[Problem~5.20]{Gallager_book1968},
\cite[Section~5]{HofSS10} and \cite{Merhav14}).

In the remainder of this section, we consider the discrete memoryless channel (DMC) model
\begin{align}
P_{Y^n|X^n} (y^n |x^n) = \prod_{i=1}^n P_{Y|X}(y_i | x_i),
\end{align}
an ensemble $\set{C}$ of size-$M$ codes and blocklength $n$ such that
the $M$ messages are assigned independent codewords, drawn i.i.d. with
per-letter distribution $P_X$ on the input alphabet
\begin{align}
P_{X^n} (x^n) = \prod_{i=1}^n P_X(x_i).
\end{align}
In this setting, Feder and Merhav show in \cite{FederM94} the following result
for the Shannon conditional entropy.
\begin{theorem} \cite[Theorem 3]{FederM94}
\label{theorem: fm}
For a DMC with transition
probability matrix $P_{Y|X}$, the conditional entropy of the transmitted codeword given the
channel output, averaged over a random coding selection with per-letter distribution $P_X$
such that $I(P_X,P_{Y|X})>0$, is
bounded by
\begin{align}\label{mf1}
\expectation_{\set{C}}\bigl[H (X^n | Y^n)\bigr]
&\leq \inf_{\rho \in (0,1]} \left( 1 + \frac1\rho \right)
\exp \left( - n \rho \left( I_{\frac{1}{1 +\rho}} (P_X, P_{Y|X}) -  R\right) \right)
\log \mathrm{e} \\[0.1cm]
&\leq  \left(1 + \frac1{\rho^\ast (R,P_X)} \right)
\exp \bigl( -n E_{\mathrm{r}}(R, P_X ) \bigr) \log \mathrm{e}
\label{mf2}
\end{align}
with
\begin{align}
R = \tfrac1n \log M \leq I(P_X, P_{Y|X}),
\end{align}
where $E_{\mathrm{r}}$ in the right side of \eqref{mf2} denotes the random-coding error exponent, given by (recall \eqref{eq: E0_a})
\begin{align}\label{erq}
E_{\mathrm{r}}(R, P_X ) = \max_{\rho\in[0,1]} \rho\, \left( I_{\frac{1}{1 +\rho}} (P_X, P_{Y|X}) -  R \right),
\end{align}
and the argument that maximizes \eqref{erq} is denoted in \eqref{mf2} by $\rho^\ast (R,P_X)$.
\end{theorem}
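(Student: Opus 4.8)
The plan is to peel the problem down to the equivocation of the \emph{message} and then to recognize the resulting quantity as one governed by Gallager's random-coding exponent, the function $E_0$ of \eqref{eq0: E0_b} entering through its identification \eqref{eq: E0_a} with the $\alpha$-mutual information at order $\tfrac1{1+\rho}$. Let $W$ be the equiprobable message, so that the transmitted codeword is $X^n = c_W$ for the random codebook $\set{C} = (c_1, \ldots, c_M)$. Since $X^n$ is a deterministic function of $W$ and both share the output $Y^n$, we have $H(X^n|Y^n) \leq H(W|Y^n)$, so it suffices to bound $\expectation_{\set{C}}\bigl[H(W|Y^n)\bigr]$. Because the prior is uniform,
\begin{align}
\frac1{P_{W|Y^n}(W|Y^n)} = 1 + \sum_{m \neq W} \frac{P_{Y^n|X^n}(Y^n|c_m)}{P_{Y^n|X^n}(Y^n|c_W)},
\end{align}
so the equivocation is the expected logarithm of an accumulated likelihood ratio.

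Next I would convert this logarithm into a quantity of the same nature as an average decoding-error probability. The cleanest route is to represent $H(W|Y^n)$ as an integral, over a posterior level $u\in(0,1]$ weighted by $\tfrac1u$, of the error probability of the \emph{value-thresholded} list decoder that retains every message whose posterior exceeds $u$; equivalently one may use the elementary bound $\ln(1+t)\le\tfrac1\rho\,t^{\rho}$, valid for $t\ge0$ and $\rho\in(0,1]$, to reduce the equivocation to a $\rho$-th moment of the likelihood-ratio sum above. Either way the task becomes that of controlling, after averaging over $\set{C}$, an averaged list-type error probability; the generalized Fano inequality for list decoding in Theorem~\ref{theorem: generalized Fano-Renyi - list decoding} and the standard list random-coding bounds (e.g.\ \cite{Forney68} and \cite[Problem~5.20]{Gallager_book1968}) are the natural instruments here.

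The heart of the proof is this ensemble average. Conditioning on the transmitted codeword and its output and averaging over the independently drawn competitors, I would invoke Gallager's random-coding technique: bound the list-error event with a free inner exponent, apply Jensen's inequality over the i.i.d.\ competing codewords, optimize the inner exponent to the conjugate value $\tfrac1{1+\rho}$, and single-letterize via the memoryless product structure. This yields an ensemble bound of the form $(M-1)^{\rho}\exp\!\bigl(-n\,E_0(\rho,P_X)\bigr)$ with $E_0$ as in \eqref{eq0: E0_b}. Using $(M-1)^{\rho}\le\exp(n\rho R)$ with $R=\tfrac1n\log M$, the identity $\rho\,I_{\frac1{1+\rho}}(P_X,P_{Y|X})=E_0(\rho,P_X)$ from \eqref{eq: E0_a}, and a careful accounting of the self-term in the likelihood-ratio sum (which supplies the additive constant), the exponent collapses to $\rho\bigl(I_{\frac1{1+\rho}}(P_X,P_{Y|X})-R\bigr)$ with prefactor $1+\tfrac1\rho$. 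Taking the infimum over $\rho\in(0,1]$ gives \eqref{mf1}; restricting to $R\le I(P_X,P_{Y|X})$ and specializing $\rho$ to the maximizer $\rho^{\ast}(R,P_X)$ of \eqref{erq} then yields \eqref{mf2} with the random-coding exponent $E_{\mathrm r}$.

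The main obstacle is precisely reaching Gallager's exponent $E_0$ at the conjugate order $\tfrac1{1+\rho}$, rather than a weaker exponent. If one were to bound the plain likelihood-ratio moment directly by Jensen's inequality, the inner exponent is forced to be $s=1$, and the resulting rate falls short of $E_0$ (as one checks, e.g., on the binary symmetric channel, where at $\rho=1$ it degenerates to $0$ while $E_0(1,P_X)>0$); such a bound therefore cannot deliver the claim. The correct argument must instead exploit the freedom in the decoding level, i.e.\ route through the thresholded/list-decoding representation so that Gallager's list bound, with its optimizing inner order $\tfrac1{1+\rho}$, becomes available. Once that sharp exponent is secured, the passage from \eqref{mf1} to \eqref{mf2} is the routine convex-duality optimization defining $E_{\mathrm r}$ in \eqref{erq}.
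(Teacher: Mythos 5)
You should know at the outset that the paper contains no proof of Theorem~\ref{theorem: fm}: it is quoted verbatim from \cite[Theorem~3]{FederM94}, so your attempt can only be compared with the Feder--Merhav argument itself, which is of the ``moment'' type you mention rather than the threshold/list route you ultimately advocate. Much of your setup is right: the reduction to $H(W|Y^n)$ for the uniform message $W$, the identity $1/P_{W|Y^n}(W|Y^n)=1+\sum_{m\neq W}L_m$ with $L_m=P_{Y^n|X^n}(Y^n|c_m)/P_{Y^n|X^n}(Y^n|c_W)$, and especially your diagnosis that plain Jensen (inner exponent $1$) cannot reach $E_0$ at the conjugate order --- that is exactly the crux.

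The gap is in the step you call the heart of the proof, and it afflicts both of your routes. In the moment route, after $\ln(1+t)\le t^{\rho}/\rho$ you must install the conjugate order via $\bigl(\sum_{m\neq W}L_m\bigr)^{\rho}\le\bigl(\sum_{m\neq W}L_m^{s}\bigr)^{\rho/s}$ with $s\le 1$; choosing $s=\tfrac1{1+\rho}$ makes the outer exponent $\rho(1+\rho)$, which exceeds $1$ (so Jensen over the i.i.d.\ competitors is unavailable) whenever $\rho>\tfrac{\sqrt{5}-1}{2}$, and even where it applies, the single-letterized quantity $\sum_y\bigl(\sum_x P_X(x)P_{Y|X}^{s}(y|x)\bigr)^{\rho(1+\rho)}\sum_x P_X(x)P_{Y|X}^{1-\rho}(y|x)$ is not $\exp\bigl(-E_0(\rho,P_X)\bigr)$ for any order. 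The correct calibration --- which is what produces the stated constant --- is to apply the logarithm inequality at the \emph{conjugate} exponent, $\ln(1+t)\le\bigl(1+\tfrac1\rho\bigr)t^{\rho/(1+\rho)}$, then $\bigl(\sum_{m\neq W}L_m\bigr)^{\rho/(1+\rho)}\le\bigl(\sum_{m\neq W}L_m^{1/(1+\rho)}\bigr)^{\rho}$, and only then Jensen at outer exponent exactly $\rho\le 1$; single-letterization then yields $(M-1)^{\rho}\exp\bigl(-nE_0(\rho,P_X)\bigr)$ and hence \eqref{mf1} via \eqref{eq: E0_a}. In particular, the prefactor $1+\tfrac1\rho$ comes from the logarithm inequality itself, not from ``careful accounting of the self-term'' as you assert. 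Your preferred alternative --- the thresholded-list representation $H(W|Y^n)=\log \mathrm{e}\int_0^1\mathbb{P}\bigl[P_{W|Y^n}(W|Y^n)\le u\bigr]\,\frac{\mathrm{d}u}{u}$ with Gallager's event bound on $\mathbb{P}\bigl[\sum_{m\neq W}L_m\ge\tfrac{1-u}{u}\bigr]$ --- does secure the exponent $n\bigl(E_0(\rho,P_X)-\rho R\bigr)$, because the indicator decouples the inner order from the outer one; but the $u$-integral then produces the Beta-function constant $\pi/\sin\bigl(\tfrac{\pi\rho}{1+\rho}\bigr)$, which strictly exceeds $1+\tfrac1\rho$ (at $\rho=1$ it is $\pi$ versus $2$), so it proves a weaker inequality than \eqref{mf1} as stated; moreover the self-term must be kept out of the threshold event, since bounding $\mathbb{P}\bigl[1+\sum_{m\neq W}L_m\ge\tfrac1u\bigr]$ wholesale leaves a contribution to the integral that does not vanish with $n$.
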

\begin{remark}
Since $H_{\alpha}(X^n | Y^n)$ is monotonically decreasing in $\alpha$ (see
Proposition~\ref{prop1: mon}), the upper bounds in \eqref{mf1}--\eqref{mf2}
also apply to $\expectation_{\set{C}}\bigl[H_\alpha (X^n | Y^n)\bigr]$ for
$\alpha \in [1,\infty]$.
\end{remark}

\par
We are now ready to state and prove the main result in this section.
\begin{theorem}  \label{theorem: random coding}
The following results hold under the setting in Theorem~\ref{theorem: fm}:
\begin{enumerate}
\item For all $\alpha >0$, and rates $R$ below the channel capacity $C$,
\begin{align} \label{sv1}
\underset{n \to \infty}{\lim\sup} \, -\frac1n \,
\log \expectation_{\set{C}}\bigl[H_{\alpha}(X^n | Y^n)\bigr]
\leq E_{\text{sp}}(R),
\end{align}
where $E_{\text{sp}}(\cdot)$ denotes the sphere-packing error exponent
\begin{align}\label{espq}
E_{\text{sp}}(R) = \sup_{\rho \geq 0} \, \rho \,
\left( \max_{Q_X}  I_{\frac{1}{1 +\rho}} (Q_X, P_{Y|X}) -  R \right)
\end{align}
with the maximization in the right side of \eqref{espq} over all
single-letter distributions $Q_X$ defined on the input alphabet.
\item For all $\alpha \in (0,1)$,
\begin{align}  \label{sv2}
\underset{n \to \infty}{\lim\inf} \, -\frac1n \, \log \expectation_{\set{C}}\bigl[H_{\alpha}(X^n | Y^n)\bigr]
\geq \alpha E_{\mathrm{r}}(R,P_X) - (1-\alpha)R,
\end{align}
provided that
\begin{align} \label{eq: R_alpha}
R < R_{\alpha}(P_X, P_{Y|X})
\end{align}
where $R_{\alpha}(P_X, P_{Y|X})$ is the unique solution $r \in (0, I (P_X, P_{Y|X}))$ to
\begin{align}  \label{Eq. for R_alpha}
E_{\mathrm{r}}(r, P_X) = \left( \frac1\alpha - 1 \right) r.
\end{align}
\item  \label{item: R_alpha prop.}
The rate $R_{\alpha}(P_X, P_{Y|X})$ is monotonically increasing and continuous in $\alpha \in (0,1)$, and
\begin{align}
\label{eq2: limit R_alpha}
\lim_{\alpha \downarrow 0} R_{\alpha}(P_X, P_{Y|X}) &= 0,\\
\label{eq1: limit R_alpha}
\lim_{\alpha \uparrow 1} R_{\alpha}(P_X, P_{Y|X}) &= I (P_X, P_{Y|X}).
\end{align}
\end{enumerate}
\end{theorem}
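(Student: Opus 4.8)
\emph{Part 1.} The plan is to lower-bound $\expectation_{\set{C}}\bigl[H_{\alpha}(X^n | Y^n)\bigr]$ by the ensemble-averaged error probability and then invoke the sphere-packing converse. By Proposition~\ref{prop1: mon}, $H_{\alpha}(X^n | Y^n) \geq H_{\infty}(X^n | Y^n)$ for every $\alpha > 0$, and \eqref{varepsilon_X|Y} gives $H_{\infty}(X^n | Y^n) = \log\frac{1}{1-\varepsilon_{X^n|Y^n}} \geq \varepsilon_{X^n|Y^n}$. Averaging over the ensemble yields $\expectation_{\set{C}}\bigl[H_{\alpha}(X^n | Y^n)\bigr] \geq \overbar{\varepsilon}$, where $\overbar{\varepsilon}$ is the ensemble-averaged MAP (equivalently, ML) decoding error probability. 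Since every length-$n$ rate-$R$ code with $R<C$ obeys the sphere-packing lower bound $\varepsilon \geq \exp(-n(E_{\text{sp}}(R) + o(1)))$ with $E_{\text{sp}}$ as in \eqref{espq}, the minimum over codebooks does too, and hence so does the average $\overbar{\varepsilon}$. Thus $-\tfrac1n \log \expectation_{\set{C}}\bigl[H_{\alpha}(X^n | Y^n)\bigr] \leq E_{\text{sp}}(R) + o(1)$, and taking $\limsup$ gives \eqref{sv1}.

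\emph{Part 2.} Here I would run the opposite-direction bound, starting from the ensemble version of the generalized Fano inequality \eqref{gf1}, which is valid for $\alpha \in (0,1)$ by the convexity of $d_{\alpha}(\cdot \| q)$. Using identity \eqref{useful equality} with $(\theta, s, t) = (M, M-1, \overbar{\varepsilon})$ one rewrites \eqref{gf1} as $\expectation_{\set{C}}\bigl[H_{\alpha}(X^n | Y^n)\bigr] \leq \frac{1}{1-\alpha} \log\bigl( \overbar{\varepsilon}^{\,\alpha} (M-1)^{1-\alpha} + (1-\overbar{\varepsilon})^{\alpha} \bigr)$. Bounding $(1-\overbar{\varepsilon})^{\alpha} \leq 1$ and $(M-1)^{1-\alpha} \leq M^{1-\alpha}$, and using $\log(1+x) \leq x$, gives $\expectation_{\set{C}}\bigl[H_{\alpha}(X^n | Y^n)\bigr] \leq \frac{1}{1-\alpha}\, \overbar{\varepsilon}^{\,\alpha} M^{1-\alpha}$. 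Substituting $M = \exp(nR)$ and the Gallager random-coding estimate $\overbar{\varepsilon} \leq \exp(-n E_{\mathrm{r}}(R,P_X))$ (the bound underlying Theorem~\ref{theorem: fm}, which is admissible in \eqref{gf1} by the monotonicity of $t \mapsto \log M - d_{\alpha}(t \| 1-\tfrac1M)$) produces $\expectation_{\set{C}}\bigl[H_{\alpha}(X^n | Y^n)\bigr] \leq \frac{1}{1-\alpha}\exp\bigl(-n[\alpha E_{\mathrm{r}}(R,P_X) - (1-\alpha)R]\bigr)$, from which \eqref{sv2} follows after taking $-\tfrac1n\log(\cdot)$ and $\liminf$. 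The hypothesis $R < R_{\alpha}$ is exactly what makes the exponent strictly positive: $\alpha E_{\mathrm{r}}(R,P_X) - (1-\alpha)R > 0 \iff E_{\mathrm{r}}(R,P_X) > (\tfrac1\alpha - 1)R$, and the latter is precisely $R < R_{\alpha}$ by the defining equation \eqref{Eq. for R_alpha}.

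\emph{Part 3.} To analyze $R_{\alpha}$, I would study \eqref{Eq. for R_alpha} as the intersection of the curve $r \mapsto E_{\mathrm{r}}(r,P_X)$ with the ray $r \mapsto (\tfrac1\alpha - 1)r$. Recall that $E_{\mathrm{r}}(\cdot,P_X)$ is continuous, convex and non-increasing, strictly positive on $[0, I(P_X,P_{Y|X}))$ and equal to $0$ at $I(P_X,P_{Y|X})$. The function $G(r,\alpha) = E_{\mathrm{r}}(r,P_X) - (\tfrac1\alpha - 1)r$ is then jointly continuous, strictly decreasing in $r$, and, for each fixed $r>0$, strictly increasing in $\alpha$. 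Strict monotonicity in $r$ gives a unique zero $R_{\alpha} \in (0, I(P_X,P_{Y|X}))$, and the standard implicit-function argument for a function strictly monotone in its first argument yields continuity of $\alpha \mapsto R_{\alpha}$; monotonicity follows since, for $\alpha_1 < \alpha_2$, $G(R_{\alpha_1},\alpha_2) > G(R_{\alpha_1},\alpha_1) = 0$ forces $R_{\alpha_2} > R_{\alpha_1}$. The limits \eqref{eq2: limit R_alpha}--\eqref{eq1: limit R_alpha} follow by driving the slope $\tfrac1\alpha - 1 \to \infty$ (pushing the intersection to $r=0$) and $\tfrac1\alpha - 1 \to 0$ (forcing $E_{\mathrm{r}}(R_{\alpha},P_X) \to 0^{+}$, hence $R_{\alpha} \to I(P_X,P_{Y|X})$).

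The main conceptual point is the two-sided sandwich: the entropy is lower bounded by the error probability (yielding the sphere-packing limit in the first claim via the converse) and upper bounded through the generalized Fano inequality (yielding the random-coding exponent in the second claim). The most delicate bookkeeping is in the second claim, namely verifying that $\alpha E_{\mathrm{r}}(R,P_X) - (1-\alpha)R$ is genuinely positive exactly on $R < R_{\alpha}$; the third claim is then a routine convex-analysis argument, and the remaining steps are direct applications of results already established in the paper.
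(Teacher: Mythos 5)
Your proof is correct, and its overall architecture coincides with the paper's: a lower bound on the entropy in terms of the error probability combined with sphere packing (Part 1), the ensemble version of the generalized Fano inequality \eqref{gf1} combined with Gallager's random-coding bound (Part 2), and a monotone-intersection analysis of \eqref{Eq. for R_alpha} (Part 3). The one genuine difference is in Part 1: the paper invokes Theorem~\ref{theorem: LB on the conditional Renyi entropy} through its specialization \eqref{eq: Barcelona}, namely $\frac{\alpha}{1-\alpha}\log\bigl(1+(2^{1/\alpha}-2)\,\varepsilon_{X|Y}\bigr)\leq H_{\alpha}(X|Y)$, which requires the caveat $\varepsilon_{X|Y}\leq\tfrac12$ (justified there only for large $n$); you instead use the chain $H_{\alpha}\geq H_{\infty}=\log\frac{1}{1-\varepsilon_{X^n|Y^n}}\geq \varepsilon_{X^n|Y^n}$, i.e., essentially \eqref{varepsilon_X|Y+}, which is simpler, needs no restriction on the error probability, and still yields a linear-in-$\varepsilon$ lower bound, hence the same exponent; this is a clean simplification. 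In Part 2 the paper is slightly more careful than you are: it substitutes $\overbar{\varepsilon}_n=\min\bigl\{\exp\bigl(-nE_{\mathrm{r}}(R,P_X)\bigr),\,1-\exp(-nR)\bigr\}$ into \eqref{gf1}, since the monotonicity of $t\mapsto \log M - d_{\alpha}\bigl(t\|1-\tfrac1M\bigr)$ that legitimizes replacing the true ensemble error probability by an upper bound holds only on $\bigl[0,1-\tfrac1M\bigr]$; your direct substitution of $\exp\bigl(-nE_{\mathrm{r}}(R,P_X)\bigr)$ is valid only once $n$ is large enough that this quantity falls below $1-\tfrac1M$, which is enough for the $\liminf$ in \eqref{sv2} but should be stated explicitly. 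Your Part 3 is the paper's argument in different clothing (the paper works with $r\mapsto E_{\mathrm{r}}(r,P_X)/r$ against the level $\tfrac1\alpha-1$, you work with $G(r,\alpha)=E_{\mathrm{r}}(r,P_X)-(\tfrac1\alpha-1)r$), and both are complete.
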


\begin{proof}
In proving the first two items we actually give an asymptotic lower bound and a non-asymptotic upper bound,
which yield the respective results upon taking $\underset{n \to \infty}{\lim\sup} -\frac1n \log ( \cdot )$
and $\underset{n \to \infty}{\lim\inf} -\frac1n \log ( \cdot )$, respectively.
\begin{enumerate}
\item
From \eqref{eq: Barcelona} and the sphere-packing lower bound \cite[Theorem~2]{SGB} on the decoding
error probability, the following inequality holds for all rates $R \in [0, C)$ and $\alpha \in (0,1) \cup (1, \infty)$,
\begin{align} \label{eq: LB optimal code}
\expectation_{\set{C}}\bigl[H_{\alpha}(X^n | Y^n)\bigr] \geq
\frac{\alpha}{1 - \alpha} \;
\log \Bigl( 1 + \bigl( 2^\frac1{\alpha} -2 \bigr) \exp\bigl(-n [E_{\text{sp}}(R-o_1(n)) + o_2(n)] \bigr) \Bigr)
\end{align}
where $o_1(n)$ and $o_2(n)$ tend to zero as $n \to \infty$. Note that $\eqref{eq: Barcelona}$ holds when the
error probability of the optimal code lies in $[0, \tfrac12]$, but this certainly holds for all $R<C$ and
sufficiently large $n$ (since this error probability decays exponentially to zero).

\item
Note that $E_{\mathrm{r}}(R,P_X) > 0$ since, for $\alpha \in (0,1)$,
\begin{align} \label{20170911-1}
R < R_{\alpha}(P_X, P_{Y|X}) < I(P_X, P_{Y|X}),
\end{align}
where the left inequality in \eqref{20170911-1} is the condition in \eqref{eq: R_alpha}, and
the right inequality in \eqref{20170911-1} holds by the definition of $R_{\alpha}(P_X, P_{Y|X})$
in \eqref{Eq. for R_alpha} (see its equivalent form in \eqref{eq2: R_alpha}).
In view of the random coding bound \cite{Gallager_IT1965}, let
\begin{align}
\overbar{\varepsilon}_n = \overbar{\varepsilon}_n(R,P_X)
\triangleq \min \bigl\{ \exp\bigl(-n E_{\mathrm{r}}(R,P_X)\bigr), \, 1-\exp(-nR) \bigr\}
\label{eq: eps_n}
\end{align}
be an upper bound on the decoding error probability of the code ensemble $\mathcal{C}$, which
does not exceed $1-\frac1M$ with $M=\exp(nR)$. From \eqref{gf1}, for all $\alpha \in (0,1)$,
\begin{align}
& \expectation_{\set{C}}\bigl[H_{\alpha}(X^n | Y^n)\bigr] \nonumber \\[0.1cm]
\label{random coding 0}
&\leq nR - d_{\alpha}\bigl( \overbar{\varepsilon}_n \,  \| \, 1 - \exp(-nR) \bigr) \\[0.1cm]
&= \tfrac1{1-\alpha} \, \log \Bigl( (1-\overbar{\varepsilon}_n)^{\alpha} +
\bigl(\exp(nR)-1\bigr)^{1-\alpha} \, \overbar{\varepsilon}_n^{\, \alpha} \Bigr)
\label{20170911-2} \\[0.1cm]
&\leq \tfrac1{1-\alpha} \, \log \Bigl( (1-\overbar{\varepsilon}_n)^{\alpha} +
\exp\bigl(n(1-\alpha)R\bigr) \, \overbar{\varepsilon}_n^{\, \alpha} \Bigr) \\[0.1cm]
&\leq \tfrac1{1-\alpha} \, \log \biggl( \Bigl(1 - \exp\bigl(-n E_{\mathrm{r}}(R,P_X)\bigr)
\Bigr)^{\alpha} + \exp\Bigl(-n \bigl[ \alpha E_{\mathrm{r}}(R,P_X) - (1-\alpha)R \bigr] \Bigr) \biggr),
\label{random coding 1}
\end{align}
where equality \eqref{20170911-2} follows from \eqref{useful equality} by setting
$(\theta, s, t) = (\exp(nR), \, \exp(nR)-1, \, \overbar{\varepsilon}_n)$;
the inequality in \eqref{random coding 1} holds with equality for all sufficiently large $n$ such that
$\overbar{\varepsilon}_n = \exp\bigl(-n E_{\mathrm{r}}(R,P_X)\bigr)$ (see \eqref{eq: eps_n}). The
result in \eqref{sv2} follows consequently by taking
$\underset{n \to \infty}{\lim\inf} -\frac1n \log ( \cdot )$ of
\eqref{random coding 0}--\eqref{random coding 1}. To that end, note that the second exponent in
the right side of \eqref{random coding 1} decays to zero whenever
\begin{align}
R &<R_{\alpha}(P_X, P_{Y|X}) \\[0.1cm]
\label{eq2: R_alpha}
&= \sup \Bigl\{ r \in [0, I (P_X, P_{Y|X})): E_{\mathrm{r}}(r, P_X) > \left( \tfrac1\alpha - 1 \right) r \Bigr\}.
\end{align}
It can be verified that the representation of $R_{\alpha}(P_X, P_{Y|X})$ in \eqref{eq2: R_alpha} is indeed
equivalent to the way it is defined in \eqref{Eq. for R_alpha} because
$E_{\mathrm{r}}(r, P_X)$ is monotonically decreasing and continuous in $r$ (see \cite[(5.6.32)]{Gallager_book1968}),
the right side of \eqref{Eq. for R_alpha} is strictly monotonically increasing and continuous in $r$ for $\alpha \in (0,1)$,
and the left and right sides of \eqref{Eq. for R_alpha} vanish, respectively, at $r = I (P_X, P_{Y|X})$
and $r=0$. The equivalent representation of $R_{\alpha}(P_X, P_{Y|X})$ in \eqref{eq2: R_alpha} yields the
uniqueness of the solution $r = R_{\alpha}(P_X, P_{Y|X})$ of \eqref{Eq. for R_alpha} in the interval $(0, I (P_X, P_{Y|X}))$
for all $\alpha \in (0,1)$, and it justifies in particular the right inequality in \eqref{20170911-1}.

\item
From  \eqref{Eq. for R_alpha}, for $\alpha \in (0,1)$, $r = R_{\alpha}(P_X, P_{Y|X})$ is
the unique solution $r \in (0, I (P_X, P_{Y|X}))$ to
\begin{align}  \label{Eq2. for R_alpha}
\tfrac1r \, E_{\mathrm{r}}(r, P_X) = \tfrac1\alpha - 1.
\end{align}
Since $E_{\mathrm{r}}(r, P_X)$ is positive and monotonically decreasing in $r$ on the interval $(0, I (P_X, P_{Y|X}))$,
so is the left side of \eqref{Eq2. for R_alpha} as a function of $r$.
Since also the right side of \eqref{Eq2. for R_alpha} is positive and monotonically decreasing in $\alpha \in (0,1)$,
it follows from \eqref{Eq2. for R_alpha} that $r = R_{\alpha}(P_X, P_{Y|X})$ is monotonically
increasing in $\alpha$. The continuity of $R_{\alpha}(P_X, P_{Y|X})$ in $\alpha$ on the interval $(0,1)$
holds due to \eqref{Eq2. for R_alpha} and the continuity of $E_{\mathrm{r}}(\cdot, P_X)$ (see \cite[p.~143]{Gallager_book1968}).
The limits in \eqref{eq2: limit R_alpha} and \eqref{eq1: limit R_alpha} follow from \eqref{Eq. for R_alpha}:
\eqref{eq2: limit R_alpha} holds since we have
$\underset{r \downarrow 0}{\lim} \, E_{\mathrm{r}}(r, P_X) = E_{\mathrm{r}}(0, P_X) \in (0, \infty)$, whereas
$\underset{\alpha \downarrow 0}{\lim} \left(\frac1\alpha-1\right) = +\infty$;
\eqref{eq1: limit R_alpha} holds since $E_{\mathrm{r}}(R,P_X)$ is equal to zero at
$R=I (P_X, P_{Y|X})$, and it is positive if $R<I (P_X, P_{Y|X})$ (see \cite[p.~142]{Gallager_book1968}).
\end{enumerate}
\end{proof}

\begin{remark}
For $\alpha=1$, Theorem~\ref{theorem: random coding} strengthens the result in
Theorem~\ref{theorem: fm} by also giving the converse in Item~1).
This result enables to conclude that if $P_X$ is the input distribution which
maximizes the random-coding error exponent
\begin{align}
E_{\mathrm{r}}(R) &= \max_{Q} E_{\mathrm{r}}(R,Q), \label{max over P_X}
\end{align}
then
\begin{align}
\lim_{n \to \infty} -\frac1n \, \log
\expectation_{\set{C}} \bigl[H(X^n | Y^n)\bigr] & = E_{\mathrm{r}}(R)
\end{align}
at all rates between the critical rate $R_{\mathrm{c}}$ and channel capacity $C$ of the DMC,
since for all $R \in [R_{\mathrm{c}}, C]$ \cite[Section 5.8]{Gallager_book1968}
\begin{align} \label{Rc to C}
E_{\mathrm{r}}(R) = E_{\text{sp}}(R)
\end{align}
where $E_{\text{sp}}(\cdot)$ is given in \eqref{espq}.
\end{remark}

\begin{remark}
With respect to \eqref{Rc to C}, recall that although the random-coding error exponent is tight
for the average code \cite{Gallager_IT1973}, it coincides with the sphere-packing error exponent
(for the optimal code) at the high-rate region between the critical rate and channel capacity.
As shown by Gallager in \cite{Gallager_IT1973}, the fact that the random coding error
exponent is not tight for optimal codes at rates below the critical rate of the DMC stems from the
poor performance of the bad codes in this ensemble rather than a weakness of the bounding technique
in \cite{Gallager_IT1965}.
\end{remark}

\begin{remark}
The result in Theorem~\ref{theorem: random coding} can be extended to list decoding
with a fixed size $L$ by relying on \eqref{gf2}, and the upper bound on the list
decoding error probability in \cite[Problem~5.20]{Gallager_book1968} where
\begin{align}
\overbar{P_{\set{L}}} = \min\left\{ \exp\bigl(-n E_L(R,P_X)\bigr), 1 - \frac{L}{M} \right\}
\end{align}
with $M = \exp(nR)$, and
\begin{align}\label{elrq}
E_L ( R, P_X ) = \max_{\rho\in[0,L]} \rho\, \left( I_{\frac{1}{1 +\rho}} (P_X, P_{Y|X}) -  R \right).
\end{align}
This result can be further generalized to structured code ensembles by relying
on upper bounds on the list decoding error probability in \cite[Section~5]{HofSS10}.
\end{remark}

The following result is obtained by applying Theorem~\ref{theorem: random coding} in the
setup of communication over a memoryless binary-input output-symmetric channel with a symmetric
input distribution. The following rates, specialized to the considered setup, are required for
the presentation of our next result.
\begin{enumerate}[1)]
\item
The cutoff rate is given by
\begin{align} \label{rate1a}
\RZ &= E_0(1, P_X^{\ast}) \\
\label{rate1b}
&= 1 - \log\bigl(1+\rho_{Y|X}\bigr) ~ \mbox{bits}
\end{align}
where in the right side of \eqref{rate1a}, $E_0(\rho, P_X^{\ast})$ is given in \eqref{eq0: E0_b} with
the symmetric binary input distribution $P_X^{\ast} = \bigl[\tfrac12 ~ \tfrac12\bigr]$,
and $\rho_{Y|X} \in [0,1]$ in the right side of \eqref{rate1b} denotes the Bhattacharyya constant given by
\begin{align} \label{Bhat}
\rho_{Y|X} = \sum_{y \in \set{Y}} \sqrt{ P_{Y|X}(y|0) \, P_{Y|X}(y|1) }
\end{align}
with an integral replacing the sum in the right side of \eqref{Bhat} if $\set{Y}$ is a non-discrete set.
\item
The critical rate is given by
\begin{align} \label{rate2}
R_{\mathrm{c}} = E_0'(1, P_X^{\ast})
\end{align}
where the differentiation of $E_0$ in the right side of \eqref{rate2} is with respect to $\rho$ at $\rho=1$.
\item
The channel capacity is given by
\begin{align}  \label{rate3}
C = I(P_X^{\ast}, P_{Y|X}).
\end{align}
\end{enumerate}
In general, the random coding error exponent is given by (see \cite[Section~5.8]{Gallager_book1968})
\begin{align} \label{eq: MBIOS}
E_{\mathrm{r}}(R) =
\begin{dcases}
\RZ - R, \quad & 0 \leq R \leq R_{\mathrm{c}} \\
E_{\mathrm{sp}}(R), \quad & R_{\mathrm{c}} \leq R \leq C,
\end{dcases}
\end{align}
and
\begin{align} \label{ranking}
0 \leq R_{\mathrm{c}} \leq \RZ \leq C.
\end{align}
Consequently, $E_{\mathrm{r}}(R)$ in \eqref{eq: MBIOS} is composed of two parts: a straight line at
rates $R \in [0, R_{\mathrm{c}}]$, starting from $\RZ$ at zero rate with slope $-1$, and
the sphere-packing error exponent at rates $R \in [R_{\mathrm{c}}, C]$.

\begin{theorem} \label{theorem: Ralpha - MBIOS}
Let $P_{Y|X}$ be the transition probability matrix of a memoryless binary-input
output-symmetric channel, and let $P_X^{\ast} = \bigl[\tfrac12 ~ \tfrac12\bigr]$.
Let $R_{\mathrm{c}}$,
$\RZ$, and $C$ denote the critical and cutoff rates and the channel capacity,
respectively, and let\footnote{In general $\alpha_{\mathrm{c}} \in [0,1]$
(see \eqref{ranking}); the cases $\alpha_{\mathrm{c}}=0$ and $\alpha_{\mathrm{c}}=1$
imply that Theorem~\ref{theorem: Ralpha - MBIOS}-\ref{case1 - Ralpha}) or~\ref{case2 - Ralpha})
are not applicable, respectively.}
\begin{align} \label{critical-over-cutoff}
\alpha_{\mathrm{c}} = \frac{R_{\mathrm{c}}}{\RZ} \in (0,1).
\end{align}
The rate $R_{\alpha} = R_{\alpha}(P_X^{\ast}, P_{Y|X})$, as introduced in Theorem~\ref{theorem: random coding}--2)
with the symmetric input distribution $P_X^{\ast}$, can be expressed as follows:
\begin{enumerate}[a)]
\item \label{case1 - Ralpha}
for $\alpha \in (0, \alpha_{\mathrm{c}}]$,
\begin{align} \label{eq1: R_alpha - MBIOS}
R_{\alpha} = \alpha \RZ;
\end{align}
\item \label{case2 - Ralpha}
for $\alpha \in (\alpha_{\mathrm{c}}, 1)$,
$R_{\alpha} \in (R_{\mathrm{c}}, C)$ is the solution to
\begin{align}  \label{eq2: R_alpha - MBIOS}
E_{\mathrm{sp}}(r) = \left( \frac1\alpha - 1 \right) r;
\end{align}
$R_{\alpha}$ is continuous and monotonically increasing
in $\alpha \in [\alpha_{\mathrm{c}}, 1)$ from $R_{\mathrm{c}}$ to $C$.
\end{enumerate}
\end{theorem}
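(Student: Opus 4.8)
The plan is to read off $R_\alpha$ from the explicit piecewise description of the random-coding exponent in \eqref{eq: MBIOS}, using the defining equation \eqref{Eq. for R_alpha} specialized to $P_X = P_X^{\ast}$. Since the channel is binary-input output-symmetric and $P_X^{\ast}$ is the symmetric input, $P_X^{\ast}$ achieves capacity, so $I(P_X^{\ast}, P_{Y|X}) = C$ (cf.\ \eqref{rate3}) and $E_{\mathrm{r}}(r, P_X^{\ast}) = E_{\mathrm{r}}(r)$ is given by \eqref{eq: MBIOS}. By Theorem~\ref{theorem: random coding}--2), $R_\alpha$ is the unique $r \in (0, C)$ solving $E_{\mathrm{r}}(r) = \left(\tfrac1\alpha - 1\right) r$; because $E_{\mathrm{r}}(\cdot)$ is continuous and decreasing while $r \mapsto \left(\tfrac1\alpha - 1\right) r$ is increasing through the origin for $\alpha \in (0,1)$, this intersection is unique, and it suffices to locate it within the two pieces of \eqref{eq: MBIOS}.

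For part~\ref{case1 - Ralpha}), I would work on the linear branch $r \in [0, R_{\mathrm{c}}]$, where $E_{\mathrm{r}}(r) = \RZ - r$. Substituting into the defining equation gives $\RZ - r = \left(\tfrac1\alpha - 1\right) r$, hence $r = \alpha \RZ$. The remaining point is to check that this candidate lies on the linear branch: $\alpha \RZ \le R_{\mathrm{c}}$ holds precisely when $\alpha \le R_{\mathrm{c}}/\RZ = \alpha_{\mathrm{c}}$ (recall \eqref{critical-over-cutoff}). Thus for every $\alpha \in (0, \alpha_{\mathrm{c}}]$ the value $\alpha \RZ$ is a genuine solution on $[0, R_{\mathrm{c}}]$, and by uniqueness $R_\alpha = \alpha \RZ$, which is \eqref{eq1: R_alpha - MBIOS}. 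At the endpoint $\alpha = \alpha_{\mathrm{c}}$ this yields $R_{\alpha_{\mathrm{c}}} = \alpha_{\mathrm{c}} \RZ = R_{\mathrm{c}}$, matching the junction of the two branches.

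For part~\ref{case2 - Ralpha}), I would argue that for $\alpha \in (\alpha_{\mathrm{c}}, 1)$ the solution must fall on the sphere-packing branch. Indeed, the linear-branch candidate $\alpha \RZ$ now exceeds $\alpha_{\mathrm{c}} \RZ = R_{\mathrm{c}}$, so it leaves the region where $E_{\mathrm{r}}(r) = \RZ - r$; by uniqueness the actual crossing lies in $(R_{\mathrm{c}}, C)$, where $E_{\mathrm{r}}(r) = E_{\text{sp}}(r)$. Consequently $R_\alpha$ solves $E_{\text{sp}}(r) = \left(\tfrac1\alpha - 1\right) r$, which is \eqref{eq2: R_alpha - MBIOS}; the strict upper bound $R_\alpha < C$ follows because at $r = C$ one has $E_{\text{sp}}(C) = 0$ while $\left(\tfrac1\alpha - 1\right) C > 0$. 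The claimed continuity and strict monotonicity of $R_\alpha$ in $\alpha$, together with the limit $R_\alpha \to C$ as $\alpha \uparrow 1$, are inherited from Theorem~\ref{theorem: random coding}--3) (see \eqref{eq1: limit R_alpha}); combined with the boundary value $R_{\alpha_{\mathrm{c}}} = R_{\mathrm{c}}$ obtained above, this shows $R_\alpha$ increases continuously from $R_{\mathrm{c}}$ to $C$ over $[\alpha_{\mathrm{c}}, 1)$.

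The only real obstacle is bookkeeping: one must confirm that the candidate solution lands on the correct piece of the piecewise exponent \eqref{eq: MBIOS}, and this reduces entirely to the comparison of $\alpha \RZ$ against $R_{\mathrm{c}}$, i.e.\ to the definition of $\alpha_{\mathrm{c}}$. No new analytic input beyond Theorem~\ref{theorem: random coding} and the structure \eqref{eq: MBIOS} is needed; the care lies in handling the junction $\alpha = \alpha_{\mathrm{c}}$ consistently between the two cases.
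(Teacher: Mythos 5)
Your proposal is correct and follows essentially the same route as the paper: invoke the optimality of $P_X^{\ast}$ for the symmetric channel so that $E_{\mathrm{r}}(\cdot,P_X^{\ast})$ has the piecewise form \eqref{eq: MBIOS}, solve the defining equation \eqref{Eq. for R_alpha} on the linear branch to get $r=\alpha \RZ$ with the branch condition $\alpha\le\alpha_{\mathrm{c}}$, place the solution on the sphere-packing branch otherwise, and inherit continuity, monotonicity, and the limit at $\alpha\uparrow 1$ from Theorem~\ref{theorem: random coding}--3). Your write-up simply makes explicit the bookkeeping that the paper compresses into ``readily follow from \eqref{Eq. for R_alpha}, \eqref{Rc to C}, \eqref{eq: MBIOS} and \eqref{critical-over-cutoff}.''
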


\begin{proof}
For every memoryless binary-input output-symmetric channel, the symmetric input distribution
$P_X^{\ast}$ achieves the maximum of the error exponent $E_{\mathrm{r}}(R)$ in the
right side of \eqref{max over P_X}. Consequently, \eqref{eq1: R_alpha - MBIOS} and
\eqref{eq2: R_alpha - MBIOS} readily follow from \eqref{Eq. for R_alpha}, \eqref{Rc to C},
\eqref{eq: MBIOS} and \eqref{critical-over-cutoff}. Due to Theorem~\ref{theorem: random coding}-3),
$R_{\alpha}$ is monotonically increasing and continuous in $\alpha \in (0,1)$. From
\eqref{critical-over-cutoff} and \eqref{eq1: R_alpha - MBIOS}
\begin{align}
R_{\alpha_{\mathrm{c}}} = R_{\mathrm{c}},
\end{align}
and \eqref{eq1: limit R_alpha} and \eqref{rate3} yield
\begin{align}
\lim_{\alpha \uparrow 1} R_{\alpha} &= I (P_X^{\ast}, P_{Y|X}) = C.
\end{align}
Consequently, $R_{\alpha}$ is monotonically increasing in $\alpha \in [\alpha_{\mathrm{c}}, 1)$ from $R_{\mathrm{c}}$ to $C$.
\end{proof}

\begin{example} \label{example: bsc}
Let $P_{Y|X} (0|0) = P_{Y|X} (1|1) = 1 -\delta$, $P_{Y|X} (0|1) = P_{Y|X} (1|0) = \delta$,
and $P_X(0)=P_X(1)=\tfrac12$. In this case, it is convenient to express all rates in bits;
in particular, it follows from \eqref{rate1a}--\eqref{rate3} that the
cutoff rate, critical rate and channel capacity are given, respectively, by (see, e.g.,
\cite[p.~146]{Gallager_book1968})
\begin{align}
\label{bsc R0}
\RZ &= 1 - \log\bigl(1+\sqrt{4\delta(1- \delta)}\bigr), \\[0.1cm]
\label{eq: bsc Rc}
R_{\mathrm{c}} &= 1 - h\left(\frac{\sqrt{\delta}}{\sqrt{\delta}+\sqrt{1- \delta}}\right), \\[0.1cm]
\label{bcs capacity}
C &=I (P_X, P_{Y|X}) = 1 - h(\delta).
\end{align}
The sphere-packing error exponent for the binary symmetric channel is given by (see, e.g.,
\cite[(5.8.26)--(5.8.27)]{Gallager_book1968})
\begin{align} \label{eq: BSC}
E_{\mathrm{sp}}(R) = d\bigl(\delta_{\mathrm{GV}}(R) \, \| \, \delta\bigr)
\end{align}
where the normalized Gilbert-Varshamov distance (see, e.g., \cite[Theorem~4.10]{Roth}) is denoted by
\begin{align} \label{gv}
\delta_{\mathrm{GV}}(R) = h^{-1}(1 - R)
\end{align}
with $h^{-1} \colon [0, \log 2] \to \bigl[0, \tfrac12\bigr]$ standing for the inverse of the binary entropy function.

In view of Theorem~\ref{theorem: Ralpha - MBIOS}, Figure \ref{figure: R_alpha} shows
$R_{\alpha}$ with $\alpha \in (0,1)$ for the case where the binary symmetric channel
has capacity $\frac12$ bit per channel use, namely, $\delta=h^{-1}\bigl(\tfrac12 \bigr)=0.110$,
in which case $\alpha_{\mathrm{c}} = 0.5791$ and $R_{\alpha_{\mathrm{c}}} = R_{\mathrm{c}} = 0.1731$
bits per channel use (see \eqref{eq: bsc Rc}).
\begin{figure}[h]
\centerline{\includegraphics[width=10cm]{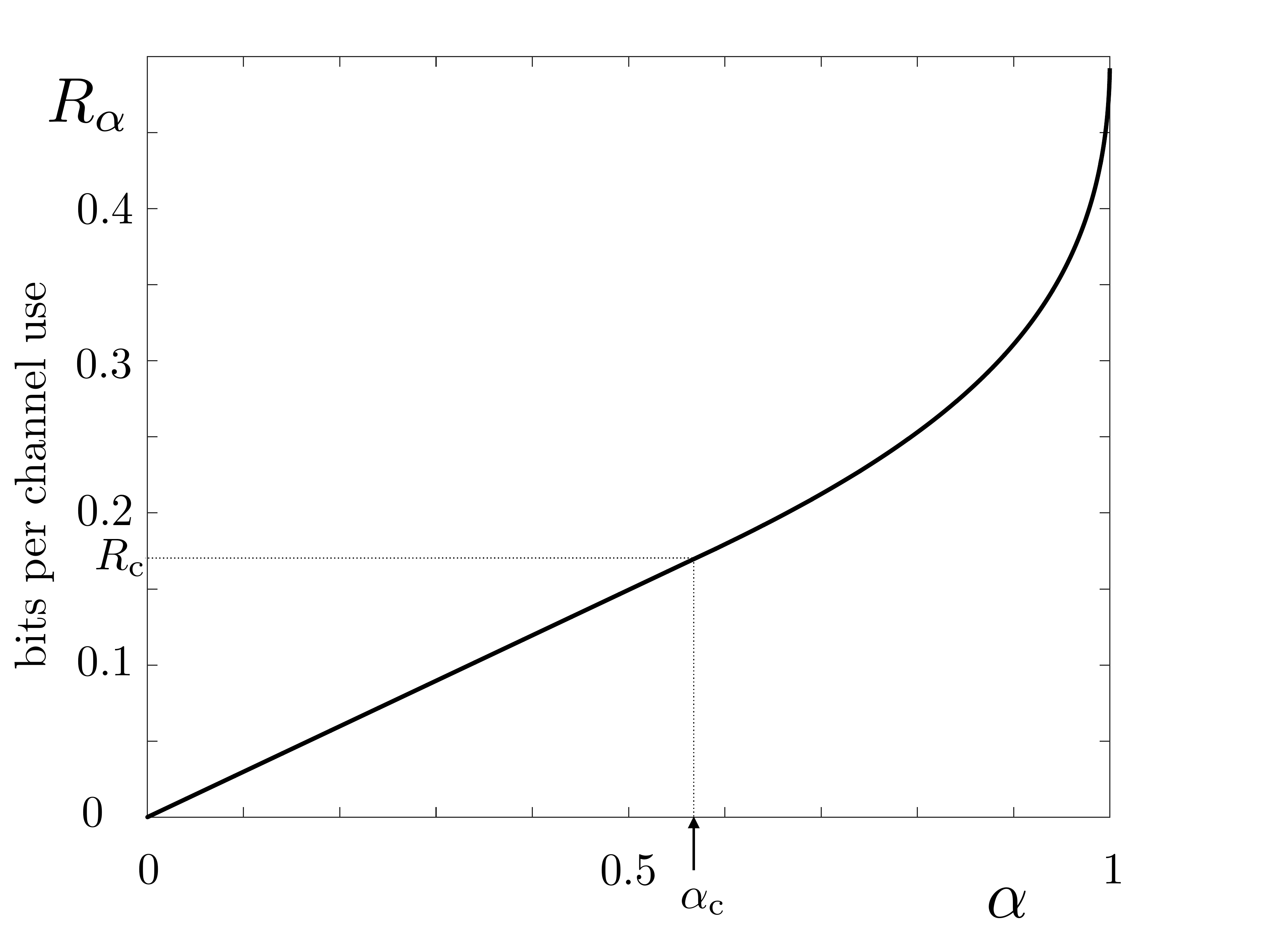}}
\caption{\label{figure: R_alpha}
The rate $R_{\alpha}$ for $\alpha \in (0,1)$ for a binary symmetric channel
with crossover probability $\delta=0.110$.}
\end{figure}
\end{example}

\section{Conclusions}
\label{sec: conclusions}
The interplay between information-theoretic measures and the analysis of hypothesis
testing has been a fruitful area of research. Along these lines, we have shown new
bounds on the minimum Bayesian error probability $\varepsilon_{X|Y}$ of arbitrary
$M$-ary hypothesis testing problems as a function of information measures. In particular,
our major focus has been the Arimoto-R\'enyi conditional entropy of the hypothesis
index given the observation. We have seen how changing the conventional form of Fano's
inequality from
\begin{align}
H(X | Y)
& \leq h(\varepsilon_{X|Y}) + \varepsilon_{X|Y} \log(M-1) \\
& = \log M - d\bigl( \varepsilon_{X|Y} \| 1-\tfrac1M \bigr)\label{alt@fano}
\end{align}
to the right side of \eqref{alt@fano}, where $d(\cdot\|\cdot)$ is the binary relative entropy,
allows a natural generalization where the
Arimoto-R\'enyi conditional entropy of an arbitrary positive order $\alpha$ is upper bounded by
\begin{align} \label{eq1: generalized Fano!alt}
H_{\alpha}(X | Y) \leq \log M -
d_{\alpha}\bigl( \varepsilon_{X|Y} \| 1-\tfrac1M \bigr)
\end{align}
with $d_{\alpha}(\cdot\|\cdot)$ denoting the binary R\'enyi divergence.
\par
Likewise, thanks to the Schur-concavity of R\'enyi entropy, we obtain a lower bound
on $H_{\alpha}(X | Y)$ in terms of $\varepsilon_{X|Y}$, which holds even if $M = \infty$. Again, we are able to recover
existing bounds by letting $\alpha \to 1$.
\par
In addition to the aforementioned bounds on $H_{\alpha}(X | Y)$ as a function of $ \varepsilon_{X|Y}$, it is also of
interest to give explicit lower and upper bounds on $ \varepsilon_{X|Y}$ as a function of $H_{\alpha}(X | Y)$. As
$\alpha \to \infty$, these bounds converge to $ \varepsilon_{X|Y}$.
\par
The list decoding setting, in which the hypothesis tester is allowed to output a subset of given cardinality and an error
occurs if the true hypothesis is not in the list, has considerable interest in information theory. We have shown that
our techniques readily generalize to that setting and we have found generalizations of all the $H_{\alpha}(X | Y)$--$\varepsilon_{X|Y}$
bounds to the list decoding setting.
\par
We have also explored some facets of the role of binary hypothesis testing in analyzing $M$-ary hypothesis testing problems,
and have shown new bounds in terms of R\'enyi divergence.
\par
As an illustration of the application of the $H_{\alpha}(X | Y)$--$\varepsilon_{X|Y}$ bounds, we have analyzed
the exponentially vanishing decay of the Arimoto-R\'enyi conditional entropy of the transmitted
codeword given the channel output for discrete memoryless channels and random coding ensembles.

\appendices
\section{Proof of Proposition~\ref{prop1: mon}} \label{app: mon}
We prove that $H_{\alpha}(X|Y) \leq H_{\beta}(X|Y)$ in three different cases:
\begin{itemize}
\item
$1 < \beta < \alpha $
\item
$0 < \beta < \alpha < 1$
\item
$\beta < \alpha < 0$.
\end{itemize}
Proposition~\ref{prop1: mon} then follows by transitivity, and the continuous extension of
$H_{\alpha}(X|Y)$ at $\alpha=0$ and $\alpha=1$. The following notation is handy.
\begin{align}
\label{eq: rho_2}
\theta &= \frac{\beta-1}{\alpha-1} >0,\\
\label{eq: rho_1}
\rho &=  \frac{\alpha \theta }{\beta} > 0.
\end{align}

{\em Case 1}: If $1 < \beta < \alpha $, then $(\rho, \theta) \in (0,1)^2$, and
\begin{align}
& \exp\bigl(-H_{\alpha}(X|Y)\bigr) \nonumber \\[0.1cm]
\label{eq: case1-a}
&= \mathbb{E}^{\frac{\beta \rho}{{\beta-1}}}
\left[ \left( \sum_{x \in \set{X}} P_{X|Y}^{\alpha}(x|Y) \right)^{\frac1{\alpha}} \right] \\[0.1cm]
\label{eq: case1-b}
&\geq\mathbb{E}^{\frac{\beta}{{\beta-1}}} \left[ \left( \sum_{x \in \set{X}}
P_{X|Y}^{\alpha}(x|Y) \right)^{\frac{\rho}{\alpha}} \right] \\[0.1cm]
\label{eq: case1-c}
&=\mathbb{E}^{\frac{\beta}{{\beta-1}}} \left[ \left( \sum_{x \in \set{X}}
P_{X|Y}(x|y) \, P_{X|Y}^{\alpha-1}(x|Y) \right)^{\frac{\theta}{\beta}} \right] \\[0.1cm]
\label{eq: case1-d}
&\geq\mathbb{E}^{\frac{\beta}{{\beta-1}}} \left[ \left( \sum_{x \in \set{X}}
P_{X|Y}(x|y) \, P_{X|Y}^{(\alpha-1)\theta}(x|Y) \right)^{\frac1{\beta}} \right] \\[0.1cm]
\label{eq: case1-e}
&=\mathbb{E}^{\frac{\beta}{{\beta-1}}} \left[ \left( \sum_{x \in \set{X}}
P_{X|Y}^{\beta}(x|Y) \right)^{\frac1{\beta}} \right] \\[0.1cm]
\label{eq: case1-f}
&= \exp\bigl(-H_{\beta}(X|Y)\bigr).
\end{align}
where the outer expectations are with respect to $Y \sim P_Y$; \eqref{eq: case1-a}
follows from \eqref{eq1: Arimoto - cond. RE} and the equality
$\frac{\beta \rho}{\beta-1} = \frac{\alpha}{\alpha-1}$  (see \eqref{eq: rho_2} and \eqref{eq: rho_1});
\eqref{eq: case1-b} follows from $\tfrac{\beta}{\beta-1}>0$, $\tfrac1{\beta}>0$,
Jensen's inequality and the concavity of $t^{\rho}$ on $[0, \infty)$;
\eqref{eq: case1-c} follows from \eqref{eq: rho_1} and \eqref{eq: rho_2};
\eqref{eq: case1-d} holds due to the concavity of $t^{\theta}$ on $[0, \infty)$
and Jensen's inequality;
\eqref{eq: case1-e} follows from \eqref{eq: rho_2};
\eqref{eq: case1-f} follows from \eqref{eq1: Arimoto - cond. RE}.

{\em Case 2}: If $0 < \beta < \alpha < 1$, then $(\rho, \theta) \in (1, \infty)^2$, and both
$t^{\rho}$ and $t^{\theta}$ are convex. However, \eqref{eq: case1-b} and \eqref{eq: case1-d}
continue to hold since now $\tfrac{\beta}{\beta-1} < 0$.

{\em Case 3}: If $\beta < \alpha < 0$, then $\rho \in (0,1)$ and $\theta \in (1, \infty)$.
Inequality \eqref{eq: case1-b} holds
since $t^\rho$ is concave on $[0, \infty)$ and $\tfrac{\beta}{\beta-1} > 0$;
furthermore, although $t^\theta$ is now convex on $[0, \infty)$, \eqref{eq: case1-d}
also holds since $\tfrac1{\beta}<0$.

\section{Proof of Proposition~\ref{prop: lb/ub}} \label{app: lb/ub}

{\em Proof of Proposition~\ref{prop: lb/ub}\ref{prop: lb/ub - part 1})}:
We need to show that the lower bound on $H_{\alpha}(X|Y)$ in
\eqref{eq: LB on the cond. RE} coincides with its upper bound in
\eqref{eq1: generalized Fano} if and only if
$\varepsilon_{X|Y}=0$ or $\varepsilon_{X|Y}=1-\frac1{M}$. This corresponds, respectively,
to the cases where $X$ is a deterministic function of the observation $Y$
or $X$ is equiprobable on the set $\set{X}$ and independent of $Y$.
In view of \eqref{useful equality}, \eqref{eq1: generalized Fano} and
\eqref{eq: LB on the cond. RE},
and by the use of the parameter $t = \varepsilon_{X|Y} \in \bigl[ 0, 1-\frac1{M} \bigr]$,
this claim can be verified by proving that
\begin{itemize}
\item
if $\alpha \in (0,1)$
and $k \in \{1, \ldots, M-1\}$, then
\begin{align}
\Bigl[ (1-t)^{\alpha} + (M-1)^{1-\alpha} t^{\alpha} \Bigr]^{\frac1{\alpha}}
> \left[k(k+1)^{\frac1\alpha} - k^{\frac1\alpha} (k+1)\right] t + k^{\frac1\alpha+1}-(k-1)(k+1)^{\frac1\alpha}
\label{allerton}
\end{align}
for all $t \in [1-\frac1k, 1-\frac1{k+1})$ with the point $t=0$ excluded if $k=1$;
\item the opposite inequality in \eqref{allerton} holds if $\alpha \in (1,\infty)$.
\end{itemize}
Let the function $v_{\alpha} \colon \bigl[0, 1-\frac1{M} \bigr] \to \Reals$ be defined such that
for all $k \in \{1, \ldots, M-1\}$
\begin{align} \label{v_alpha}
v_{\alpha}(t) = v_{\alpha, k}(t), \quad t \in \bigl[1-\tfrac1{k}, 1-\tfrac1{k+1} \bigr)
\end{align}
where $v_{\alpha, k} \colon \bigl[1-\tfrac1{k}, 1-\tfrac1{k+1} \bigr) \to \Reals$ is given by
\begin{align}
v_{\alpha,k}(t) = \Bigl[ (1-t)^{\alpha} + (M-1)^{1-\alpha} t^{\alpha} \Bigr]^{\frac1{\alpha}}
- \left[k(k+1)^{\frac1\alpha} - k^{\frac1\alpha} (k+1)\right] t - k^{\frac1\alpha+1}+(k-1)(k+1)^{\frac1\alpha}.
\label{v_alpha,k}
\end{align}
Note that $v_{\alpha,k}(t)$ is the difference between the left and right sides of \eqref{allerton}.
Moreover, let $v_{\alpha}(\cdot)$ be continuously extended at $t = 1-\frac1{M}$; it can be
verified that $v_{\alpha}(0) = v_{\alpha}\left(1-\frac1M\right) = 0$.
We need to prove that for all $\alpha \in (0,1)$
\begin{align} \label{v_ineq}
v_{\alpha}(t) > 0, \quad \forall \, t \in \bigl(0, 1-\tfrac1{M} \bigr),
\end{align}
together with the opposite inequality in \eqref{v_ineq} for $\alpha \in (1, \infty)$.
To that end, we prove that
\begin{enumerate}[a)]
\item \label{v properties-1}
for $\alpha \in (0,1) \cup (1, \infty)$, $v_{\alpha}(\cdot)$ is continuous at all points
\begin{align} \label{t_k}
t_k = \tfrac{k}{k+1}, \quad k \in \{0, \ldots, M-1\}.
\end{align}
To show continuity, note that the right continuity at $t_0=0$ and the left continuity at
$t_{M-1}= 1-\tfrac1M$ follow from \eqref{v_alpha}; for $k \in \{1, \ldots, M-2\}$,
the left continuity of $v_{\alpha}(\cdot)$ at $t_k$ is demonstrated by showing that
\begin{align}
\lim_{t \uparrow t_k} v_{\alpha,k}(t) = v_{\alpha,k+1}(t_k),
\end{align}
and its right continuity at $t_k$
is trivial from \eqref{v_alpha}, \eqref{v_alpha,k} and \eqref{t_k};
\item \label{v properties-2}
for all $k \in \{0, 1, \ldots, M-1\}$
\begin{align} \label{v}
v_{\alpha}(t_k) =
\frac{\Bigl[1+(M-1)^{1-\alpha} k^{\alpha} \Bigr]^{\frac1{\alpha}} - (k+1)^{\frac1{\alpha}}}{k+1},
\end{align}
which implies that $v_{\alpha}(\cdot)$ is zero at the endpoints of the interval
$\bigl[0, \frac{M-1}{M} \bigr]$, and if $M \geq 3$
\begin{align}
\label{v1}
& v_{\alpha}(t_k) > 0, \quad \alpha \in (0,1), \\
\label{v2}
& v_{\alpha}(t_k) < 0, \quad \alpha \in (1, \infty)
\end{align}
for all $k \in \{1, \ldots, M-2\}$;
\item \label{v properties-3}
the following convexity results hold:
\begin{itemize}
\item
for $\alpha \in (0,1)$, the function $v_{\alpha}(\cdot)$ is strictly concave
on $[t_k, t_{k+1}]$ for all $k \in \{0, \ldots, M-1\}$;
\item
for $\alpha \in (1, \infty)$, the function $v_{\alpha}(\cdot)$ is
strictly convex on $[t_k, t_{k+1}]$ for all $k \in \{0, \ldots, M-1\}$.
\end{itemize}
These properties hold since, due to the linearity in $t$ of
the right side of \eqref{allerton}, $v_{\alpha}(\cdot)$ is
convex or concave on $\bigl[t_k, t_{k+1}]$ for all $k \in \{0, 1, \ldots, M-1\}$
if and only if the left side of \eqref{allerton} is, respectively, a convex or concave function in $t$;
due to Lemma~\ref{lemma: convexity/ concavity}, the left side of \eqref{allerton} is
strictly concave as a function of $t$ if $\alpha \in (0,1)$ and it is strictly convex if $\alpha \in (1, \infty)$.
\end{enumerate}
Due to Items~\ref{v properties-1})--\ref{v properties-3}), \eqref{v_ineq} holds
for $\alpha \in (0,1)$ and the opposite inequality holds for $\alpha \in (1, \infty)$.
In order to prove \eqref{v_ineq} for $\alpha \in (0,1)$, note that
$ \left[0, 1-\tfrac1M\right] = \bigcup_{i=1}^{M-1} \left[t_{i-1}, t_i\right] $
where $v_{\alpha}(t_0) = v_{\alpha}(t_{M-1})= 0$ and $v_{\alpha}(t_k) > 0$ for every $k \in \{1, \ldots, M-2\}$,
$v_{\alpha}(\cdot)$ is continuous at all points $t_k$ for $k \in \{0, \ldots, M-1\}$,
and $v_{\alpha}$ is strictly concave on $[t_k, t_{k+1}]$ for all $k \in \{0, \ldots, M-2\}$, yielding
the positivity of $v_{\alpha}(\cdot)$ on the interval $(0, 1-\tfrac1M)$. The justification for
the opposite inequality of \eqref{v_ineq}
if $\alpha \in (1, \infty)$ is similar, yielding the negativity of $v_{\alpha}(\cdot)$ on
$(0, 1-\tfrac1M)$.

\par
{\em Proof of Proposition~\ref{prop: lb/ub}\ref{prop: lb/ub - part 2})}:
In view of \eqref{useful equality},
\eqref{eq1: generalized Fano} and \eqref{eq: Barcelona}, for all $\varepsilon_{X|Y} \in [0, \tfrac12)$
and $\alpha \in (0,1) \cup (1, \infty)$:
\begin{align}
& u_{\alpha,M}(\varepsilon_{X|Y})= \frac1{1-\alpha} \, \log \left( (1-\varepsilon_{X|Y})^\alpha
+ (M-1)^{1-\alpha} \varepsilon_{X|Y}^{\alpha} \right), \\[0.2cm]
& l_{\alpha}(\varepsilon_{X|Y})=
\frac{\alpha}{1-\alpha} \, \log \left( 1 + (2^{\frac1\alpha}-2) \varepsilon_{X|Y} \right). \label{l}
\end{align}
Equality \eqref{eq: limit u/l} follows by using L'H\^{o}pital's rule for the calculation of the limit
in the left side of \eqref{eq: limit u/l}.

\section*{Acknowledgment}
The authors gratefully acknowledge the detailed reading of the manuscript by the
Associate Editor and anonymous referees.

\end{document}